\newtheoremstyle{examplestyle}%
   {\topsep}%
   {1pt}%
   {\slshape}%
   {0pt}%
   {\bfseries}%
   {}%
   {1mm}%
   {}%
\theoremstyle{examplestyle}
\newtheorem{thm}{Theorem}[]
\newtheorem{prop}{Proposition}[]
\newtheorem{lemma}{Lemma}[]
\newtheorem{defi}{Definition}[]
\numberwithin{equation}{section} %
\newcommand{\vpNew}[1]{\textcolor{black}{#1}}
\begin{document}

\title{Time-Equitable Dynamic Tolling Scheme For Single Bottlenecks}

\author[1] {John W Helsel}
\author[2]{Venktesh Pandey \thanks{ Corresponding author: venktesh\textrm{@}utexas.edu}}
\author[2]{Stephen D Boyles}

\affil[1]{\small \vpNew{WSP USA, Sacramento, California, USA}}
\affil[2]{\small Department of Civil, Architectural and Environmental Engineering, The University of Texas at Austin, Austin, Texas, USA}

\markboth{Helsel et al. 2019}%
{Helsel \MakeLowercase{\textit{et al.}}: Time-equitable toll}

\date{ }
\maketitle

\begin{abstract}

Dynamic tolls present an opportunity for municipalities to eliminate congestion and fund infrastructure. \vpNew{Imposing tolls} that regulate travel along a public highway through monetary fees raise worries of inequity. In this article, we introduce the concept of time poverty, emphasize its value in policy-making in the same ways income poverty is already considered, and argue the potential equity concern posed by time-varying tolls that produce time poverty. We also compare the cost burdens of a no-toll, system optimal toll, and a proposed ``time-equitable" toll on heterogeneous traveler groups using an analytical Vickrey bottleneck model where travelers make departure time decisions to arrive at their destination at a fixed time. We show that the time-equitable toll is able to eliminate congestion while creating equitable travel patterns amongst traveler groups.

\vspace{3mm}
\noindent{ \textit{\textbf{Keywords}}: \vpNew{Equity}, Time poverty, Vickrey bottleneck model, Congestion pricing }
\end{abstract}

\newpage
\section{Introduction}
\label{sec:intro}
	Congestion pricing is a commonly used mechanism for relieving congestion on a roadway along with generating revenue for infrastructure projects. Peak-period congestion pricing, commonly implemented using dynamically priced toll lanes and cordon tolls, charges higher prices for travelers in the middle of the peak period.  \vpNew{The intent is to force travelers to}  more explicitly consider the cost of travel and may choose to shift the time of travel for trips that are less important. 
	
	Contemporary research on the dynamic pricing of toll roads has largely been concentrated on their ability to effectively relieve congestion, improve environmental pollution measures, and maximize profits for the road operator. For example, cordon tolls in Sweden have reduced congestion by 22\% and enjoy widespread public support~\citep{sorensen2014strategies}. Similarly, a survey of European congestion tolls saw 14\% and 23\% decreases in vehicle counts in Milan and Bologna; 30\% and 33\% decreases in delay times in London and Stockholm; 13-21\% reduction in CO$_2$ emissions in Stockholm and Rome; and a 14\% decline in collision rates in Milan~\citep{may2009curacao}. These effects are predictable and reproducible across a wide variety of contexts, because consumer elasticity towards tolls has been found to stably vary between $-0.2$ and $-0.8$ across a wide variety of research~\citep{wuestefeld1981impact, white1984man, ribas1988estudi, jones1992restraining, hirschman1995bridge, gifford1996demand, odeck2008travel}.

	\vpNew{Although dynamic tolls have succeeded in reducing congestion, they raise other worries including inequity.} Real world practitioners are keenly aware of the equity pitfalls congestion tolls present. \vpNew{One position is} that equity concerns are a purely public relations~\citep{wang2012tradable} or an education~\citep{giuliano1992assessment} problem. Commonly recommended solutions to increase support for toll roads and assuage worries over equity include allocating toll revenues to projects that visibly increase equity, using the revenues for subsidizing public transit, or evenly distributing revenues amongst travelers as a substitution for gasoline or other sales taxes~\citep{small1992using}.

	While some equity concerns can be addressed through public outreach, in this article we \vpNew{argue that inequity should be addressed more directly} in the design of dynamic tolls. Commonly used dynamic tolls are designed for system optimal efficiency. Such tolls minimize the total cost incurred \vpNew{in} the system by charging a dynamic toll that applies uniformly to all travelers. A uniform toll predicates access to roads at peak periods on one's ability to pay and decreases the likelihood that poorer travelers will choose to travel at peak periods. However, if nondiscriminatory access to public roads is a public good, then this observation is a \textit{prima facie}, though defeasible, case against system optimal tolls. In addition, research on time poverty by sociologists raises worries that there are significant harms to low-income travelers associated with being segregated to off peak travel over and above any concerns raised about the monetary costs of the tolls.
	
	Consider a small motivating example. Suppose there are two travelers, X and Y, who want to arrive at the same destination at the same time and must pass through the same bottleneck. Traveler X receives a penalty of $10$ money units for arriving late, while traveler Y receives a penalty of $9$ money units for the same. The operator at the bottleneck has to decide who goes through the bottleneck first; the traveler going first arrives on time and incurs no penalty, while the other traveler incurs a penalty. A system optimal solution considers travelers' absolute costs and selects an order than minimizes the total system cost: traveler X goes first, while traveler Y waits and pays a penalty of $9$ units. However, traveler X is rich and owns $100$ units of money, while traveler 2 is poor and owns $30$ units of money. The relative penalty for arriving late is 10\% of the current worth for traveler X, while it is 30\% for traveler Y. Under the system optimal solution traveler Y ends up significantly worse off as they end up losing 30\% of their worth. 
	
	This example, \vpNew{though not a precise analog to dynamic tolling}, carries the key essence of our argument: equity issues arise when a traveler's \vpNew{\textit{relative}} worth of time is ignored. A system optimal toll, that values travelers based on their absolute willingness to pay, may minimize \vpNew{cost} for the whole system, but it can also increase the equity gap across the population leaving the poor travelers worse off.  Considerations of equity show that the assumption about tolling based on absolute willingness to pay should be challenged. In particular, research into time poverty shows that poorer travelers already have to substitute their time for goods/service out of their price range. This article brings together a discussion on toll schemes, equity, and time poverty within transportation policy. Although the literature has discussed all three of these issues and even brought some of them together~\citep[see][]{arnott1987schedule,lindsey2004existence, litman1996using,akamatsu2017tradable}, there has been no sustained discussion of how transportation policies can exacerbate or ameliorate inequitable distributions of time poverty.

	In this article, we analyze the issue of time poverty for dynamic tolls imposed over a single bottleneck that a group of travelers use in order to arrive at their destination at a given time. First, we show that equity concerns in transportation are primarily rooted in a desire to respect all travelers equally and that time poverty ought to be considered in policy-making in the same ways that income poverty already is. We argue that tolling schemes that produce time-poverty among poorer travelers ought to be examined as a potential equity concern. Second, we extend the qualitative equity investigation to a particular implementation of a time-varying toll on single bottlenecks to examine whether it raises equity concerns. We also propose an alternative, which we call a \textit{time-equitable toll}, that eliminates congestion while creating equitable travel patterns amongst traveler groups \vpNew{by charging different rates for different groups}.

	The primary contributions of this article are two-fold: (a) we conduct an equity analysis of the system optimal dynamic toll highlighting the importance of equity for time as a resource, and (b) we develop a time-equitable tolling scheme and \vpNew{show how} it addresses these inequities.  %
	The rest of the article is organized as follows. Section~\ref{sec:equityPhilosophy} presents a review of equity in the context of transportation systems and introduces the idea of time poverty. Section~\ref{sec:vickreyPrelim} discusses some of the preliminary ideas from Vickrey's bottleneck model~\citep{vickrey1969congestion}. Section~\ref{sec:timeEquitableToll} defines the time-equitable tolls and proves its efficiency relative to a system-optimal toll. Section~\ref{sec:numerical} presents numerical results and Section~\ref{sec:conclusion} concludes the article and provides directions for future work.

\section{Equity and Time Poverty: Literature Review}
\label{sec:equityPhilosophy}
In this section, we present how equity is defined and discussed for toll roads in the literature. We also show the need for including time poverty in the discussions surrounding equity, which in the context of transportation systems means arriving to work on time and not being forced to leave early to avoid tolls.

\subsection{Equity concerns for toll roads}
Given the complexities of real world policies, their unpredictable interactions, and the way support for policies in one arena motivate our preferences in others, judgments about the equity impacts of a particular policy are often unclear and uncertain. %
\cite{litman2012evaluating} provides a simpler definition as follows:

\begin{defi}[Equitable policies~\citep{litman2012evaluating}]
An equitable policy is the one that fairly allocates its costs and benefits.
\end{defi}

This definition requires estimating costs and benefits across different social \vpNew{groups}, which is a difficult task. There are three diverse conceptions of equity in the transportation sector based on how the perceived cost and benefits are allocated~\citep{litman2012evaluating}.
\begin{enumerate}
	\item \textbf{Horizontal equity}: A policy exhibits horizontal equity when its costs and benefits are distributed strictly by cost or need. The goal is perfectly equal treatment; individuals subject to the policy should receive equal benefits, pay equal costs, and be treated in a procedurally identical manner. Example policies include the allocation of \vpNew{one vote to each person}, and movements to ensure that local property taxes \vpNew{are} spent on local projects.
	\item  \textbf{Vertical equity with regard to demography or income}: Vertically equitable policies distribute their costs and benefits with a sensitivity to the distribution of impacts between groups that differ with respect to income, social class, race, or some other identifiable distinction. Transport policies may be vertically equitable if they favor disadvantaged groups in a manner that compensates for structural inequalities in the larger society~\citep{rawls2009theory}. %
	For example, vertically equitable transit policies may call for higher levels of service to poorer neighborhoods in recognition of the fact that many residents lack alternative transport modes.
	\item \textbf{Vertical equity with regard to mobility need and ability}: A second variety of vertically equitable policies focuses on the different ways differing mobility needs and abilities change access to social goods. Policies under this aspect of equity include universal design such as curb cuts for wheelchairs, audible beacons for crosswalks, or wheelchair ramps on kneeling buses.
\end{enumerate}

	This \vpNew{raises the question of which conceptions of equity are best suited to address dynamic tolls}? Before \vpNew{proposing an} answer, we review  equity concerns for toll roads in the literature. 

\paragraph{Equity arguments against toll roads} Tolls roads have long been accused of increasing inequity in a number of ways. Toll roads charge a financial cost for accessing a facility and impose burden on travelers in three ways. First, and \vpNew{most obviously}, toll roads impose financial burden on poor travelers. \cite{harvey1991suitability} and \cite{giuliano1992assessment} note that tolls create winners and losers and do not lead to a Pareto improvement as workers in areas with poor transit are highly likely to pay bridge tolls at peak hours because they lack alternatives. Second, toll roads create environmental costs and congestion in poorer neighborhoods when travelers \vpNew{choose} to use local roads in order to avoid the toll. For example, \vpNew{New York City} Mayor Michael Bloomberg's 2007 plan to institute a congestion charge in lower Manhattan was opposed, as it was shown that the funds raised would largely come from lower-middle class families~\citep{taylor2010addressing}. Last, toll roads limit access to transportation facilities for poor travelers. Often dynamic tolling systems work best with cashless transactions, but poorer travelers are less likely than affluent travelers to have credit cards or bank accounts, which makes it both more difficult and more expensive to access facilities with electronic tolling systems~\citep{plotnick2009impacts}. Similarly, many poor travelers face constrained schedules that effectively force them to travel at peak periods. For instance, single mothers must pick up and drop their kids off in narrow windows, and \vpNew{therefore} avoid nonstandard working hours if at all possible~\citep{presser1995job}.

\paragraph{Equity arguments in favor of toll roads} \vpNew{Despite the concerns about the inequitable impact of toll roads, some researchers} have argued that tolls do not have a uniform impact on equity because their effect on the road network is highly sensitive to the specifics of the proposed plan, particularly its price structure, quality of alternatives, use of raised revenues, and whether driving is a luxury or a necessity, i.e. whether transit or routes that avoid the toll are available~\citep[see][]{litman1996using, litman2012evaluating, raje2003impacts, golub2010welfare, schweitzer2009empirical, santos2004distributional}. These authors concede that particular implementations of time-varying tolls may have regressive income effects, \vpNew{but argue that} tolls can also be fine-tuned to have progressive outcomes. For example, express lane facilities provide a non-tolled alternative and are claimed to be more equitable than standard toll roads by providing a congestion-free alternative for urgent trips for a moderate fees~\citep{fhwaEquity}.

Additionally, other researchers have argued that the equity implications of a time-varying toll are entirely dependent on the alternative methods available for funding transportation projects. Unfortunately, the commonly used alternative --- a sales tax --- is regressively inequitable in itself due to three reasons. First, user fees are regressive \vpNew{by nature}; poorer travelers have less money and so \$1 in tax to them represents more of their income than a \$1 tax would represent for wealthier travelers. Second, the poor are also the most likely to own older and less fuel-efficient vehicles. They spend more money (which is a larger percentage of their income) to go fewer miles, \vpNew{a disparity that} holds for all travel, not just for the most congested travel. Last, poorer travelers tend to live in the urban core and drive less frequently, so area-wide sales taxes tend to force them to subsidize infrastructure that they do not use. Thus, non-discriminating sales taxes pose problems for horizontal equity among travelers. 
	
	These arguments suggest that financing transportation projects using tolls as a  revenue stream is a viable option, and provided the toll implementation addresses the social inequalities, it can be an equitable alternative to using sales tax. %
	In fact, there are mechanisms to implement tolls that benefit the poor. A Florida study examining strategies to decrease vehicle miles traveled found that a tradeable credit scheme would benefit low-income households because they would be able to trade away their credits and so receive compensation for traveling less~\citep{mamun2016comparison}. A version of the tradeable credit scheme was also implemented in Minnesota with some success. The state toll road authority, MnPass, credited every account with a positive balance that could be used and supplemented with a user's funds. If a user chose not to travel on toll lanes, however, the funds could be applied towards registration fees, a cost savings that would benefit all drivers~\citep{taylor2010addressing}.

	In this article, we are most concerned with whether such a time-varying toll ought to be implemented in a horizontally equitable manner (that is, the only determinant of access is one's willingness and ability to pay the toll) or whether there are identifiable socioeconomic groups that, for a reason we will need to articulate, ought to receive a subsidy on their travel relative to the general public. While there are many interesting research questions into how we might devise a tolling scheme that addresses Litman’s third kind of equity or even racial inequities, we only focus on parameters of the Vickrey bottleneck that are most clearly applied to income and socio-economic status. %

\subsection{Justifying vertically-equitable tolls}

The toll policies proposed in the literature commonly assume, implicitly or explicitly, a particular theory of value, viz. ``system optimal efficiency" where the goal is to minimize \vpNew{total cost in} the system. This theory of value says that willingness to pay should be used as a way of sorting who has the right to travel at a given time. However, minimizing costs is not guaranteed to be most equitable in terms of fairly distributing the costs and benefits. (See the example in Section~\ref{sec:intro}.)  Another commonly-used objective for toll policies is revenue maximization; however, even revenue-maximizing tolls can have unintended consequences, creating differential impact on different groups. For example, maximizing revenue on express lane facilities promotes ``jam-and-harvest" \vpNew{strategies} where it is in the toll operator's best interest to increase congestion on the regular lanes in the early peak period to harvest more revenue towards the latter period~\citep{goccmen2015revenue, pandey2018dynamic}. Additionally, jam-and-harvest increases costs on poor travelers in favor of reducing costs for wealthier travelers. 

When we consider a potential policy, we can ask of it: ``Does this policy show equal respect to all citizens?" In cases where horizontal equity is the primary social good, we judge that the citizens are on an equal playing field, therefore we show them equal respect by treating everyone in the same manner. In cases where vertical equity is important, we judge that citizens are not on an equal playing field and so we show them equal respect by ensuring that the least advantaged receive compensatory benefits that allow them to access or compete for social and economic goods from which they would otherwise be shut out.\footnote{This article is an inappropriate forum to describe all of the debates that have taken place in the philosophical literature regarding equity and justice. We refer the reader to the work by~\cite{rawls2009theory} and \vpNew{Chapter} 2 of~\cite{helsel2017getting}.} 

Developing equitable tolling policies is challenging due to the difficulty in defining when \vpNew{citizens are} on equal playing fields. However, in cases where we \vpNew{do observe} systematic inequalities, such as the differential treatment of groups due to the system-optimal or revenue-maximizing tolls, a vertically-equitable toll is justified and ought to be considered. %

\subsection{\vpNew{Introducing time poverty analysis}}
\label{subsec:timePoverty}
Time poverty should be included in equity analysis because it has the same importance to travelers as unfairly distributed economic costs. First, low-income travelers experience poverty both with respect to money but also with respect to their time. Second, time poverty creates concerns that low-income travelers will be excluded from primary social goods. Third, the socially optimal dynamic tolling plan exacerbates these problems. Therefore, the dynamic tolling plan creates equity concerns and we should examine alternatives that mitigate this issue.

The concepts of time-poverty are not new. \cite{vickery1977time} argued that measurements of poverty based solely upon household income miss an important aspect of poverty, namely that the time to fulfill basic household needs increases as income decreases. One definition of time poverty is presented in~\cite{bardasi2006measuring}.

\begin{defi}~\cite[Time poverty,][]{bardasi2006measuring}
``Conceptually, time poverty can be understood as the fact that some individuals do not have enough time for rest and leisure after taking into account the time spent working, whether in the labor market, for domestic work, or for other activities such as commuting."
\end{defi}

Families that do not fall into the income definition of poverty can still experience time poverty. This is because even when extreme want is out of the equation, time poverty is a problem driven by the number of hours worked, when they are worked, and the physical intensity of that work~\citep{warren2003classand}. %
Higher paid workers can control when they work, where they work, and have more flexibility to take time off, which creates greater job satisfaction, lowers stress levels, and creates work-life balance~\citep{doyle2001time}. Moving up the socio-economic scale also allows people to ``buy back" their time by paying others to do domestic chores~\citep{gregson2005servicing, roberts1998work, stephens1999fight}.

 Although the literature covers a wide variety of perspectives, the discussion of time poverty has often been framed as a problem facing female managers as they struggle to balance long and demanding workdays with the traditional responsibilities of a homemaker~\citep{rutherford2001organizational}. But this focus on the upper-middle class and corporate executives misses the crucial experience of millions of low-income families. As families move out of monetary poverty, they can find themselves suddenly time poor. For instance, 50--60\% of unemployed single parents are below the monetary poverty threshold, but only 3\% are time poor~\citep{harvey2007twenty}. When single parents are employed only 26--31\% are monetarily poor, but 98\% are time poor. Only 5.3\% of single parents are neither time nor monetarily poor. Even a partner to share the load is not a panacea, since 20--30\% of employed two-parent families are still time poor (they work above the threshold of 11.5 hours per day per parent).

We argue that time poverty ought to be a central focus of our equity evaluation because time is one of the most precious goods humans possess. Intrinsically, free time provides an opportunity for rest, social interaction, leisure participation, and self-realization, which makes it an important non-monetary welfare resource~\citep{chatzitheochari2012class}. Philosophers, economists, and social theorists have consistently conceptualized free time as a primary good for individual well-being~\citep{marx1968karl,blackden2006gender}. %
This right is so fundamental that it is recognized in a number of international treaties, including the Universal Declaration of Human Rights~\citep{UN2020} and the International Covenant on Economic, Social and Cultural Rights~\citep{unhrICESC}. These political documents recognize the fact that human beings work in order to participate in those activities and to be with those people who make life interesting and fun.

The \vpNew{principle} that equitable policies show equal respect for all citizens~\citep{rawls2009theory} helps us to understand why time as good should be shared by all citizens and not just the wealthy. In context of time-varying tolls this is a key concern where the commonly used system optimal objective can have unintended consequences of penalizing relatively time poor travelers. In the next section, we highlight the case where dynamic tolling for single bottlenecks poses concerns of time equity and propose a vertically-equitable toll to address this inequity.

\section{Preliminaries: Vickrey's Bottleneck Model}
\label{sec:vickreyPrelim}
In this section we introduce the multiclass Vickrey’s bottleneck model~\citep{vickrey1969congestion}. The simplest version of this model includes a single link between the origin and the destination.  The link has an unlimited capacity before and after the bottleneck, but at the bottleneck the capacity is limited. (This capacity assumption mimics a point queue model in dynamic traffic assignment.) Vickrey's model is useful as it offers insights into ``the nature of time-of-day shifts, various inefficiencies in unpriced equilibria, the temporal pattern of 
optimal pricing, and some surprising effects of pricing on travel patterns and travel costs."~\citep{small2015bottleneck}. %
Positioning this article among the variants of Vickrey's model, we label the specific assumptions as A$\#$. Through the rest of the article, we assume vehicles have single occupancy, and use travelers and vehicles interchangeably (A$\#1$).

\subsection{Notations and assumptions}
Trips along the Vickrey's bottleneck model have five stages: a departure, a free flow period to the bottleneck, a congested phase while waiting for prior vehicles to pass the bottleneck, passing the bottleneck, and an arrival. \vpNew{The simplest} version of the model takes three inputs: number of travelers~($N$), a fixed rate of discharge at the bottleneck in vehicles per time unit ($D$), and \vpNew{a desired} arrival time at the destination for all vehicles ($\tau^*$). We assume inelastic demand, fixed capacity, and \vpNew{a fixed desired} time of arrival for all vehicles to simplify our analyses (A$\#2$). \vpNew{This assumption allows us to focus our analysis on equity as clearly as possible.}

All travelers possess a value of time (expressed as money units per time unit), which represents the perceived cost of a unit of time spent traveling or waiting in a queue. Travelers also have schedule delay penalties for early and late arrival. In the commuting case, these penalties are influenced by whether the worker has a flexible schedule that permits them to be productive before the desired start time, the magnitude of penalties the worker faces for arriving late, and other schedule constraints that may make it costly for the worker to arrive before or after their start time (for example, convenience charges by childcare services to drop a child off early).

\vpNew{Travelers are assumed to belong to different heterogeneous groups with varying values of time and schedule delay penalties} (A$\#3$). Modeling continuous distributions of parameters requires solutions to nonlinear differential equations~\citep{arnott1990economics} and is left as part of the future work. Let $K$ denote the set of all groups, and $\alpha_k, \beta_k$, and $\gamma_k$ (respectively) denote the value of time, the early arrival penalty, and the late arrival penalty, for a traveler in group $k \in K$.

Let $\tau(t)$ denote the destination arrival time of a traveler departing from the origin at time $t$. \vpNew{Because each group contains more than one traveler, let $n_k(t)$ denote the departure rate of group $k$ at time $t$} and $N_k$ denote the total number of travelers in group $k$. Further, define $n(t)=\sum_{k\in K}n_k(t)$ as the total number of travelers departing at time $t$. While the physical interpretation of such a model requires that vehicles be discrete entities, we assume infinitely divisible travelers over a continuous time to \vpNew{facilitate analysis} (A$\#4$).

The utility of each traveler in group $k$ departing at time $t$ is captured in a cost function, $c_{\text{total}}^k(t)$, which is composed of four distinct costs: the cost (determined by $\alpha_k$) of the time required to travel from the origin to the destination in free flow conditions, $c_{\text{FF}}^k(t)$; the cost of the time required to wait in any queue at the bottleneck, $c_{\text{queue}}^k(t)$; the schedule delay cost (determined by $\beta_k$ and $\gamma_k$) created by arriving at the destination at a time other than the desired arrival time, $c_{\text{SD}}^k(t)$; and, the direct cost of a toll that is imposed, $c_{\text{toll}}^k(t)$. Each of these costs can vary over the peak hour and it is possible for \vpNew{some of these} costs to be equal to 0. 

Without loss of generality, we set $c_{\text{FF}}^k(t)=0$ for all $t$ and $k\in K$. A constant free flow term merely adds a constant term to all of the cases in \vpNew{our analysis, which does not affect comparisons across alternative toll schemes}. Let $SDC_k$, $TTC_k$, $TRC_k$, and $TC_k$ be the schedule delay cost, travel time cost, toll rate cost, and total cost across all travelers in group $k$. Without the subscript, let each of these terms represents the total of the cost across all groups. That is, $SDC = \sum_{k\in K} SDC_k$, $TTC = \sum_{k\in K} TTC_k$, $TRC = \sum_{k\in K} TRC_k$, and $TC = \sum_{k\in K} TC_k$. Later in the text, we will use a superscript on the cost terms to differentiate costs for different toll scenarios.

Several \vpNew{basic relations hold among the defined variables}. First, Equation~\eqref{eq:tauAndtrelation} evaluates arrival time $\tau(t)$ by adding the time spent in the queue to the departure time: 
	\begin{equation}
		\tau(t) = t + \frac{q(t)}{D},
		\label{eq:tauAndtrelation}
	\end{equation}
	
	\vpNew{where $q(t)$ is the queue length at time $t$.} Second, flow conservation requires that the rate of change of the number of vehicles in a queue is equal to the difference between the inflow rate and the outflow rate.  Mathematically,
	\begin{equation}
		\frac{dq(t)}{dt} = n(t)-D.
		\label{eq:qLengthDer}
	\end{equation}
	
	 \vpNew{Equation~\eqref{eq:qLengthDer} holds only when the bottleneck is active, that is $n(t)\geq D$.} Next, the expressions for cost terms can be evaluated. \vpNew{We assume that $c_{\text{queue}}^k(t)$ is derived from a model of a deterministic queue (A$\#5$), that is, the queue delay is evaluated as:}
	\begin{equation}
		c_{\text{queue}}^k(t) = \alpha_k. \frac{q(t)}{D}.		
	\end{equation}
	The schedule delay cost \vpNew{distinguishes} the two cases of early or late arrival and is expressed a maximum between the two terms in Equation \eqref{eq:SDcost}. As an extension of assumption A$\#2$, $\tau^*$ is same across all groups.
	\begin{equation}
		c_{\text{SD}}^k(t) = \max \left\lbrace \beta_k \left( \tau^*-\tau(t) \right), \gamma_k \left(\tau(t)-\tau^*\right) \right\rbrace.
		\label{eq:SDcost}
	\end{equation}

	We model travelers as purely rational and selfish, that is, each traveler wants to minimize her own total cost (A$\#6$). Thus, the choice of departure time for each traveler can be modeled as a non-cooperative game among non-atomic travelers and the steady-state behavior of the game is considered. \cite{lindsey2004existence} defines the deterministic departure-time user equilibrium as \vpNew{the} following:

\begin{defi}[Deterministic departure-time user equilibrium]
 At equilibrium, all travelers in a group incur the same trip cost for their chosen departure times, and equal or higher costs at any other times.
 \label{def:equil}
\end{defi}

Because the arrival times are uniquely determined by the departure time, the trip cost for all travelers in a group are also same as a function of arrival times. If we assume that $\alpha_k > \beta_k$ for each group (A$\#7$)\footnote{This is a reasonable assumption because if $\alpha_k < \beta_k$, then we would be modeling travelers who derive more benefit from waiting in traffic than arriving early to do paperwork or even sitting in the company parking lot. The empirical findings in~\cite{small1982scheduling} also support this assumption.}, and given that the linear schedule delay cost functions in Equation \eqref{eq:SDcost} are continuous, finite, and evaluate to zero for $\tau(t)=\tau^*$ for each group $k$, the equilibrium departure time distribution for each group exists and is unique~\citep[][Propositions 1 and 2]{lindsey2004existence}. \vpNew{The equilibrium departure time distributions can be found by solving a mixed linear complementarity problem  using algorithms like Lemke's}~\citep{ramadurai2010linear}. 

In this article, we make additional assumptions that help us derive analytical formulas for the equilibrium, obviating the need for mixed linear complementarity formulations. First, we limit our attention to two groups. This can be done without loss of generality as the formulas in the following sections can be easily extended for three or more travel groups. Second, \vpNew{as others have assumed} in the literature~\citep{henderson1981economics, arnott1987schedule, arnott1990economics}, we assume that the ratio of late and early arrival penalties are constant for both groups (A$\#8$). That is, $\gamma_1/\beta_1 = \gamma_2/\beta_2 = \eta$, \vpNew{where $\eta$ is a fixed constant greater than 1}. 

Next, we classify the groups based on the absolute and relative values of $\alpha$ and $\beta$. The value of $\alpha$ is typically proportional to the income level of a class, thus we call travelers with higher $\alpha$ as \textit{high-income} travelers. A lower value of $\beta$ for a group (and equivalently a low value of $\gamma$) implies that the group is more flexible in their schedule than travelers with higher value of $\beta$. Thus, we call travelers with lower value of $\beta$ (or $\gamma$) as \textit{absolutely time-flexible} travelers with higher absolute flexibility in their schedule. The ratio $\beta/\alpha$ represents the relationship between the absolute value of time while traveling and the early arrival schedule delay penalty. We call travelers with lower value of $\beta/\alpha$  as \textit{relatively time-flexible} travelers. %
 These travelers have more flexibility in their schedule relative to the time they are willing to spend in the queue. Table \ref{tab:timePoor} summarizes the definitions and provides example usages.  Without loss of generality, we \vpNew{label the groups so that} group 1 is relatively more (or as) time-flexible as group 2, that is,

\begin{equation}
\beta_1/\alpha_1 \leq \beta_2/\alpha_2.
\label{eq:orderingOfGroups}
\end{equation}

\begin{table}[]
\centering
\caption{Characterizing travelers based on absolute and relative values of $\alpha$ and $\beta$}
\label{tab:timePoor}
\begin{tabular}{|l|c|}
\hline
\multicolumn{1}{|c|}{\textbf{Characteristics}} & \textbf{Label} \\ \hline
\multicolumn{1}{|c|}{High value of $\alpha$} & High-income travelers \\ \hline
\multicolumn{1}{|c|}{Low value of $\beta$ (or $\gamma$)} & Absolutely time-flexible travelers \\ \hline
\multicolumn{1}{|c|}{Low value of $\beta/\alpha$} & Relatively time-flexible travelers \\ \hline
\multicolumn{2}{|l|}{\cellcolor[HTML]{C0C0C0}} \\ \hline
\multicolumn{2}{|l|}{\textbf{Example usages:}} \\ \hline
\multicolumn{2}{|l|}{1) Group 1 is absolutely less time-flexible than group 2 meaning $\beta_1 >\beta_2$} \\ \hline
\multicolumn{2}{|l|}{2) Group 1 is relatively less time-flexible than group 2 meaning $\beta_1/\alpha_1 > \beta_2/\alpha_2$} \\ \hline
\multicolumn{2}{|l|}{3) Group 1 is relatively as time-flexible as group 2 meaning $\beta_1/\alpha_1 = \beta_2/\alpha_2$} \\ \hline
\multicolumn{2}{|l|}{\begin{tabular}[c]{@{}l@{}}4) Group 1 is absolutely more time-flexible but relatively less time-flexible \\ than group 2 meaning $\beta_1 < \beta_2$ but $\beta_1/\alpha_1 > \beta_2/\alpha_2$\end{tabular}} \\ \hline
\end{tabular}
\end{table}

The ratio \vpNew{$\beta/\alpha$} models relative willingness to pay and is integral to capturing the elements of time poverty. %
For example, consider two travelers where the first traveler has $\alpha$ as $100$ units and $\beta$ as $10$ units, while the second traveler has $\alpha$ as $30$ units and $\beta$ as $9$ units (\vpNew{as in} the example in Section~\ref{sec:intro}). The first traveler has higher $\alpha$ and thus has a greater absolute willingness to pay. The first traveler also has higher $\beta$ and thus is absolutely less time-flexible. However, the first traveler's worth of arriving at their destination is 10\% of their absolute willingness to pay while the second's is 30\%.  This means that the second traveler who is relatively less time-flexible than the first traveler has higher relative willingness to pay.

There are other alternatives for modeling heterogeneous vehicle groups \vpNew{for instance,} considering that $\alpha$ and $\beta$  \vpNew{are uniform} groups and $\gamma_k/\beta_k$ is what varies, or that vehicles differ not in their $\alpha$, $\beta$, or $\gamma$ values, but in the desired arrival time at the destination, $\tau^*$~\citep[][]{arnott1987schedule}. In this article, we focus our attention on the case where assumption A$\#8$ holds true as under this case schedule delays are not minimized at user-equilibrium and thus a system optimal toll that minimizes the net costs of the system has a potential to favor absolutely time-flexible travelers over relatively time-flexible travelers, which we show in Section~\ref{subsec:SOtoll} creates inequity.

	\subsection{No-toll equilibrium}

	The no-toll equilibrium was first derived by~\cite{arnott1987schedule}. The key result under the stated assumptions can be summarized as \vpNew{follows}: ``at equilibrium, a fraction $\eta/(1+\eta)$ of each group departs early, with the remainder departing late. For early departures, the traveler with lowest value of $\beta/\alpha$ travel first, followed by other travelers in the increasing order of $\beta/\alpha$ values. For late departures, travelers with highest value of $\beta/\alpha$ travel first, followed by other travelers in the increasing order of $\beta/\alpha$ values. If the inequalities [in Equation~\eqref{eq:orderingOfGroups}] are strict, there is no overlap between departure times of two groups." Thus, under the presence of no-toll, travelers who are relatively least time-flexible depart and arrive closest to the desired arrival time.
	
	The above result can be used to determine the first and the last times of departure, denoted by $t_0$ and $t_f$, respectively. We denote the transition time when the early travelers from group 1 cease traveling and the early travelers from group 2 begin as \vpNew{$t_{A}$}. Similarly, \vpNew{$t_{B}$} is the time when the late travelers from group 2 cease traveling and the late travelers from group 1 begin. We add a superscript ``NO" to indicate that these times are derived for no-toll case. The expression for transition departure times can be derived a function of $\tau^*$ as follows:
	
	\begin{align}
	t_0^{\text{NO}} &= \tau^* - \frac{\eta}{1+\eta} \frac{N}{D} \\
	t_f^{\text{NO}} &= \tau^* + \frac{1}{1+\eta} \frac{N}{D} \\
	t_{A}^{\text{NO}} &= \tau^* - \frac{\eta}{\eta+1}\frac{N_2}{D} - \frac{\beta_1}{\alpha_1} \frac{\eta}{1+\eta} \frac{N_1}{D} \\
	t_{B}^{\text{NO}} &= \tau^* + \frac{1}{\eta+1}\frac{N_2}{D} - \frac{\beta_1}{\alpha_1} \frac{\eta}{1+\eta} \frac{N_1}{D} \label{eq:timeLocation3}
\end{align}

	These departure time values can be used to compute cost expressions which are shown in detail in Appendix~\ref{appendix:costs}. We provide numerical comparison of costs in Section~\ref{sec:numerical}.

\subsection{System-optimal toll and equity concerns}
\label{subsec:SOtoll}
	A system-optimal (SO) toll minimizes the total cost across the system (excluding the toll costs as these are assumed external to the benefits). This is achieved by charging a uniform toll for all travelers who decide to depart at $t$, that is $c_{\text{toll}}^{1,\text{SO}}(t)=c_{\text{toll}}^{2,\text{SO}}(t) $.
	
	Queuing costs are minimized when travelers arrive at the discharge rate and there is no queue. That is, $c_{\text{queue}}^{k,\text{SO}}(t)=0$ for all $k\in \{ 1,2\}$ and $n(t)=D$. Because travelers face no queue, the arrival and departure times are identical, that is $\tau(t)=t$. The rush hour begins and ends at same time as the no-toll equilibrium as the bottleneck is fully utilized and the values of total number of travelers and discharge rate is identical (that is, $t_0^{\text{SO}}=t_0^{\text{NO}}$ and $\tau(t_f^{\text{SO}})=\tau(t_f^{\text{NO}})$). Schedule delay costs are minimized when the group which is relatively less time-flexible travels closest to $\tau^*$. Thus, we seek tolls such that the travelers who are absolutely less time-flexible travel closer to $\tau^*$.

  At equilibrium all travelers segregate their departure times such that no traveler from a same group travels together with a traveler from another group (unless the $\beta$ values are identical for the groups). Each travel group has an early departure and late departure. During early departure travelers from group with lower $\beta$ leave first, followed by travelers with higher $\beta$. During late departure, travelers from group with higher $\beta$ (and thus higher $\gamma$) travel first, followed by travelers of group with lower $\beta$ (or lower $\gamma$). Additionally, similar to the no-toll equilibrium case, the proportion of travelers traveling early is $\eta/(1+\eta)$, which is same across all groups~\citep{arnott1987schedule}.
  
  The equilibrium property also implies that toll rates increase at the rate $\beta_k$ when group $k$ is traveling early and decrease at the rate $\gamma_k$ when group $k$ is traveling late. For deriving the values of transition departure times, the SO toll has two scenarios.
  
  \begin{itemize}
  	\item Case 1: relatively more time-flexible travelers also are absolutely more time-flexible ($\beta_1/\alpha_1 < \beta_2/\alpha_2$ and $\beta_1 < \beta_2$). Under this case, the order of travel is same as the no-toll equilibrium. That is, 
  	\begin{equation}
	t_{A}^{\text{SO}} = \tau(t_{A}^{\text{SO}}) = \tau^* - \frac{\eta}{\eta+1}\frac{N_2}{D}
	\end{equation}
	\begin{equation}
	t_{B}^{\text{SO}} = \tau(t_{B}^{\text{SO}}) = \tau^* + \frac{1}{\eta+1}\frac{N_2}{D}
	\end{equation}

  	\item Case 2: relatively more time-flexible travelers are absolutely less time-flexible ($\beta_1/\alpha_1 < \beta_2/\alpha_2$ and $\beta_1 > \beta_2$). This case can only happen when $\alpha_1 > \alpha_2$, that is travelers from group 1 have higher income than travelers from group 2. Under this case, the order of travel under SO toll reverses (group 2 travels first, followed by group 1, then followed again by group 2). The transition departure times formulae simply replace $N_2$ with $N_1$. That is, 
  	\begin{equation}
	t_{A}^{\text{SO}} = \tau(t_{A}^{\text{SO}}) = \tau^* - \frac{\eta}{\eta+1}\frac{N_1}{D}
	\end{equation}
	\begin{equation}
	t_{B}^{\text{SO}} = \tau(t_{B}^{\text{SO}}) = \tau^* + \frac{1}{\eta+1}\frac{N_1}{D}
	\end{equation}
	Note that in this case $t_{A}^{\text{SO}}$ is the time when travelers who depart early (group 2) cease to travel and the other group (group 1) starts their travel. The order of the groups is reversed, but we use the same notation to warrant consistency.
  \end{itemize}
  
  \vpNew{The reversal} of the order of travel under case 2 is \vpNew{one} reason why the SO toll is inequitable. High-income travelers with a higher value of $\beta$ and $\gamma$ travel at the center of the peak hour regardless of the relative weight of their schedule delay penalty and groups with a lower value of time but higher relative \vpNew{weight} are shifted to the margins of the peak period. Thus, on top of having their actual costs increased by the toll, low-income travelers are further harmed by being made more likely to experience additional time poverty. Leisure time, as discussed in Section~\ref{subsec:timePoverty}, is a primary social good. The opportunity to eat breakfast with one’s kids or head home to pick them up from soccer practice is as valuable to low-income families as high income families in relative terms, even though low-income families are not in a position to spend money in a way that reveals this preference on the toll road. 
  
  In the following sections, we focus our attention on case 2. Under this case, the SO toll can be expressed as Equation \eqref{eq:SOtoll}, where we use the expressions for $t_{A}^{\text{SO}}$ and $t_{B}^{\text{SO}}$ to cancel terms $\beta_1(\tau^* - t_{A}^{\text{SO}})$ and $ \gamma_1 (t_{B}^{\text{SO}} - \tau^*)$. The new expressions for costs $SDC$, $TRC$, and $TC$ under SO toll for this case are detailed in Appendix~\ref{appendix:costs}.
 
 \begin{equation}
 	c_{\text{toll}}^{1,\text{SO}}(t)=c_{\text{toll}}^{2,\text{SO}}(t) =
 	\begin{cases}
 	 0 \qquad &t \leq t_0^{\text{SO}} \\
 	 \beta_2(t - t_{0}^{\text{SO}}) \qquad &t_0^{\text{SO}} < t \leq t_{A}^{\text{SO}} \\
 	 \beta_2(t_{A}^{\text{SO}} - t_{0}^{\text{SO}}) + \beta_1(t - t_{A}^{\text{SO}}) \qquad &t_{A}^{\text{SO}} < t \leq \tau^{*} \\
 	 \beta_2(t_{A}^{\text{SO}} - t_{0}^{\text{SO}}) + \beta_1(\tau^* - t_{A}^{\text{SO}})- \gamma_1 (t - \tau^*) \qquad &\tau^* < t \leq t_{B}^{\text{SO}} \\
 	 \beta_2(t_{A}^{\text{SO}} - t_{0}^{\text{SO}}) -\gamma_2(t-t_{B}^{\text{SO}}) \qquad &t_{B}^{\text{SO}} < t \leq t_{f}^{\text{SO}} \\
 	 0 \qquad & t_f \leq t
 	\end{cases}
 	\label{eq:SOtoll}
\end{equation}

  	The second reason why the SO toll is inequitable is that it leads to welfare differences across different travelers. Focusing on case 2 where the order of departure is reversed relative to the no-toll case, SO tolls benefit travelers with higher value of $\alpha$. Assuming that the toll revenues are not rebated directly to the travelers (A$\#9$)\footnote{\cite{litman1997evaluating} argues that this is a reasonable assumption: because road networks are already subsidized by taxes on non-drivers and through various externalities, the toll-payers are not even justified in demanding that the rebate be given to them as a group. Instead it ought to be spent on social priorities that benefit all of the public, such as water infrastructure, mass transit or even parks.}, \cite{arnott1987schedule} show that under this case the costs of group 1 is lower while the costs of group 2 is higher relative to the no-toll case. That is, group 1 travelers benefit at the expense of group 2. We will quantify these differences in Section~\ref{sec:numerical}.
  	
  	The SO tolls are horizontally equitable as everyone faces an identical option to pay the face value of the toll and can choose to travel whenever they believe they will experience a minimal cost. However, it can lead to reversal of order of travel and inequitable distribution of welfare benefiting income-rich travelers from group 1 at the expense of group 2. In the next section, we propose a vertically-equitable tolling scheme which addresses \vpNew{these concerns}.

\section{Time-equitable Toll}
\label{sec:timeEquitableToll}

\subsection{Definition}
The time inequity of the SO tolling scheme is a feature, not a bug of that system. Low-income travelers, who are already more likely to experience time poverty, are shunted to the margins of the peak travel period in an effort to decrease the sum of the costs of travel for all citizens. Wealthier travelers with a higher $\alpha$ experience a higher queuing cost for a given level of congestion (by definition) and so the efforts to decrease total system costs have large marginal value from reducing the costs for wealthier travelers. A time-equitable (TE) toll addresses the issues with the SO toll by meeting three key desiderata.%

\begin{enumerate}
	\item \textbf{Preserves order}: The tolling scheme should preserve the same order of departures as in the no-toll scenario. In the no-toll case, the departure order is a reflection of the relative weight a group places on their schedule delay. An equitable program would show respect for all travelers by preserving this ordering of preferences rather than attempting to coerce travelers to choose \vpNew{departure} times based on their absolute value of time.
	\item \textbf{Maintains zero queue}: Similar to the SO toll, a TE toll must have zero queue at equilibrium. This requirement is fundamental, \vpNew{reflecting that} a traveler paying a toll should be rewarded with no congestion. Thus, $c_{\text{queue}}^k(t)=0$, $n(t)=D$, and $\tau(t)=t$ for all $k\in K$ and $t$. Furthermore, since the bottleneck is always used at full capacity and the demand remains fixed, the earliest and last arrival times under TE tolls match that of the no-toll case. That is, $t_0^\text{TE} = t_0^\text{NO}$ and $\tau(t_f^\text{TE}) = \tau(t_f^\text{NO})$.
	\item \textbf{Revenue-neutral}: A time-equitable toll should generate the same revenues as the system optimal toll. As discussed earlier in Section~\ref{subsec:timePoverty}, cities and states use tolls both to ease congestion and to shift the burden of paying for infrastructure onto those who use it. Alternatives to the system optimal toll should raise at least as much money so that it does not shift the cost of the infrastructure back onto non-travelers. The idea is to make travel at the middle of the peak period affordable and fair, not to make it ``free".
\end{enumerate}

 We focus our attention on case 2 where absolutely more time-flexible travelers are relatively less time-flexible. That is, $\beta_1 > \beta_2$ and $\beta_1/\alpha_1 < \beta_2/\alpha_2$. As before, $\gamma/\beta = \eta$ is assumed constant across all groups. If $\beta_1 < \beta_2$ and $\beta_1/\alpha_1 > \beta_2/\alpha_2$, then the SO toll satisfies all three requirements and is thus time-equitable.
 
 \subsection{Deriving time equitable toll}

Meeting the stated requirements of a time equitable toll is best achieved by charging different rates to travelers from different groups. That is, we seek tolls that are vertically equitable with respect to relative willingness to pay. %
Such a structure is acceptable because we want to favor the disadvantaged groups and compensate for the inefficiencies of the SO toll. We defer the discussion on real-world implementation of such vertically-equitable tolls to Section~\ref{subsec:TE2tollOtherConcerns}.

We propose a toll for each group that scales a function, which is \vpNew{uniform across groups}, by the value of time \vpNew{for that}  group. The SO tolls value the worth of a traveler based on their schedule delay penalty parameters and thus lead to unjust switches between the order of departures. A toll that scales with the $\alpha$ value of each group prevents the wealthier travelers who are absolutely less time-flexible to receive benefits at the expense of poor travelers who are relatively less time-flexible.

\vpNew{Let $\zeta(t)$ be this function which is uniform across groups. The actual toll paid by a traveler is then $c_{\text{toll}}^k(t)= \alpha_k \zeta(t)$. We desire a value of $\zeta(t)$ such that the equilibrium condition in Definition~\ref{def:equil} is satisfied, which we derive using isocost-$\zeta$ curves.} 

\vpNew{For a moment, assume $\zeta(t)$ can be different across groups. If the total cost for all travelers in group $k$ is $C_k$, then an isocost-$\zeta$ curve for group $k$, denoted by $\zeta_k(t_a \mid C_k)$,  represents the variation in $\zeta(t_a)$ needed for the total cost to be $C_k$ as a function of destination arrival time $t_a$.  Mathematically, $\zeta_k(t_a \mid C_k) = \left( C_k - c_{\text{SD}}^k(t_a) \right)/\alpha_k$, which is equivalent to Equation~\eqref{eq:zetaIsocost}:} 

\begin{equation}
	\zeta_k(t_a \mid C_k) =
		\begin{cases}
			\left( C_k - \beta_k(\tau^*-t_a)\right)/\alpha_k & t_a<\tau^* \\
			\left(C_k - \gamma_k(t_a-\tau^*)\right)/\alpha_k & t_a \geq \tau^*
		\end{cases}.
	\label{eq:zetaIsocost}
\end{equation}

Figure \ref{fig:isocostZeta} shows the plots of isocost-$\zeta$ curves for both groups for varying values of $C_k$. As observed, the curve is an increasing function of arrival time with a slope $\beta_k/\alpha_k$ until $t_a=\tau^*$ after which it decreases with slope a $\gamma_k/\alpha_k$. Because group 2 is relatively less time-flexible, its isocost-$\zeta$ curves have steeper slopes than group 1. For a larger $C_k$ the arrival times are more spread out and thus increasing the value of $C_k$ shifts the corresponding isocost-$\zeta$ curve parallely outwards. Let $C_k^{\text{eq}}$ be the total cost experienced by group $k$ at the equilibrium. We highlight isocost-$\zeta$ curves corresponding to the equilibrium cost in bold. 

	\begin{figure}[H]
		\centering
		\includegraphics[scale=0.5]{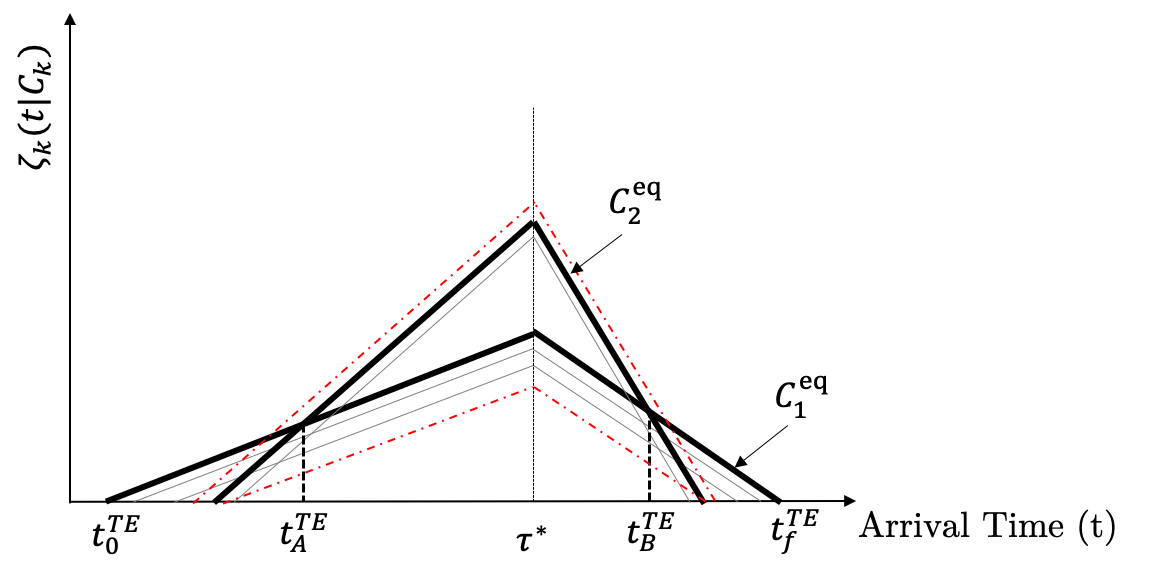}
		\caption{Isocost-$\zeta$ curves for groups $1$ and $2$. The bold curves correspond to the equilibrium costs $C_1^{\text{eq}}$ and $C_2^{\text{eq}}$ for the two groups, respectively. \vpNew{Because of Lemma~\ref{lem:zetaintersect}, the bold lines must intersect.}}
		\label{fig:isocostZeta}
	\end{figure}

At equilibrium, the experienced value of $\zeta(t)$ is same across all groups for any arrival time, that is $\zeta_k(t_a \mid C_k^\text{eq}) = \zeta(t_a)$ for all $k$ for every possible $t_a$. Given this, we argue that equilibrium isocost-$\zeta$ curves for both groups must intersect. 

\begin{lemma}
At equilibrium, isocost-$\zeta$ curves for both groups must intersect.
\label{lem:zetaintersect}
\end{lemma}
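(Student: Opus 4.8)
The plan is to exploit the fact that, at equilibrium, the common scaling function $\zeta(t)$ is a \emph{single} function of arrival time that is simultaneously pinned to each group's equilibrium isocost curve over the arrival times that group actually uses. First I would record the basic consequence of Definition~\ref{def:equil} together with the zero-queue property of a TE toll: since $c_{\text{queue}}^k(t)=0$ and $\tau(t)=t$, every group-$k$ traveler arriving at some $t_a$ in group $k$'s support incurs total cost $c_{\text{SD}}^k(t_a)+\alpha_k\zeta(t_a)=C_k^{\text{eq}}$. Solving for $\zeta$ gives $\zeta(t_a)=\bigl(C_k^{\text{eq}}-c_{\text{SD}}^k(t_a)\bigr)/\alpha_k=\zeta_k(t_a\mid C_k^{\text{eq}})$, so on group $k$'s arrival support the graph of $\zeta$ coincides with the bold isocost curve $\zeta_k(\,\cdot\mid C_k^{\text{eq}})$.

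Second, I would invoke the order-preserving and zero-queue desiderata to describe how these two supports fit together. Because the bottleneck is used at full capacity over the entire rush period $[t_0^{\text{TE}},t_f^{\text{TE}}]$ and the groups segregate (the strict inequality $\beta_1/\alpha_1<\beta_2/\alpha_2$ forces group~1 onto the outer wings and group~2 onto the interval straddling $\tau^*$), the two arrival supports partition $[t_0^{\text{TE}},t_f^{\text{TE}}]$ and are adjacent, meeting at a transition arrival time $t_A$ (and again at $t_B$). Thus $\zeta$ agrees with $\zeta_1(\,\cdot\mid C_1^{\text{eq}})$ just to the left of $t_A$ and with $\zeta_2(\,\cdot\mid C_2^{\text{eq}})$ just to the right of it.

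The crux, and the step I expect to be the main obstacle, is to show that $\zeta$ is continuous across this transition, for then $\zeta_1(t_A\mid C_1^{\text{eq}})=\zeta(t_A)=\zeta_2(t_A\mid C_2^{\text{eq}})$ exhibits the claimed intersection. I would establish continuity from the two one-sided equilibrium inequalities. Equilibrium requires $c_{\text{SD}}^1(t)+\alpha_1\zeta(t)\ge C_1^{\text{eq}}$ for \emph{all} $t$, with equality on group~1's support; evaluating just to the right of $t_A$ and letting $t\downarrow t_A$ yields $\zeta(t_A^+)\ge\zeta(t_A^-)$, since the schedule-delay term is continuous in $t_a$. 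Symmetrically, $c_{\text{SD}}^2(t)+\alpha_2\zeta(t)\ge C_2^{\text{eq}}$ evaluated just to the left of $t_A$ gives $\zeta(t_A^-)\ge\zeta(t_A^+)$. The two inequalities sandwich $\zeta$ into continuity at $t_A$, so no toll jump can occur and the two equilibrium isocost curves must meet there (and likewise at $t_B$). Intuitively, a downward jump in $\zeta$ would let the marginal group-1 traveler shift infinitesimally later and strictly undercut $C_1^{\text{eq}}$, while an upward jump would let a marginal group-2 traveler deviate.

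Finally, I would add a geometric sanity check for consistency. Both bold curves are ``tents'' peaking at $t_a=\tau^*$, and since group~2 is relatively less time-flexible its tent has the steeper slopes $\beta_2/\alpha_2$ and $\gamma_2/\alpha_2$. The only way such a steeper tent can coincide with the shallower group-1 tent at $t_A$ while lying above it near $\tau^*$ is for group~2's peak $C_2^{\text{eq}}/\alpha_2$ to exceed group~1's; this is exactly the configuration the equilibrium selects, and it produces two crossings, at $t_A$ and $t_B$, confirming the intersection rather than one tent lying entirely above the other.
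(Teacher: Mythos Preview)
Your argument is correct, but it is a genuinely different route from the paper's. The paper gives a two-line global contradiction: if the equilibrium isocost-$\zeta$ curves did not intersect, one would lie entirely above the other; since the experienced $\zeta(t)$ is common to both groups, the group whose curve sits above could switch to any arrival time used by the other group and strictly lower its cost, contradicting equilibrium. That argument needs no information about which group occupies which portion of the rush hour, nor about transition times or continuity of $\zeta$.

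Your approach instead works \emph{locally}: you identify the transition time between the two groups' arrival supports and use the two one-sided equilibrium inequalities to sandwich $\zeta$ into continuity there, which forces the two isocost curves to coincide at that point. This is valid and actually yields more information (it pinpoints \emph{where} the intersection occurs and explains why $\zeta$ cannot jump), but it is more elaborate and, as written, leans on the order-preserving desideratum to place group~1 on the wings and group~2 in the middle. In the paper's logical flow that ordering is \emph{derived} from the lemma (via the upper-envelope argument immediately following it), so invoking it here risks circularity. You can sidestep this cleanly: your continuity-sandwich argument only needs that the two supports are nonempty and adjacent, which already follows from zero queue, full bottleneck utilization, and segregation under the strict inequality $\beta_1/\alpha_1<\beta_2/\alpha_2$; the specific order is irrelevant to the conclusion that the curves meet at the transition.
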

\begin{proof}
\vpNew{Let us assume to the contrary that, \vpNew{at equilibrium}, the curves do not intersect. For example, let the equilibrium isocost-$\zeta$ curves correspond to the red dash-dot pattern lines in Figure~\ref{fig:isocostZeta} where the isocost-$\zeta$ curve for group 2 is completely above the isocost-$\zeta$ curve for group 1. Because the experienced $\zeta(t)$ value is identical for both groups for a given arrival time, the travelers from group 2 can reduce their $\zeta$ values (and thus the total cost values) by switching to the arrival/departure time of travelers from group 1. This implies that the current isocost-$\zeta$ curve for group 2 is not at equilibrium, which is a contradiction.}
\end{proof}

We define $t_{A}^\text{TE}$ as the arrival time at the first intersection point and $t_{B}^\text{TE}$ as the arrival time at the second intersection point, as shown in Figure~\ref{fig:isocostZeta}.

All users in group $k$ must arrive in arrival times corresponding to the $C_k^{\text{eq}}$ curve. However, the arrival time of group $k$ cannot be any time $t_a$ when the isocost-$\zeta$ curve at equilibrium for group $k$ is below that for the other group $\bar{k}$, because otherwise the travelers from the group $\bar{k}$ can switch to arrival time $t_a$ and reduce their costs. Thus the arrival times for each group is such that the equilibrium value of $\zeta(t)$ is an upper envelope of the equilibrium isocost-$\zeta$ curves of both groups. That is, in Figure~\ref{fig:isocostZeta}, group $1$ arrives in time $(t_0^\text{TE},t_{A}^\text{TE}) \cup (t_{B}^\text{TE},t_f^\text{TE})$, while group $2$ arrives in time $(t_{A}^\text{TE},t_{B}^\text{TE})$. Given our assumption on group values, we thus have that the order of departure at equilibrium is the same as the no-toll scenario. Hence, tolls derived from the equilibrium  value of $\zeta(t)$ preserve the order of travel.

Similar to the arguments for no-toll, we can show that the number of travelers departing early for each group is a fixed proportion of the number of travelers in each group. 

\begin{thm}
	At equilibrium,  \vpNew{the proportion of each group that departs early is $\eta/(1+\eta)$.}
	\label{thm:propTravTEtoll}
\end{thm}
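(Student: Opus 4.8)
The plan is to exploit the two facts already established before the theorem: that a time-equitable toll maintains a zero queue (so $\tau(t)=t$ and the total cost of a group-$k$ traveler arriving at $t$ is simply $c_{\text{SD}}^k(t)+\alpha_k\zeta(t)$), and that the upper-envelope argument forces group $1$ to arrive on $(t_0^{\text{TE}},t_A^{\text{TE}})\cup(t_B^{\text{TE}},t_f^{\text{TE}})$ and group $2$ on $(t_A^{\text{TE}},t_B^{\text{TE}})$, with $t_A^{\text{TE}}<\tau^*<t_B^{\text{TE}}$. Since the bottleneck discharges at the constant rate $D$ throughout $[t_0^{\text{TE}},t_f^{\text{TE}}]$, the mass of travelers on each sub-interval is $D$ times its length. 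I would therefore introduce the four sub-interval lengths $x=t_A^{\text{TE}}-t_0^{\text{TE}}$, $y=\tau^*-t_A^{\text{TE}}$, $z=t_B^{\text{TE}}-\tau^*$, and $w=t_f^{\text{TE}}-t_B^{\text{TE}}$, so that the early proportions to be computed are $x/(x+w)$ for group $1$ and $y/(y+z)$ for group $2$. The whole theorem then reduces to proving the two identities $x=\eta w$ and $y=\eta z$.

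Next I would collect the constraints. The conservation and boundary relations are immediate: matching the no-toll endpoints (desideratum 2) gives $x+y=\tau^*-t_0^{\text{TE}}=\tfrac{\eta}{1+\eta}\tfrac{N}{D}$ and $z+w=t_f^{\text{TE}}-\tau^*=\tfrac{1}{1+\eta}\tfrac{N}{D}$, while counting group $1$'s two sub-intervals gives $x+w=N_1/D$. The equilibrium relations come from requiring that the single function $\zeta(t)$ be well defined where the two groups meet. First, no traveler should want to depart outside $[t_0^{\text{TE}},t_f^{\text{TE}}]$, which forces $\zeta(t_0^{\text{TE}})=\zeta(t_f^{\text{TE}})=0$ and hence pins the group-$1$ equilibrium cost to $C_1^{\text{eq}}=\beta_1(\tau^*-t_0^{\text{TE}})$. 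Second, continuity of $\zeta$ at the transition arrival times $t_A^{\text{TE}}$ and $t_B^{\text{TE}}$ means that group $1$'s isocost-$\zeta$ curve (at level $C_1^{\text{eq}}$) and group $2$'s (at level $C_2^{\text{eq}}$) must agree there; evaluating Equation~\eqref{eq:zetaIsocost} for both groups at each of $t_A^{\text{TE}}$ and $t_B^{\text{TE}}$ yields two expressions for $C_2^{\text{eq}}$, and eliminating $C_2^{\text{eq}}$ produces the single linear relation $\beta_2 y+\tfrac{\alpha_2\beta_1}{\alpha_1}x=\eta\beta_2 z+\tfrac{\alpha_2\eta\beta_1}{\alpha_1}w$.

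Finally I would solve this $4\times4$ linear system. Substituting the three conservation relations into the equilibrium relation, every term proportional to $\tfrac{N}{D}$ cancels and the equation collapses to $\bigl(\beta_2-\tfrac{\alpha_2\beta_1}{\alpha_1}\bigr)\tfrac{N_2}{D}=(1+\eta)\bigl(\beta_2-\tfrac{\alpha_2\beta_1}{\alpha_1}\bigr)z$. The decisive observation is that the common factor $\beta_2-\tfrac{\alpha_2\beta_1}{\alpha_1}=\alpha_2\bigl(\tfrac{\beta_2}{\alpha_2}-\tfrac{\beta_1}{\alpha_1}\bigr)$ is nonzero precisely because case 2 assumes the strict ordering $\beta_1/\alpha_1<\beta_2/\alpha_2$ --- the same transversality that makes the two equilibrium isocost curves actually cross in Lemma~\ref{lem:zetaintersect}. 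Cancelling it gives $z=\tfrac{1}{1+\eta}\tfrac{N_2}{D}$, and back-substitution yields $y=\eta z$, $w=\tfrac{1}{1+\eta}\tfrac{N_1}{D}$, and $x=\eta w$, so both early proportions equal $\eta/(1+\eta)$.

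I expect the main obstacle to be the careful bookkeeping of the equilibrium conditions at the transition points --- in particular, justifying $\zeta(t_0^{\text{TE}})=\zeta(t_f^{\text{TE}})=0$ by a no-deviation argument and correctly matching the two groups' isocost curves at $t_A^{\text{TE}}$ and $t_B^{\text{TE}}$ --- rather than the algebra, which collapses cleanly once the strict-ordering factor is recognized and cancelled.
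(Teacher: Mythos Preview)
Your proposal is correct, but it takes a more computational route than the paper. The paper's proof isolates a single sharper lemma first: from the fact that both groups are indifferent between arriving at $t_A^{\text{TE}}$ and at $t_B^{\text{TE}}$ (these are intersection points of the two equilibrium isocost-$\zeta$ curves), one writes the two cost-equality equations, divides each by its own $\beta_k$, and subtracts to obtain $\bigl(\tfrac{\alpha_1}{\beta_1}-\tfrac{\alpha_2}{\beta_2}\bigr)\bigl(\zeta(t_A^{\text{TE}})-\zeta(t_B^{\text{TE}})\bigr)=0$. The strict ordering $\beta_1/\alpha_1<\beta_2/\alpha_2$ then forces $\zeta(t_A^{\text{TE}})=\zeta(t_B^{\text{TE}})$. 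Once that equality is in hand, the theorem is immediate from slope comparison: for group~1 the rise of $\zeta$ over $[t_0^{\text{TE}},t_A^{\text{TE}}]$ must equal its fall over $[t_B^{\text{TE}},t_f^{\text{TE}}]$, giving $x=\eta w$ directly, and likewise for group~2 over $[t_A^{\text{TE}},\tau^*]$ and $[\tau^*,t_B^{\text{TE}}]$, giving $y=\eta z$. No conservation relations or explicit endpoint formulas are needed.

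Your approach instead builds the full $4\times4$ linear system from the endpoint formulas, the group-size constraint, and a single combined matching relation, then solves it. This works fine and the same nonzero factor $\beta_2-\tfrac{\alpha_2\beta_1}{\alpha_1}$ appears at the decisive step, but the paper's decomposition is cleaner: it shows that the equal-proportion conclusion really rests on the single geometric fact $\zeta(t_A^{\text{TE}})=\zeta(t_B^{\text{TE}})$, and does not need to invoke the values of $t_0^{\text{TE}}$, $t_f^{\text{TE}}$, or $N_1/D$ at all.
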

\begin{proof}
	See Appendix~\ref{appen:proofThm1}.
\end{proof}

The values of $t_{A}^\text{TE}$ and $t_{B}^\text{TE}$ are also derived in Appendix B and they have the same value as the arrival times for the no-toll case. The value of $\zeta(t)$ at the equilibrium is given by:

\begin{equation}
	\zeta(t) =
		\begin{cases}
		 0 & t<t_0^\text{TE} \\
		 \frac{\beta_1}{\alpha_1}(t-t_0^\text{TE}) & t_0^\text{TE} \leq t < t_{A}^\text{TE}\\
		 \frac{\beta_1}{\alpha_1}(t_{A}^\text{TE}-t_0^\text{TE}) + \frac{\beta_2}{\alpha_2}(t-t_{A}^\text{TE}) & t_{A}^\text{TE} \leq t < \tau^* \\
		 \frac{\beta_1}{\alpha_1}(t_{A}^\text{TE}-t_0^\text{TE}) + \frac{\beta_2}{\alpha_2}(\tau^*-t_{A}^\text{TE}) - \frac{\gamma_2}{\alpha_2}(t-\tau^*) & \tau^* \leq t < t_{B}^\text{TE} \\
		 \frac{\beta_1}{\alpha_1}(t_{A}^\text{TE}-t_0^\text{TE}) + \frac{\beta_2}{\alpha_2}(\tau^*-t_{A}^\text{TE}) - \frac{\gamma_2}{\alpha_2}(t_{B}^\text{TE}-\tau^*) - \frac{\gamma_1}{\alpha_1}(t-t_{B}^\text{TE}) & t_{B}^\text{TE} \leq t < t_{f}^\text{TE} \\
		 0 & t_f^\text{TE} \leq t
		\end{cases}
\end{equation}

We refer the tolls derived from this $\zeta(t)$ as $\text{TE1}$ tolls. The toll charged for group $k$ is given by $c_{\text{toll}}^{k,\text{TE1}}(t) = \alpha_k \zeta(t)$. We define $\wp$, called the toll-rate escalator, as the ratio of $\alpha/\beta$ for group 1 to that of group 2, that is, $\wp= \frac{\beta_2/\alpha_2}{\beta_1/\alpha_1}$. Given the assumption about the values of $\alpha_k$ and $\beta_k$, $\wp$ is always greater than 1. We can express $\text{TE1}$ tolls in terms of $\wp$ as in Equations \eqref{eq:TE1toll_1} and \eqref{eq:TE1toll_2}. We use Theorem \ref{thm:propTravTEtoll} for the time periods $t_{B}^\text{TE} < t \leq t_f^\text{TE}$ that allows $\beta_k (\tau^*-t_{A}^\text{TE}) = \gamma_k(t_{B}^\text{TE}-\tau^*)$ for $k\in \{1,2\}.$\footnote{$\text{LHS}=\beta_k (\tau^*-t_{A}^\text{TE})= \beta_k \left( \frac{\eta}{1+\eta} \frac{N_2}{D} \right) = \beta_k \eta \left( \frac{1}{1+\eta} \frac{N_2}{D} \right)= \gamma_k \left( \frac{1}{1+\eta} \frac{N_2}{D} \right) = \gamma_k(t_{B}^\text{TE}-\tau^*)=\text{RHS} $}

\begin{equation}
	c_{\text{toll}}^{1,\text{TE1}}(t) = 
	\begin{cases}
		\beta_1(t-t_0^\text{TE}) & t_0^\text{TE} \leq t \leq t_{A}^\text{TE} \\
		\beta_1(t_{A}^\text{TE}-t_0^\text{TE}) + \wp \cdot \beta_1(t-t_{A}^\text{TE}) & t_{A}^\text{TE} < t \leq \tau^* \\
		\beta_1(t_{A}^\text{TE}-t_0^\text{TE}) + \wp \cdot \left( \beta_1(\tau^*-t_{A}^\text{TE}) -\gamma_1(t-\tau^*)\right) & \tau^* < t \leq t_{B}^\text{TE} \\
		\beta_1(t_{A}^\text{TE}-t_0^\text{TE}) - \gamma_1(t-t_{B}^\text{TE}) & t_{B}^\text{TE} < t \leq t_{f}^\text{TE}
	\end{cases}
	\label{eq:TE1toll_1}
\end{equation}

\begin{equation}
	c_{\text{toll}}^{2,\text{TE1}}(t) =
	\begin{cases}
		\frac{1}{\wp} \beta_2 (t-t_0^\text{TE}) & t_0^\text{TE} \leq t \leq t_{A}^\text{TE} \\
		\frac{1}{\wp} \beta_2 (t_{A}^\text{TE}-t_0^\text{TE}) + \beta_2(t-t_{A}^\text{TE}) & t_{A}^\text{TE} < t \leq \tau^* \\
		\frac{1}{\wp} \beta_2 (t_{A}^\text{TE}-t_0^\text{TE}) + \beta_2(\tau^*-t_{A}^\text{TE}) - \gamma_2(t-\tau^*) & \tau^* < t \leq t_{B}^\text{TE} \\
		\frac{1}{\wp} \left( \beta_2 (t_{A}^\text{TE}-t_0^\text{TE}) - \gamma_2 (t-t_{B}^\text{TE}) \right) & t_{B}^\text{TE} < t \leq t_{f}^\text{TE}
	\end{cases}
	\label{eq:TE1toll_2}
\end{equation}

We note that the toll costs perfectly substitute for the queuing costs in the no-toll equilibrium case (as a function of the arrival time). Thus the total costs under the presence of $\text{TE1}$ tolls is same as the total costs under no-toll equilibrium and the revenue obtained from $\text{TE1}$ tolls is equal to the queuing costs under no-toll equilibrium. However, queuing costs under no-toll case are not equal to the revenue from SO tolls.\footnote{In fact, the analysis of cost expressions in Appendix~\ref{appendix:costs} shows that queuing costs under no-toll are lower than SO toll revenue.} Thus, $\text{TE1}$ tolls are not revenue-neutral. We next modify $\text{TE1}$ tolls such that the new set of tolls generate same revenue as the SO tolls.

We start with the following property of the tolls at equilibrium.

 \begin{lemma}
  If the demand is inelastic, the departure times for each group at equilibrium is preserved under following modifications to the $\text{TE1}$ toll: (a) adding a constant to the $\text{TE1}$ toll for all departure times, or (b) increasing the toll by any positive amount for a group $k$ for the duration when the group is not traveling.\label{lem:equilPreserve}
 \end{lemma}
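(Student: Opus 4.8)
The plan is to argue directly from the equilibrium characterization in Definition~\ref{def:equil} together with the zero-queue property of the $\text{TE1}$ toll. At the $\text{TE1}$ equilibrium the queue vanishes throughout the rush hour, so $c_{\text{queue}}^k(t)=0$ and $\tau(t)=t$, and the total cost of a traveler in group $k$ departing at $t$ reduces to
\[
	c_{\text{total}}^{k,\text{TE1}}(t) = c_{\text{SD}}^k(t) + c_{\text{toll}}^{k,\text{TE1}}(t).
\]
Let $W_k$ denote group $k$'s equilibrium travel window and $C_k^{\text{eq}}$ its equilibrium cost. By Definition~\ref{def:equil}, this function equals $C_k^{\text{eq}}$ for $t\in W_k$ and is at least $C_k^{\text{eq}}$ for $t\notin W_k$. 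The whole argument then amounts to checking that each modification preserves this ``flat on $W_k$, weakly higher off $W_k$'' structure for every group, while leaving the departure pattern (and hence the zero queue) unchanged; since the equilibrium is unique, showing that the original distribution remains an equilibrium is exactly what ``preserved'' requires.

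For part (a), I would add a constant $\delta_k\geq 0$ to group $k$'s toll at all $t$, so the new total cost is $c_{\text{total}}^{k,\text{TE1}}(t)+\delta_k$. A uniform vertical shift preserves both the equalities on $W_k$ and the inequalities off $W_k$ (now at level $C_k^{\text{eq}}+\delta_k$), so no traveler strictly prefers to deviate and the same pattern remains an equilibrium. Here the inelastic-demand hypothesis does the essential work: under elastic demand a uniform cost increase could drive some travelers out of the peak, but with $N_k$ fixed every traveler still departs and $W_k$ is unchanged.

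For part (b), I would add a nonnegative increment $\epsilon(t)$ to a single group $k$'s toll, with $\epsilon(t)=0$ for $t\in W_k$ and $\epsilon(t)\geq 0$ only on the complement, i.e.\ precisely when group $k$ does not travel. Then $c_{\text{total}}^{k}(t)$ is unchanged (equal to $C_k^{\text{eq}}$) on $W_k$ and can only increase off $W_k$, which strengthens the equilibrium inequality there; meanwhile the other group's toll $c_{\text{toll}}^{\bar k,\text{TE1}}$ is untouched because tolls are levied per group, so that group's problem is identical. Hence neither group deviates and the departure times persist.

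The step needing the most care is the internal consistency of the zero-queue assumption under the modified tolls. Because in each case the candidate departure pattern is exactly the $\text{TE1}$ pattern, flow conservation~\eqref{eq:qLengthDer} yields the same identically-zero queue, so the cost evaluations above are self-consistent, and a non-atomic traveler's deviation does not perturb the aggregate queue seen at any arrival time. Combined with inelastic demand conserving each group's population, this confirms that the modified tolls admit the original departure-time distribution as the equilibrium.
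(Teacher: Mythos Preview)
Your proposal is correct and takes essentially the same approach as the paper: both argue that, under each modification, no traveler has an incentive to switch departure times, with inelastic demand ensuring travelers cannot simply drop out. Your version is more carefully formalized---explicitly invoking Definition~\ref{def:equil}, checking the ``flat on $W_k$, weakly higher off $W_k$'' structure, and verifying zero-queue self-consistency---whereas the paper's proof states the no-incentive conclusion informally in a few sentences.
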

 
 \begin{proof}
 Since the demand is inelastic, travelers have no other choice but to travel regardless of the toll value. We look at individual cases. First, if the tolls are increased uniformly across all time periods (including time periods outside the range $(t_0,t_f)$), then the order of departure and the departure times are  not impacted as travelers have no incentive to choose other departure times. Second, if the tolls are only increased in the time period when a traveler is not traveling and only for that group, then travelers are still comfortable with their current arrival times and have no incentive to switch to other arrival times, thus preserving the equilibrium departure times.
 \end{proof}
 
 Lemma \ref{lem:equilPreserve} provides ideas to modify $\text{TE1}$ tolls so that we achieve the revenue of SO tolls. %
 Let us parameterize the $\text{TE1}$ tolls in Equations \eqref{eq:TE1toll_1}--\eqref{eq:TE1toll_2} considering toll-rate escalator $\wp$ as a general parameter, resulting in a general toll $c_{\text{toll}}^{k,\text{Gen}}(t,\wp)$ for each group $k$. 
 
 We propose a set of tolls, called $\text{TE2}$ tolls, as follows:  $c_{\text{toll}}^{1,\text{TE2}}(t) = c_{\text{toll}}^{1,\text{Gen}}(t,\bar{\wp})$ for $\bar{\wp} \geq 1$, and $c_{\text{toll}}^{2,\text{TE2}}(t) = c_{\text{toll}}^{2,\text{Gen}}(t,1)$. Equations~\eqref{eq:TE2toll_1}--\eqref{eq:TE2toll_2} provide the expression. Next, we show that $\text{TE2}$ tolls satisfy desired properties of time-equitable tolls.

 \begin{equation}
	c_{\text{toll}}^{1,\text{TE2}}(t) = 
	\begin{cases}
		\beta_1(t-t_0^\text{TE}) & t_0^\text{TE} \leq t \leq t_{A}^\text{TE} \\
		\beta_1(t_{A}^\text{TE}-t_0^\text{TE}) + \bar{\wp} \cdot \beta_1(t-t_{A}^\text{TE}) & t_{A}^\text{TE} < t \leq \tau^* \\
		\beta_1(t_{A}^\text{TE}-t_0^\text{TE}) + \bar{\wp} \cdot \left( \beta_1(\tau^*-t_{A}^\text{TE}) -\gamma_1(t-\tau^*)\right) & \tau^* < t \leq t_{B}^\text{TE} \\
		\beta_1(t_{A}^\text{TE}-t_0^\text{TE}) - \gamma_1(t-t_{B}^\text{TE}) & t_{B}^\text{TE} < t \leq t_{f}^\text{TE}
	\end{cases}
	\label{eq:TE2toll_1}
\end{equation}

\begin{equation}
	c_{\text{toll}}^{2,\text{TE2}}(t) =
	\begin{cases}
		\beta_2 (t-t_0^\text{TE}) & t_0^\text{TE} \leq t \leq  \tau^* \\
		\beta_2 (\tau^*-t_0^\text{TE}) - \gamma_2(t-\tau^*) & \tau^* \leq t \leq t_f^\text{TE}
	\end{cases}
	\label{eq:TE2toll_2}
\end{equation}

\begin{thm}
	$\text{TE2}$ tolls preserve order of departure, maintain zero queue, and are revenue-neutral
\end{thm}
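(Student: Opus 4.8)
The plan is to verify the three desiderata in turn, exploiting that the $\text{TE2}$ toll is a controlled modification of the $\text{TE1}$ toll, whose equilibrium is already understood. I would treat ``preserves order'' and ``maintains zero queue'' together, since both follow at once from showing that the no-toll departure pattern (group 1 on the margins $(t_0^\text{TE},t_{A}^\text{TE})\cup(t_{B}^\text{TE},t_f^\text{TE})$, group 2 in the centre $(t_{A}^\text{TE},t_{B}^\text{TE})$) remains an equilibrium with an empty queue under the $\text{TE2}$ toll. Revenue-neutrality I would handle separately by direct integration.

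For group 2, I would observe that fixing the escalator to $1$ makes $c_{\text{toll}}^{2,\text{TE2}}$ in \eqref{eq:TE2toll_2} the stand-alone Vickrey toll: its slope is $+\beta_2$ before $\tau^*$ and $-\gamma_2$ after, exactly cancelling the slopes $-\beta_2$ and $+\gamma_2$ of $c_{\text{SD}}^2$ once the queue is empty and $\tau(t)=t$. Hence $c_{\text{SD}}^2+c_{\text{toll}}^{2,\text{TE2}}$ is constant over all of $(t_0^\text{TE},t_f^\text{TE})$, so a group-2 traveller is indifferent among all queue-free arrival times; deviating into a group-1 window (already served at rate $D$) would push inflow above $D$ and create a queue, strictly raising that traveller's cost, so group 2 is content to fill the centre. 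For group 1, $\bar\wp$ appears in \eqref{eq:TE2toll_1} only on $(t_{A}^\text{TE},t_{B}^\text{TE})$, i.e.\ precisely where group 1 does not travel, while on group 1's own windows the toll is identical to $\text{TE1}$. On the early (resp.\ late) centre segment the slope of $c_{\text{SD}}^1+c_{\text{toll}}^{1,\text{TE2}}$ equals $(\bar\wp-1)\beta_1\ge 0$ (resp.\ $(1-\bar\wp)\gamma_1\le 0$), and the escalated terms vanish at $t_{A}^\text{TE}$ and $t_{B}^\text{TE}$ by the identity $\beta_1(\tau^*-t_{A}^\text{TE})=\gamma_1(t_{B}^\text{TE}-\tau^*)$ from the footnote to Theorem~\ref{thm:propTravTEtoll}; hence for every $\bar\wp\ge 1$ group 1's cost in the centre never drops below its equilibrium margin cost $\beta_1(\tau^*-t_0^\text{TE})$, so group 1 has no incentive to enter the centre (for $\bar\wp\ge\wp$ this is the increase-during-non-travel case of Lemma~\ref{lem:equilPreserve}(b); the slope check also covers $1\le\bar\wp<\wp$). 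The no-toll pattern is therefore an equilibrium, giving the preserved order and, since inflow equals $D$ throughout, the empty queue and the endpoint matches $t_0^\text{TE}=t_0^\text{NO}$, $\tau(t_f^\text{TE})=\tau(t_f^\text{NO})$.

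For revenue-neutrality I would first note that, because $\bar\wp$ enters $c_{\text{toll}}^{1,\text{TE2}}$ only where group 1 does not travel, it contributes nothing to collected revenue; thus $TRC^{\text{TE2}}$ is independent of $\bar\wp$ and the claim can be checked for the entire family at once. With an empty queue the inflow is $D$ everywhere, so each scheme's revenue equals $D$ times the area under its (group-wise) toll profile. I would integrate the piecewise-linear $\text{TE2}$ profile---group 1's triangle on each margin plus group 2's trapezoid over the centre of width $N_2/D$---using $t_{A}^\text{TE}-t_0^\text{TE}=\tfrac{\eta}{1+\eta}\tfrac{N_1}{D}$, $\tau^*-t_{A}^\text{TE}=\tfrac{\eta}{1+\eta}\tfrac{N_2}{D}$, and so on, obtaining $TRC^{\text{TE2}}=\tfrac{\eta}{2(1+\eta)D}\big[\beta_1N_1^2+2\beta_2N_1N_2+\beta_2N_2^2\big]$. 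I would then integrate the uniform SO toll \eqref{eq:SOtoll} over the reversed configuration (group 2 on the margins, group 1 in a centre of width $N_1/D$, with $t_{A}^\text{SO}-t_0^\text{SO}=\tfrac{\eta}{1+\eta}\tfrac{N_2}{D}$) and show it returns the identical expression, so $TRC^{\text{TE2}}=TRC^{\text{SO}}$; the cost tables in Appendix~\ref{appendix:costs} furnish the same totals as a cross-check.

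The main obstacle is this revenue step, and its difficulty is conceptual as much as computational: the two tolls being compared live on configurations in which the groups occupy swapped positions (SO places the high-$\beta$ group 1 in the centre over a window of width $N_1/D$, whereas $\text{TE2}$ places group 2 there over a window of width $N_2/D$), so the two area integrals are assembled from different trapezoids with no term-by-term correspondence before simplification. The non-obvious fact to be extracted is that lowering group 2's escalator from $\wp$ to $1$---which raises each group-2 traveller's payment by the constant $(1-1/\wp)\beta_2(t_{A}^\text{TE}-t_0^\text{TE})$---supplies exactly the shortfall $TRC^{\text{SO}}-TTC^{\text{NO}}$ by which $\text{TE1}$ (whose revenue equals the no-toll queuing cost $TTC^{\text{NO}}$) missed the SO revenue; after the area computation this reduces to the algebraic fact that both profiles integrate to the common quadratic $\tfrac{\eta}{2(1+\eta)D}\big[\beta_1N_1^2+2\beta_2N_1N_2+\beta_2N_2^2\big]$.
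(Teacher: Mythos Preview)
Your proof is correct and covers all three claims. The main difference from the paper is in the revenue-neutrality step. The paper argues geometrically: it overlays the SO and $\text{TE2}$ toll profiles and matches the pieces pairwise---the four right triangles in the SO profile are congruent (same base length, same hypotenuse slope) to four right triangles in the $\text{TE2}$ profile, and the two remaining rectangles each have area $\tfrac{\beta_2\eta N_1N_2}{(1+\eta)D^2}$. You instead integrate each profile to the common closed form $\tfrac{\eta}{2(1+\eta)D}\big[\beta_1 N_1^{2}+2\beta_2 N_1 N_2+\beta_2 N_2^{2}\big]$. The geometric decomposition makes the swap between the two configurations visually transparent (each piece on one side finds its twin on the other without any algebra), while your algebraic route is self-contained and ties directly to the Appendix~\ref{appendix:costs} tables you invoke as a cross-check; both reach the same endpoint.

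For order preservation and zero queue, you essentially re-derive the content of Lemma~\ref{lem:equilPreserve} in situ via slope analysis, whereas the paper simply invokes that lemma (constant shift for group~2, increase-during-non-travel for group~1). Your slope check on the centre is in fact slightly sharper: Lemma~\ref{lem:equilPreserve}(b) literally applies only when the group-1 toll is \emph{raised} there, i.e.\ $\bar\wp\ge\wp$, and you correctly note that your argument also handles $1\le\bar\wp<\wp$. One minor conceptual remark: the aside that a group-2 deviation into group~1's window ``would push inflow above $D$ and create a queue'' is not how non-atomic equilibria work---a single traveller has measure zero and creates no queue---but the aside is also unnecessary, since you already have group~2's total cost constant over the whole peak, which alone makes group~2 indifferent and hence content with the centre assignment.
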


\begin{proof}
	For group 1, $\text{TE2}$ tolls are identical to $\text{TE1}$ tolls when a group departs early or late and only charges different values of tolls when a group does not travel. For group 2, $\text{TE2}$ tolls raise $\text{TE1}$ tolls uniformly by the amount $\left( 1- \frac{1}{\wp} \right) \beta_2 (t_{A}^\text{TE}-t_0^\text{TE})$ during the period group $2$ travels, and for all other times increases the toll arbitrarily. Thus, by Lemma \ref{lem:equilPreserve} the order of travel at equilibrium under $\text{TE2}$ tolls is same as $\text{TE1}$ tolls. Furthermore, because $\text{TE2}$ tolls are derived from $\text{TE1}$ tolls, they continue to maintain zero queue.

	What remains to show is that $\text{TE2}$ tolls generate same revenue as SO tolls, which we show using geometric arguments. The graphs of $\text{TE2}$ and $\text{SO}$ tolls are shown in Figure \ref{fig:SO_TE2_Rev} where the toll curve for each group is highlighted in solid lines of different thickness. Additionally, using the expressions of $t_{A}^{\text{SO}}$, $t_{A}^{\text{TE}}$, $t_{B}^{\text{SO}}$ and $t_{B}^{\text{TE}}$, the length of time scale can be expressed in terms of $N_k$ and $D$ as shown. Total revenue is the area under the bold curve multiplied by the discharge rate $D$.
	
	\begin{figure}[h]
		\centering
		\includegraphics[scale=0.7]{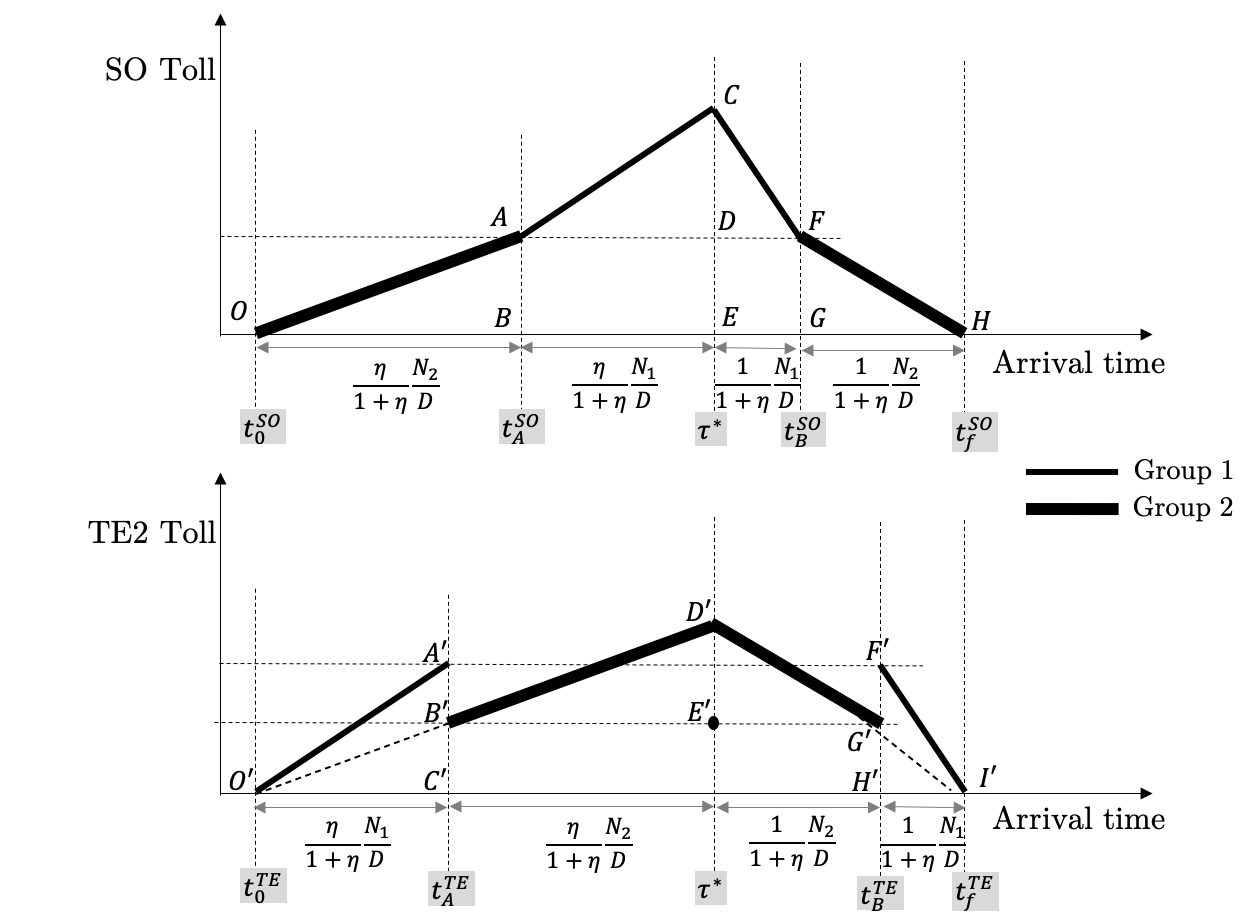}
		\caption{Toll curves for SO toll and $\text{TE2}$ toll for the two groups as a function of arrival times. The area under the curves is equal implying that both tolls generate same revenue}
		\label{fig:SO_TE2_Rev}
	\end{figure}
	
	We argue that the areas under the curve are equal. First, triangle $OAB$ in SO toll has the same area as triangle $B'E'D'$ in $\text{TE2}$ toll as both triangles are right angled and have the same base length and the slope of the hypotenuse. Similarly, the area of triangle $ACD$ is equal to the area of triangle $O'A'C'$, the area of triangle $CDF$ is equal to the area of triangle $F'I'H'$, and the area of triangle $FGH$ is equal to the area of triangle $D'E'G'$. The only remaining area is the area of rectangles $ABGF$ and $B'C'H'G'$ which are also equal as evaluated below:
	\begin{align*}
		\text{area}(ABGF) &= \text{length}(AB)*\text{length}(BG) \\
			&= \left( \frac{\beta_2 \eta}{1+\eta} \frac{N_2}{D}\right)*\left( \frac{N_1}{D} \right) \\
			&= \frac{\beta_2 \eta N_1 N_2}{(1+\eta)D^2}
	\end{align*}
	\begin{align*}
		\text{area}(B'C'H'G') &= \text{length}(B'C')*\text{length}(C'H') \\
			&= \left( \frac{\beta_2 \eta}{1+\eta} \frac{N_1}{D}\right)*\left( \frac{N_2}{D} \right) \\
			&= \frac{\beta_2 \eta N_1 N_2}{(1+\eta)D^2}
	\end{align*}
	
	Hence proved.
\end{proof}

We have established that $\text{TE2}$ tolls are time-equitable and generate the same schedule delay cost as the no-toll case and same revenue as the SO toll case. We include the complete cost expressions for the $\text{TE2}$ toll in the Appendix~\ref{appendix:costs}. There are other variants of time-equitable toll that can generate the same revenue as SO tolls; $\text{TE2}$ tolls are one example of this.

\subsection{Welfare analysis of $\text{TE1}$ and $\text{TE2}$ tolls}
In this section we compare the total cost for each group under $\text{TE1}$ and $\text{TE2}$ tolls relative to the case of no-toll. Under assumption A$\#9$, we consider that toll revenues are not rebated directly to the travelers (and are instead invested towards local community projects).

 For $\text{TE1}$ tolls, total costs for all travelers remain the same as the no-toll case. This is as expected, since the toll costs exactly substitute the travel time savings. However, for $\text{TE2}$ tolls, we show that the costs of group 2 are higher relative to the no-toll case, while the costs of group 1 are unimpacted.

\begin{prop}
	Under the case of no-rebate, costs of group 2 are higher under $\text{TE2}$ toll relative to the no-toll case, while the costs of group 1 remain the same.
	\label{prop:TE2tollworseoff}
\end{prop}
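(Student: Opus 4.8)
The plan is to reduce the entire comparison to the \emph{toll-cost} difference between the $\text{TE2}$ and $\text{TE1}$ schemes, exploiting the fact already established in the text that under $\text{TE1}$ tolls the total cost of each group equals its no-toll cost. Both $\text{TE1}$ and $\text{TE2}$ maintain a zero queue and, by the previous theorem together with Lemma~\ref{lem:equilPreserve}, preserve the no-toll departure order and the transition times $t_0^\text{TE}, t_{A}^\text{TE}, t_{B}^\text{TE}, t_f^\text{TE}$. Consequently the arrival times, and hence the schedule-delay costs $c_{\text{SD}}^k$ and the (vanishing) queue costs, are identical across the no-toll, $\text{TE1}$, and $\text{TE2}$ scenarios. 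Therefore for each group $k$ the only term that can change is the toll, so that $TC_k^\text{TE2} - TC_k^\text{NO} = TC_k^\text{TE2} - TC_k^\text{TE1} = D \int \left( c_{\text{toll}}^{k,\text{TE2}}(t) - c_{\text{toll}}^{k,\text{TE1}}(t) \right) dt$, where the integral is taken over the window in which group $k$ actually travels.

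For group $1$ I would compare Equation~\eqref{eq:TE2toll_1} with Equation~\eqref{eq:TE1toll_1} on group $1$'s travel windows $[t_0^\text{TE}, t_{A}^\text{TE}]$ and $(t_{B}^\text{TE}, t_f^\text{TE}]$. On these two intervals the two toll expressions are term-by-term identical; they differ only on $(t_{A}^\text{TE}, t_{B}^\text{TE}]$, through the replacement of $\wp$ by $\bar\wp$, and that is exactly the window in which group $1$ does not travel. Hence each group-$1$ traveler pays precisely the $\text{TE1}$ toll, so $TC_1^\text{TE2} = TC_1^\text{TE1} = TC_1^\text{NO}$, which is the second claim.

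For group $2$ I would compare Equation~\eqref{eq:TE2toll_2} with Equation~\eqref{eq:TE1toll_2} on group $2$'s travel window $(t_{A}^\text{TE}, t_{B}^\text{TE})$. The one genuine calculation is to check that, on this window, $c_{\text{toll}}^{2,\text{TE2}}(t) - c_{\text{toll}}^{2,\text{TE1}}(t)$ equals the constant $\left( 1 - \tfrac{1}{\wp} \right)\beta_2 (t_{A}^\text{TE} - t_0^\text{TE})$, independent of $t$: on $(t_{A}^\text{TE}, \tau^*]$ both tolls share the term $\beta_2(t - t_{A}^\text{TE})$ while the leading constant rises from $\tfrac{1}{\wp}\beta_2(t_{A}^\text{TE}-t_0^\text{TE})$ to $\beta_2(t_{A}^\text{TE}-t_0^\text{TE})$, and the same difference appears on $(\tau^*, t_{B}^\text{TE})$ after using the identity $\beta_2(\tau^*-t_{A}^\text{TE}) = \gamma_2(t_{B}^\text{TE}-\tau^*)$ from Theorem~\ref{thm:propTravTEtoll}. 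Since $\wp > 1$ this constant shift is strictly positive, so every group-$2$ traveler's total cost rises by exactly this amount; summing over the $N_2$ travelers gives $TC_2^\text{TE2} = TC_2^\text{NO} + N_2 \left(1-\tfrac{1}{\wp}\right)\beta_2(t_{A}^\text{TE}-t_0^\text{TE}) > TC_2^\text{NO}$, which is the first claim.

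There is no deep obstacle here: once the structural reduction is in place, each comparison is a short algebraic check. The only point demanding care is the justification that the schedule-delay and queue costs genuinely cancel --- that is, that $\text{TE2}$ truly inherits the $\text{TE1}$ (and no-toll) arrival pattern. This is precisely what the previous theorem and Lemma~\ref{lem:equilPreserve} supply, the latter guaranteeing that the constant upward shift on group $2$'s travel window and the $\bar\wp$-modification on group $1$'s non-travel window leave the equilibrium departure times unchanged.
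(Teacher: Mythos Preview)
Your argument is correct but follows a different route from the paper's. The paper computes the per-traveler equilibrium cost in each scenario directly: for group~1 it observes that the first departing traveler (at $t_0$) faces only schedule delay in both the no-toll and $\text{TE2}$ cases, and since $t_0$ is the same, so is the cost; for group~2 it evaluates the $\text{TE2}$ cost at $\tau^*$ (pure toll, equal to $\beta_2(\tau^*-t_0^{\text{TE}})$), pulls the no-toll per-traveler cost from the Arnott--de Palma--Lindsey formula, and then runs a short chain of inequalities using $f_2<1$ and $\alpha_1/\beta_1>\alpha_2/\beta_2$ to conclude $TC_2^{\text{NO}}\le TC_2^{\text{TE2}}$. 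Your approach instead pivots through $\text{TE1}$: since $TC_k^{\text{TE1}}=TC_k^{\text{NO}}$ is already established, and since $\text{TE1}$ and $\text{TE2}$ share the arrival pattern by Theorem~2 and Lemma~\ref{lem:equilPreserve}, the whole question collapses to the toll difference on each group's travel window, which is zero for group~1 and the positive constant $\bigl(1-\tfrac{1}{\wp}\bigr)\beta_2(t_A^{\text{TE}}-t_0^{\text{TE}})$ for group~2 (the very shift already identified in the proof of Theorem~2). Your route is more structural and avoids the algebraic inequality chain entirely; the paper's route, by contrast, produces the explicit per-traveler cost formulas as a by-product, which feed into the later numerical comparisons. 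One minor remark: on $(\tau^*,t_B^{\text{TE}})$ the identity $\beta_2(\tau^*-t_A^{\text{TE}})=\gamma_2(t_B^{\text{TE}}-\tau^*)$ is not actually needed for the toll-difference computation---the constant shift drops out directly from the leading terms of Equations~\eqref{eq:TE1toll_2} and~\eqref{eq:TE2toll_2}.
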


\begin{proof}
 Since the same proportion of group 1 travelers travel at the edge of the peak period in both the $\text{TE2}$ and the no-toll case, and since the start and end of peak period is the same, the cost of any traveler in group 1 is identical across both cases and equals $\frac{\beta_1 \eta}{1+\eta} \frac{N_1}{s}$. Thus, group 1 faces no extra cost relative to the no-toll case.
 
 Cost of each traveler in group 2 under $\text{TE2}$ toll is same across all departure times in $(t_{A}^{\text{TE}},t_{B}^{\text{TE}})$. Evaluating the cost of a traveler arriving at time $\tau^*$ (who only incurs the toll cost) we have,
 \begin{align}
 	TC_{2}^{\text{TE2}} &= \beta_2 (\tau^* - t_{0}^{\text{TE}}) \\
 	 &= \beta_2 \left( \frac{ \eta}{1+\eta} \frac{N}{s} \right).
 \end{align}
	Cost of each traveler in group 2 under no toll is obtained from~\cite{arnott1987schedule} as follows:
	\begin{equation}
		TC_{2}^{\text{NO}} = \beta_1 \left[ \frac{\alpha_2}{\alpha_1} - \left( \frac{\alpha_2}{\alpha_1} - \frac{\beta_2}{\beta_1} \right) f_2 \right] \left( \frac{ \eta}{1+\eta} \frac{N}{s} \right),
	\end{equation}	
	
	where $f_2$ is the proportion of group 2 travelers (that is, $f_2=N_2/N$). We compare $\beta_2$ with $\beta_1 \left[ \frac{\alpha_2}{\alpha_1} - \left( \frac{\alpha_2}{\alpha_1} - \frac{\beta_2}{\beta_1} \right) f_2 \right]$ and show that the latter expression is always smaller. This can be observed by the following set of inequalities:
	
 	\begin{align*}
 		& f_2 < 1 &(\because N_2 < N) \\
 		\Rightarrow \qquad & \left( \frac{\alpha_1}{\beta_1} - \frac{\alpha_2}{\beta_2} \right) f_2 \leq \left( \frac{\alpha_1}{\beta_1} - \frac{\alpha_2}{\beta_2} \right) & (\because \alpha_1/\beta_1 > \alpha_2/\beta_2) \\
 		\Rightarrow \qquad & \left( \frac{\alpha_1}{\beta_1} - \frac{\alpha_2}{\beta_2} \right) f_2 + \frac{\alpha_2}{\beta_2} \leq \frac{\alpha_1}{\beta_1} & \\
 		\Rightarrow \qquad & \beta_1 \left( \frac{\beta_2}{\beta_1} - \frac{\alpha_2}{\alpha_1} \right) f_2 + \frac{\alpha_2 \beta_1}{\alpha_1} \leq \beta_2 & (\text{multiply by }\beta_2\beta_1/\alpha_1) \\
 		\Rightarrow \qquad & \beta_1 \left[ \frac{\alpha_2}{\alpha_1} - \left(  \frac{\alpha_2}{\alpha_1}- \frac{\beta_2}{\beta_1} \right) f_2 \right] \leq \beta_2 &\\
 		\Rightarrow \qquad & TC_{2}^{\text{NO}}  \leq TC_{2}^{\text{TE2}}
 	\end{align*}
 Hence proved.
\end{proof}

Proposition \ref{prop:TE2tollworseoff} shows that, similar to SO toll, costs of group 2 are higher under $\text{TE2}$ toll relative to the no-toll equilibrium. This raises potential ``equity concerns." After all, if the point of the time-equitable toll is to be equitable, then a toll that eliminates congestion by charging relatively less time-flexible travelers (who have lower income) more than they believe their time waiting in traffic is worth while exactly substituting the cost of queuing for high income travelers with a toll is certainly suspicious. However, there are two ways in which the proposed tolling scheme is more equitable as becomes apparent when we compare the outcomes of the SO toll and the $\text{TE2}$ toll.

First, an increase in the cost of group $2$  under $\text{TE2}$ tolls are not obtained by making group $1$ better off. In both the SO and $\text{TE2}$ tolls group 2 experiences an increase in costs, but in the SO toll all of this welfare is transferred to group 1. Under $\text{TE2}$ tolls, although group 2 experiences higher costs, they are not compelled to transfer their costs into a benefit for other travelers. Should group 2 travelers decide that they do not have the funds to pay a toll in the center of the peak hour, they always have the option to forego the time benefits by choosing to travel outside of their group's time slot since their costs are flat over the entire peak hour.%

Second, because $\text{TE2}$ toll preserves the order, it prioritizes travelers who are relatively less time-flexible. $\text{TE2}$ tolls may not minimize the total costs for the system, yet, they preserve the order and generate vertical equity for time as a resource.

\section{Numerical Comparisons}
\label{sec:numerical}
In this section, we present numerical comparisons of costs under no toll, and the SO, $\text{TE1}$, and $\text{TE2}$ tolls. The analytic expressions of costs are provided in Appendix~\ref{appendix:costs}.

\subsection{Base case scenario}

First, we consider a base case scenario with the following parameter values. Consider two groups with 30 travelers each ($N_1=N_2=30$ and $N=60$). Let the values of $\alpha_1$ and $\alpha_2$ be $24 \aleph$ and $12\aleph$ \vpNew{per time unit}, respectively.\footnote{The $\aleph$ symbol is used to further underscore the break from empirical research and to guard against the temptation to see these values in terms of US dollars or other familiar currency.} We set the value of $\beta_1/\alpha_1 = 1/3$ and $\beta_2/\alpha_2=1/2$ to consider the case where low-income travelers in group 2 are relatively less time-flexible (since this is the case that highlights the time-poverty). We assume a constant value of the ratio of late and early arrival penalties as $4$ for both groups (that is, $\gamma_1/\beta_1 = \gamma_2/\beta_2=\eta = 4$). The bottleneck capacity is set as $6$ vehicles \vpNew{per time unit}. Without loss of generality, we consider desired arrival time $\tau^*=0$ for all vehicles.

First we compare the variation of different costs for each toll case with the arrival time at the destination as shown in Figure~\ref{fig:allTimeVaryingCostPlots}.

\begin{figure}
\centering
\subfloat[\label{fig:no_SD}]{\includegraphics[width=0.33\textwidth]{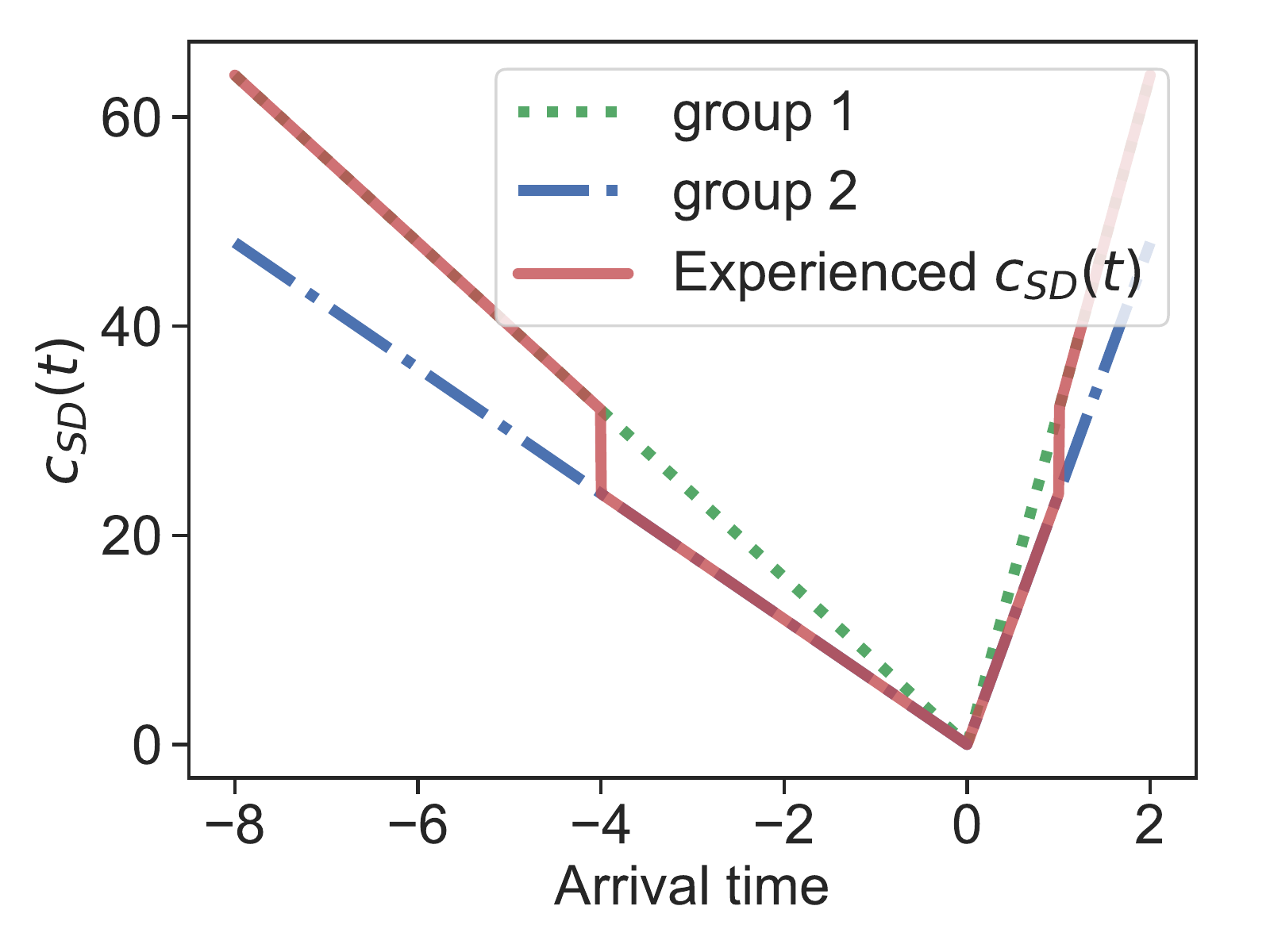}}\hfill
\subfloat[\label{fig:no_queue}] {\includegraphics[width=0.33\textwidth]{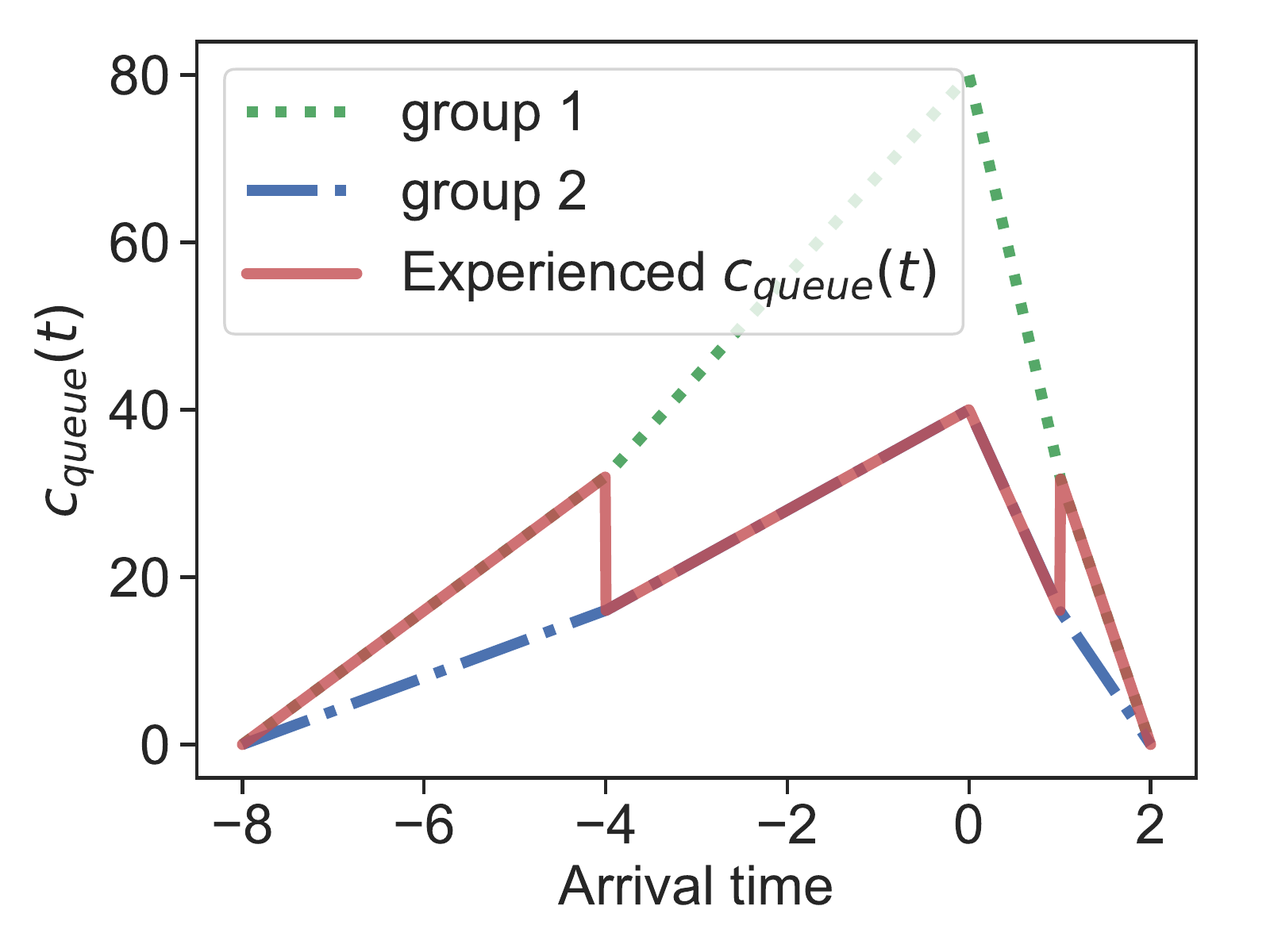}}\hfill
\subfloat[\label{fig:no_total}]{\includegraphics[width=0.33\textwidth]{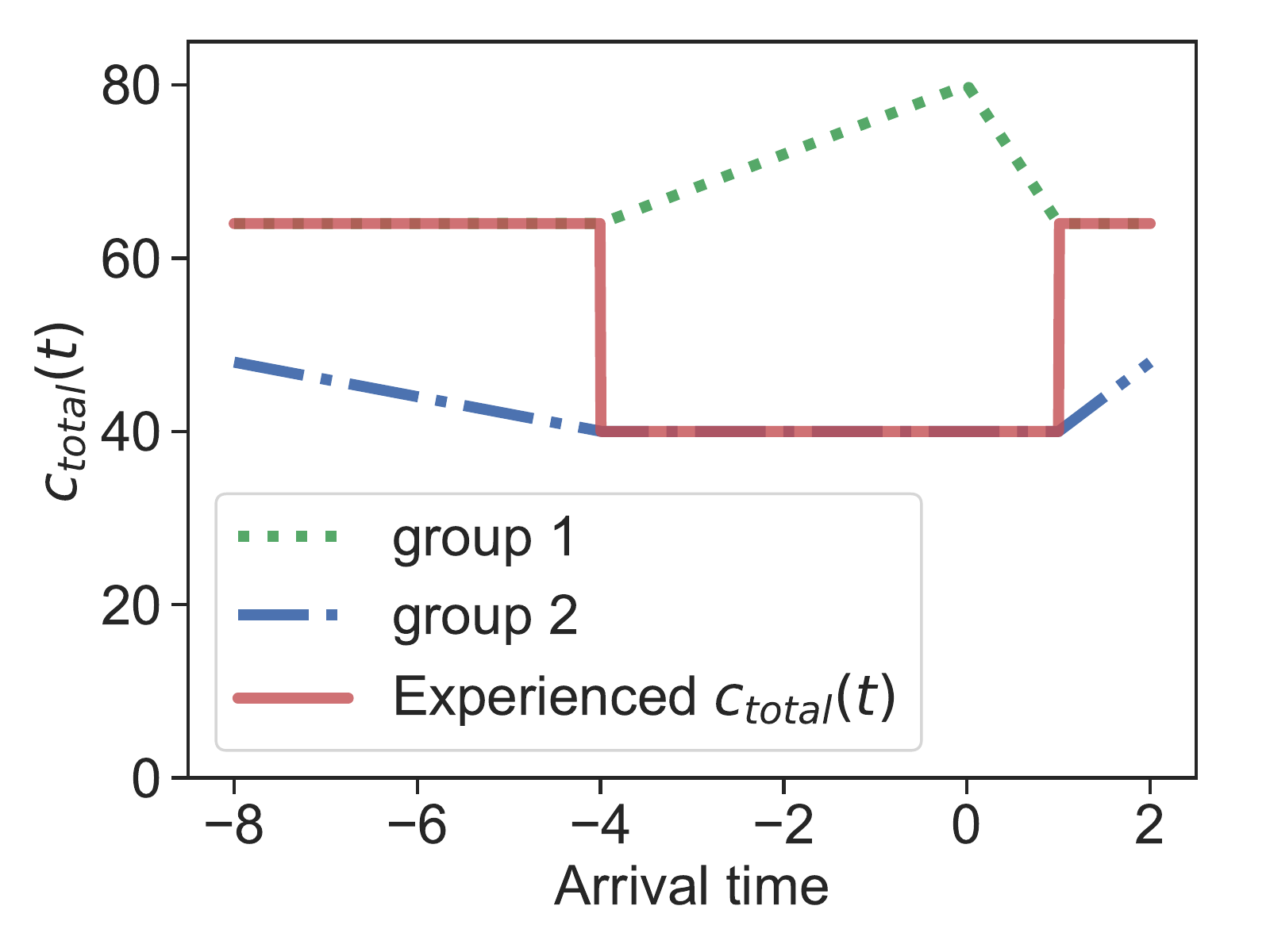}}\hfill
\subfloat[\label{fig:so_SD}]{\includegraphics[width=0.33\textwidth]{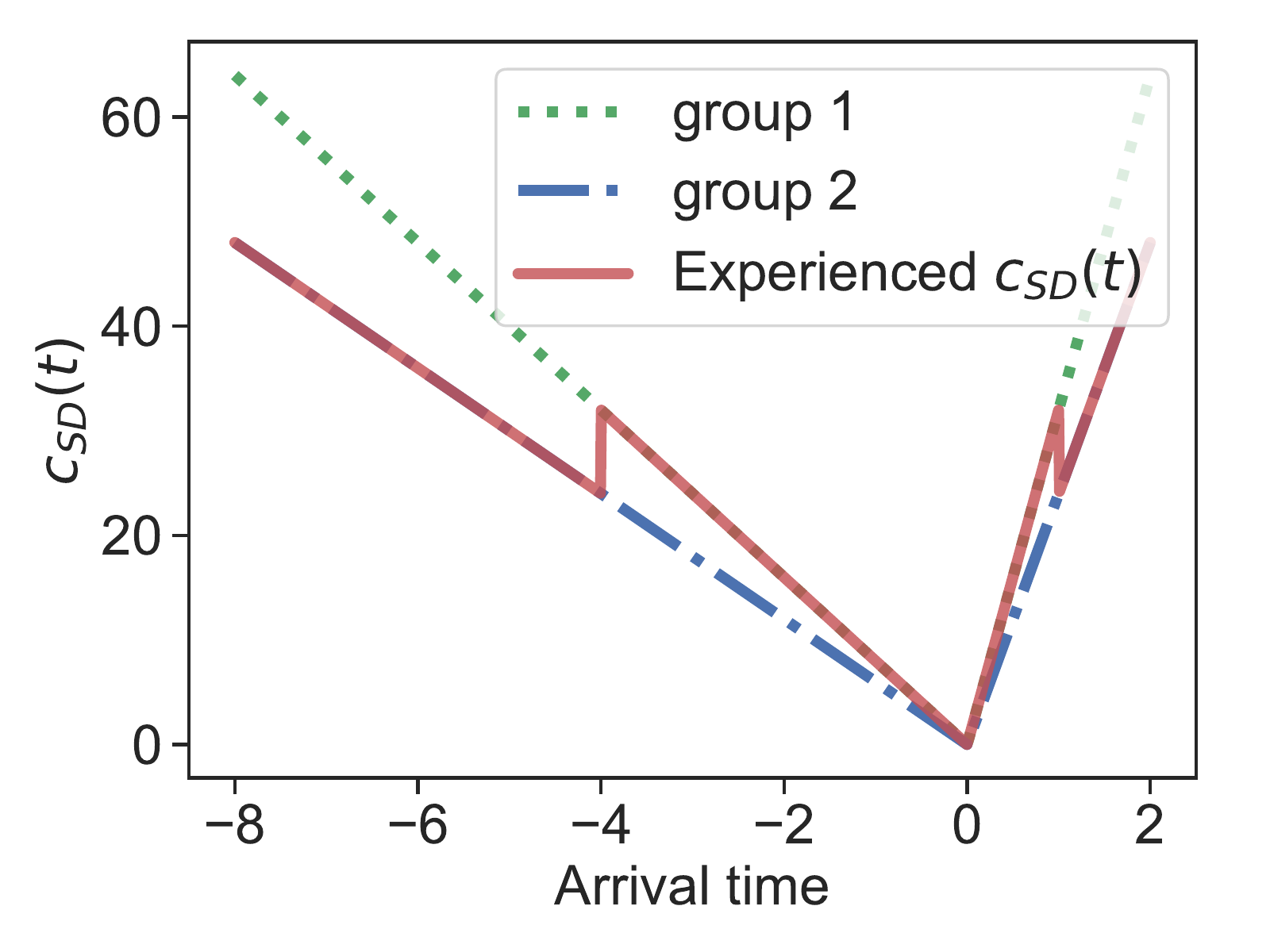}}\hfill
\subfloat[\label{fig:so_toll}] {\includegraphics[width=0.33\textwidth]{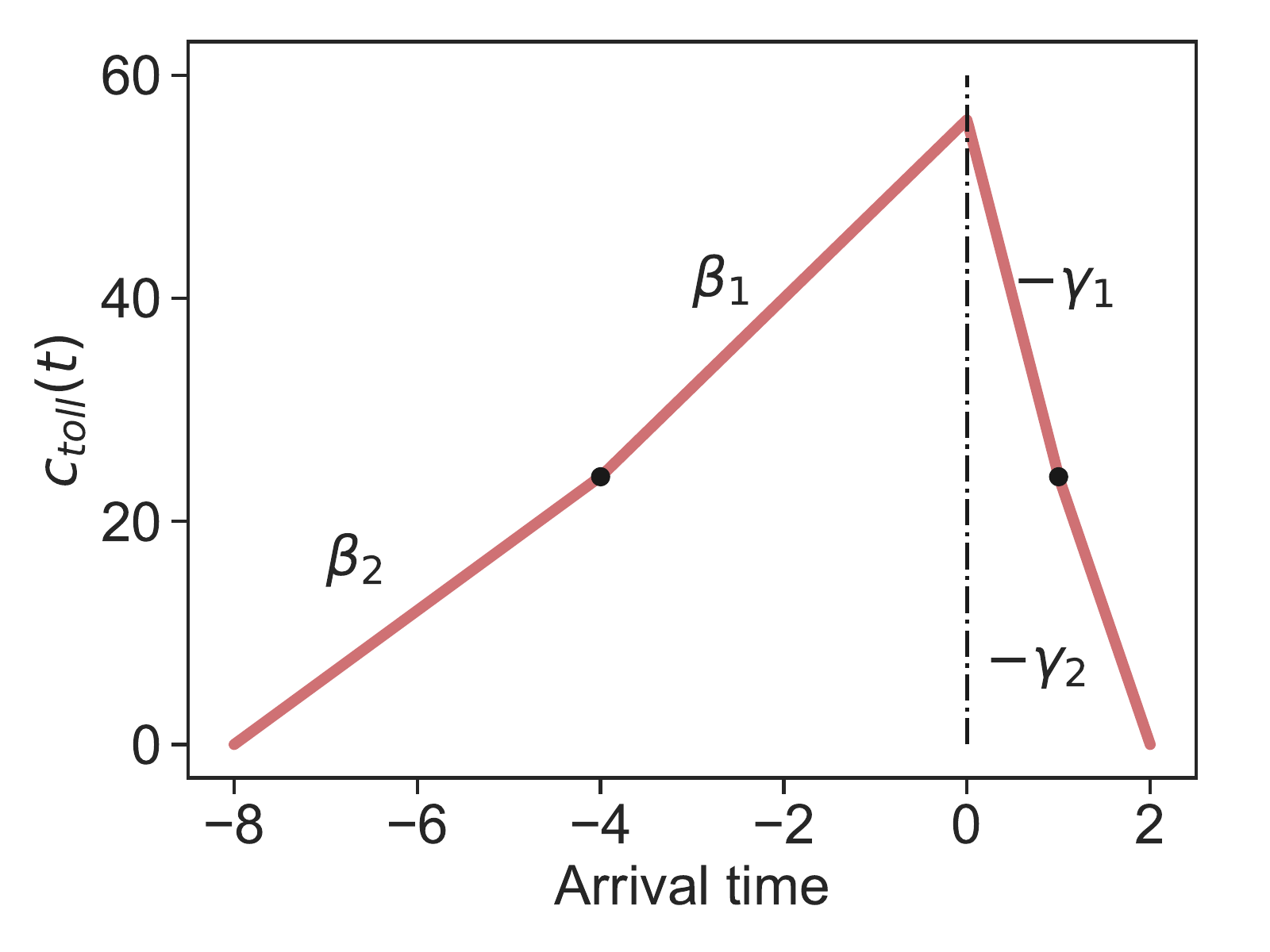}}\hfill
\subfloat[\label{fig:so_total}]{\includegraphics[width=0.33\textwidth]{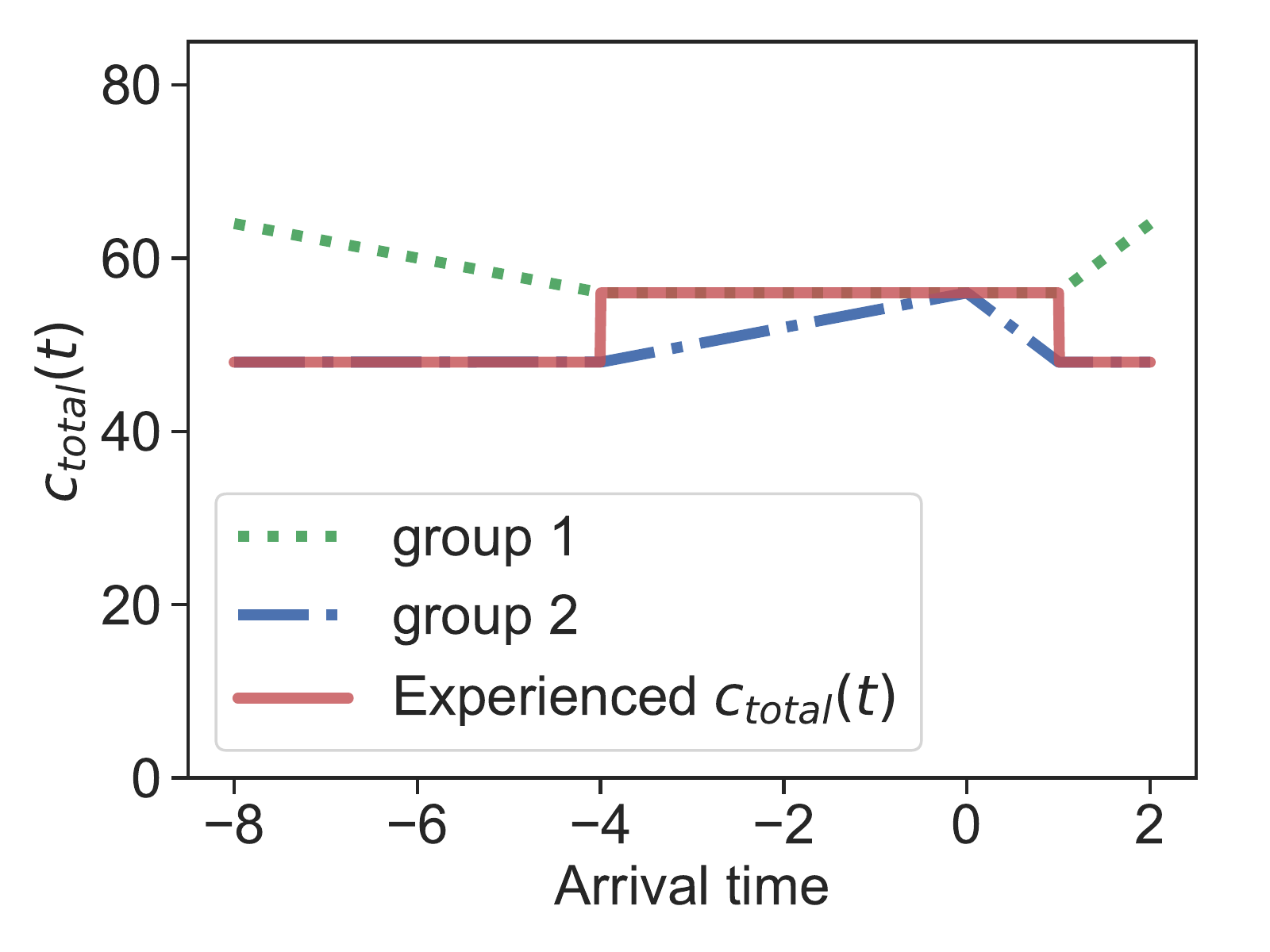}}\hfill
\subfloat[\label{fig:te1_sd}]{\includegraphics[width=0.33\textwidth]{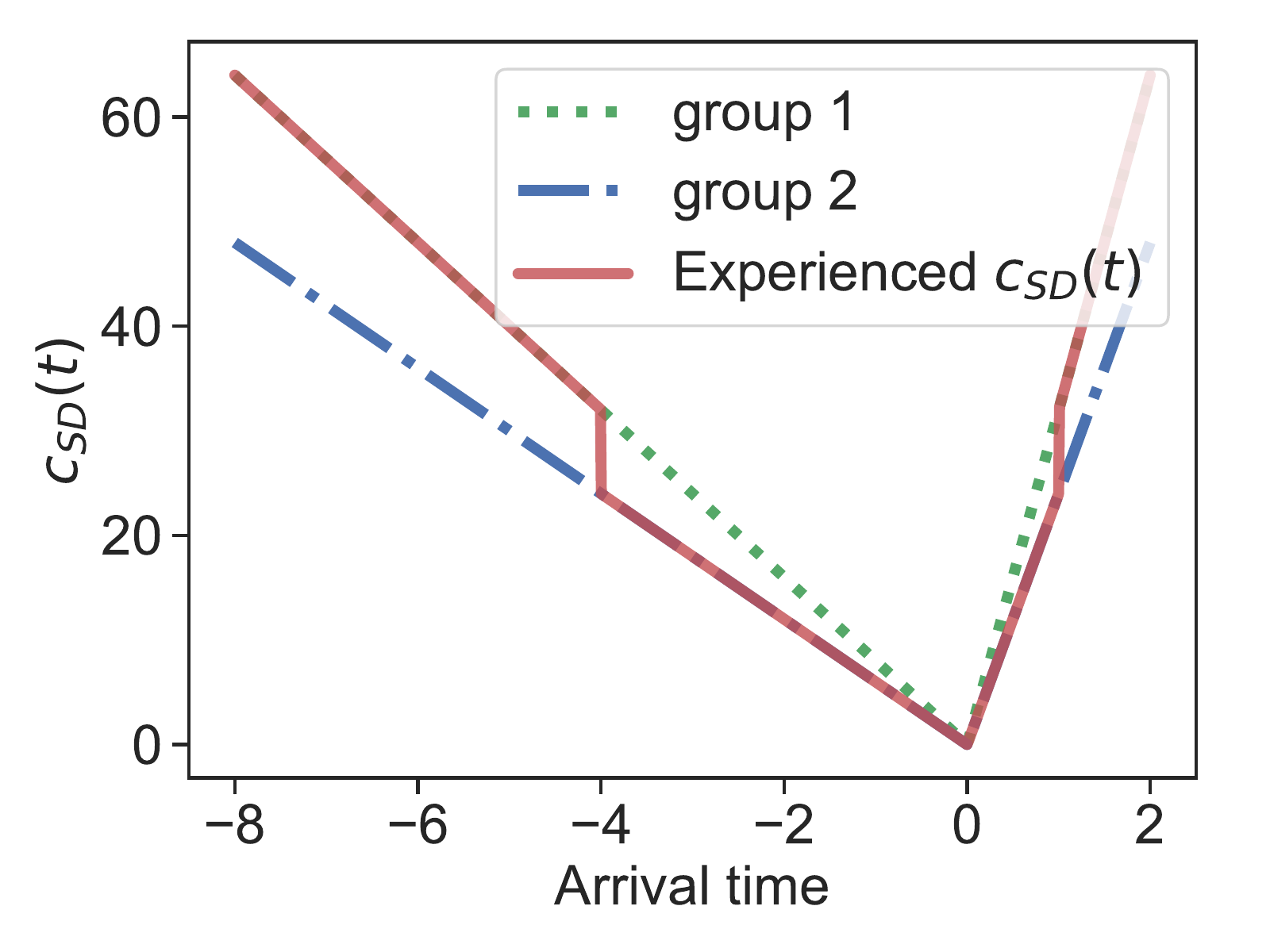}}\hfill
\subfloat[\label{fig:te1_toll}] {\includegraphics[width=0.33\textwidth]{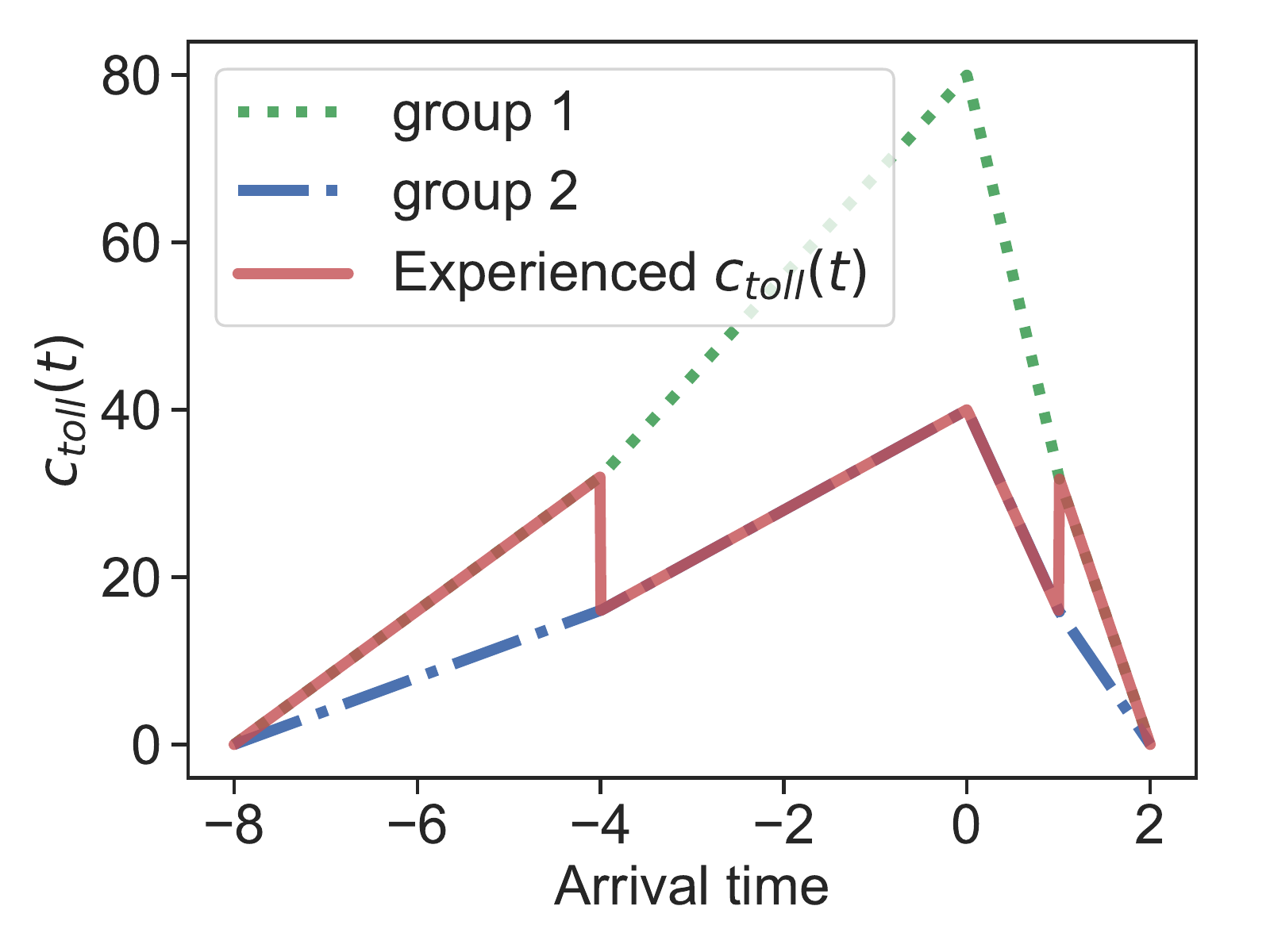}}\hfill
\subfloat[\label{fig:te1_total}]{\includegraphics[width=0.33\textwidth]{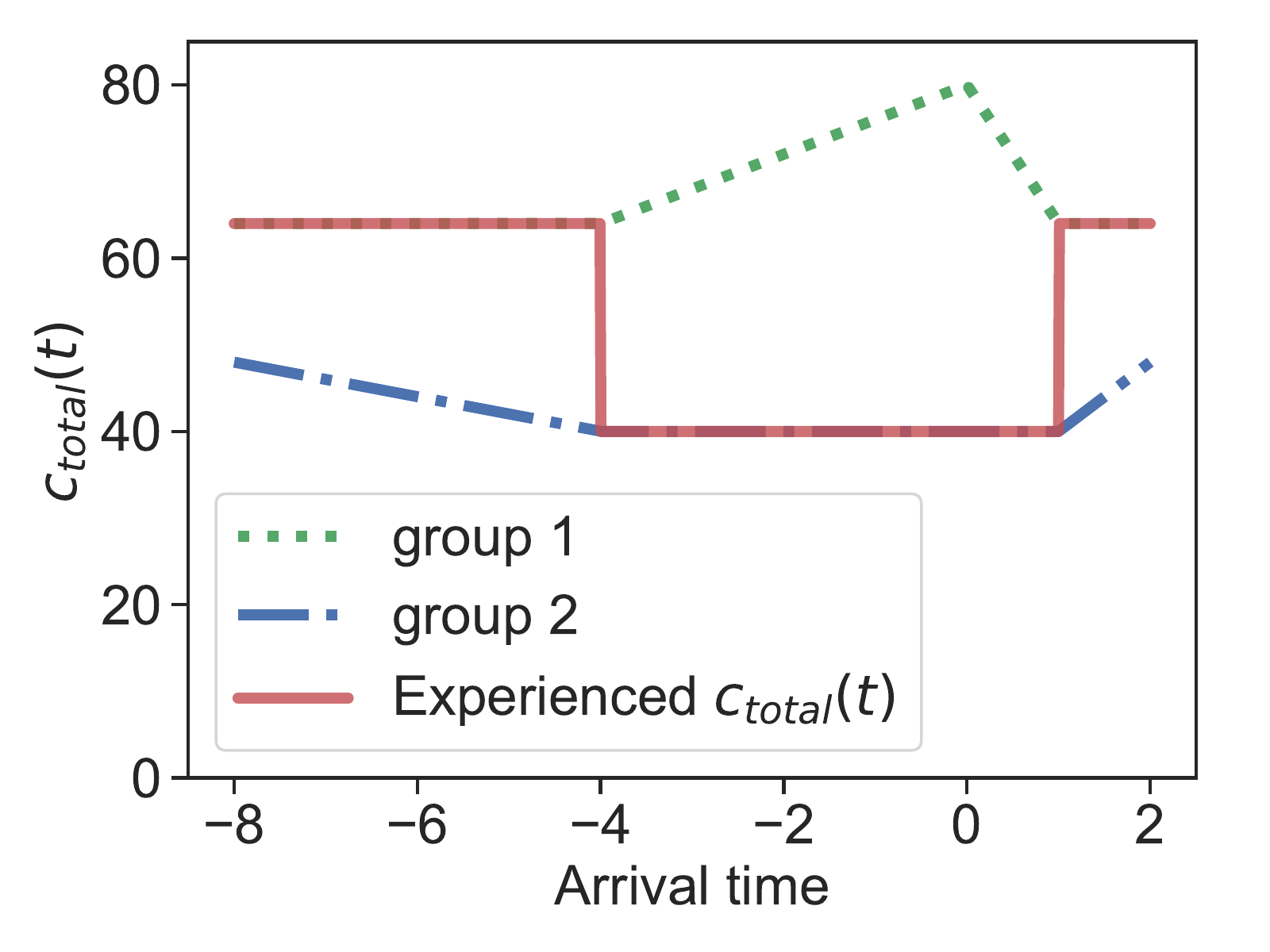}}\hfill
\subfloat[\label{fig:te2_sd}]{\includegraphics[width=0.33\textwidth]{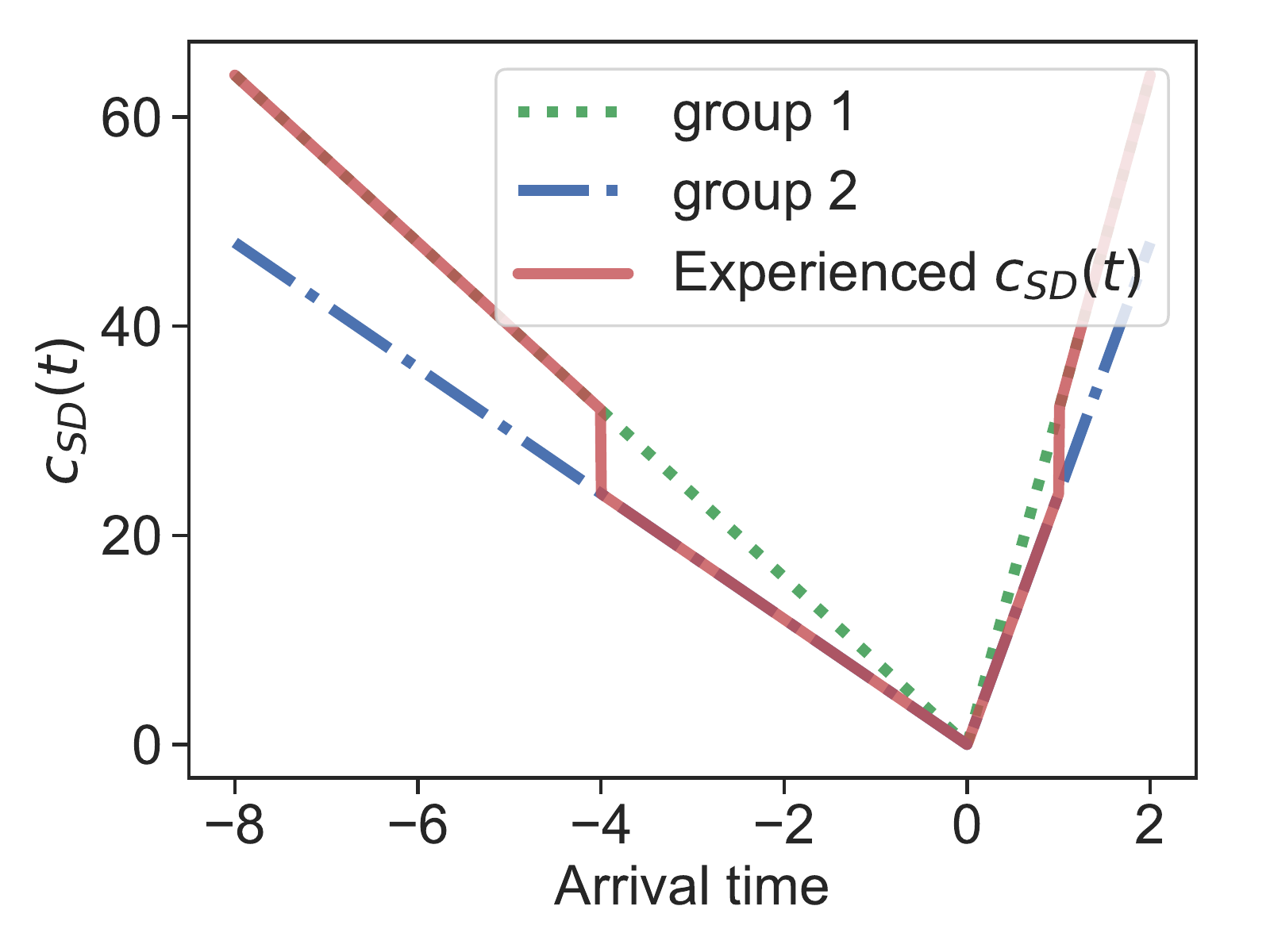}}\hfill
\subfloat[\label{fig:te2_toll}] {\includegraphics[width=0.33\textwidth]{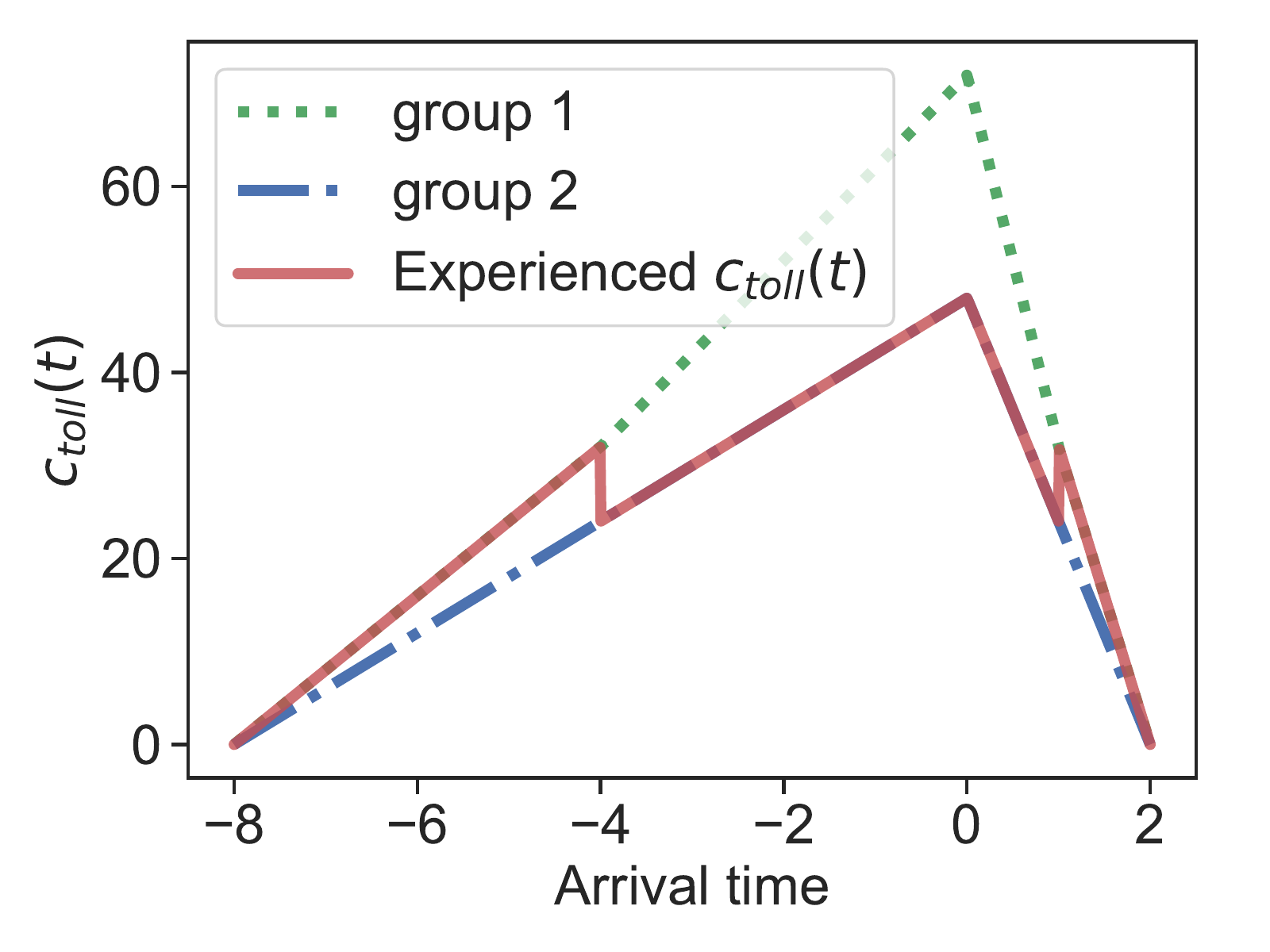}}\hfill
\subfloat[\label{fig:te2_total}]{\includegraphics[width=0.33\textwidth]{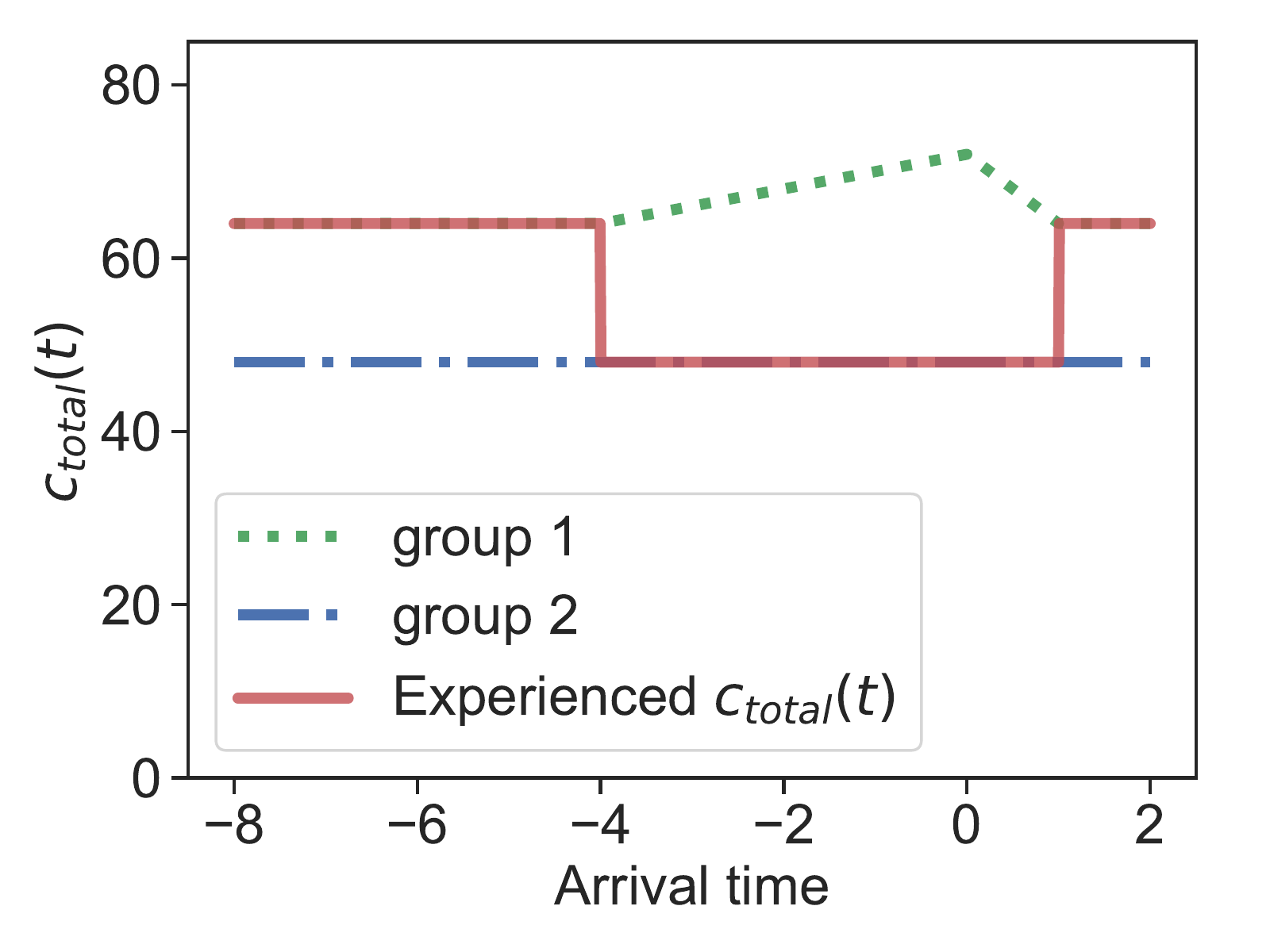}}\hfill
\caption{Plots of time-varying costs for the cases of no toll ((a),(b), and (c)), SO toll ((d),(e), and (f)), $\text{TE1}$ toll ((g),(h), and (i)), and $\text{TE2}$ toll ((j),(k), and (l)). For cases of no toll, TE1 toll, and TE2 toll group 1 travels at the boundaries of the peak period, while for the case of SO toll, the order is reversed.}
\label{fig:allTimeVaryingCostPlots}
\end{figure}

For the cases of SO, $\text{TE1}$, and $\text{TE2}$ tolls, the arrival time is same as the departure time. The plots \vpNew{in Figure 3} show the cost a traveler from group 1 or 2 would experience if they arrive the destination at a given time. The experienced cost (shown as an overlayed solid line) is the true experienced cost based on when the group travels. For the cases of no toll, $\text{TE1}$ toll, and $\text{TE2}$ toll, group 1 travels at the boundary of the peak period while group 2 travels in the middle. For the SO toll, the order is reversed. $\text{TE1}$ and $\text{TE2}$ tolls charge different toll for different groups. In contrast, \vpNew{the SO toll does not depend on which group is traveling during the time}; hence, the tolls in Figure~\ref{fig:so_toll} do not differ for each group. The toll plot for group 1 under $\text{TE2}$ toll in Figure~\ref{fig:te2_toll} considers the value of $\bar{\wp}=1.25$; however, the plot of experienced cost is same for any choice of $\bar{\wp}$ greater than 1.  Plots of $c_\text{toll}^k(t)$ for the no toll case and $c_\text{queue}^k(t)$ for the SO, $\text{TE1}$, and $\text{TE2}$ toll cases are not shown as their values are zero for both groups for the entire peak period.

The observations from Figure~\ref{fig:allTimeVaryingCostPlots} \vpNew{illustrate} our findings from previous sections. The $c_{\text{SD}}^k(t)$ costs for the $\text{TE1}$ and $\text{TE2}$ tolls are identical to the no-toll case since they preserve the order of departure. The $\text{TE1}$ toll exactly substitutes travel time cost for each group with the toll cost paid by the group (experienced cost plots in Figures~\ref{fig:no_queue} and~\ref{fig:te1_toll} are identical). In contrast, the $\text{TE2}$ toll increases the toll paid by group 2 to generate same revenue as the SO toll. This is apparent in the equal areas under the solid-line toll curves for Figures~\ref{fig:so_toll} and~\ref{fig:te2_toll}. Last, as expected, the total cost for each group is lowest during the time the group is traveling, since at equilibrium total costs are minimized for the chosen departure times.

Next, we quantify the benefit to the individuals and the community under different toll scenarios. Consider the schematic in Figure~\ref{fig:infograph} that shows the benefit and cost to the individuals and the community. We consider that tolls collected from individuals are not directly rebated but are invested back into the community.
\begin{figure}[H]
\centering
\includegraphics[scale=0.5]{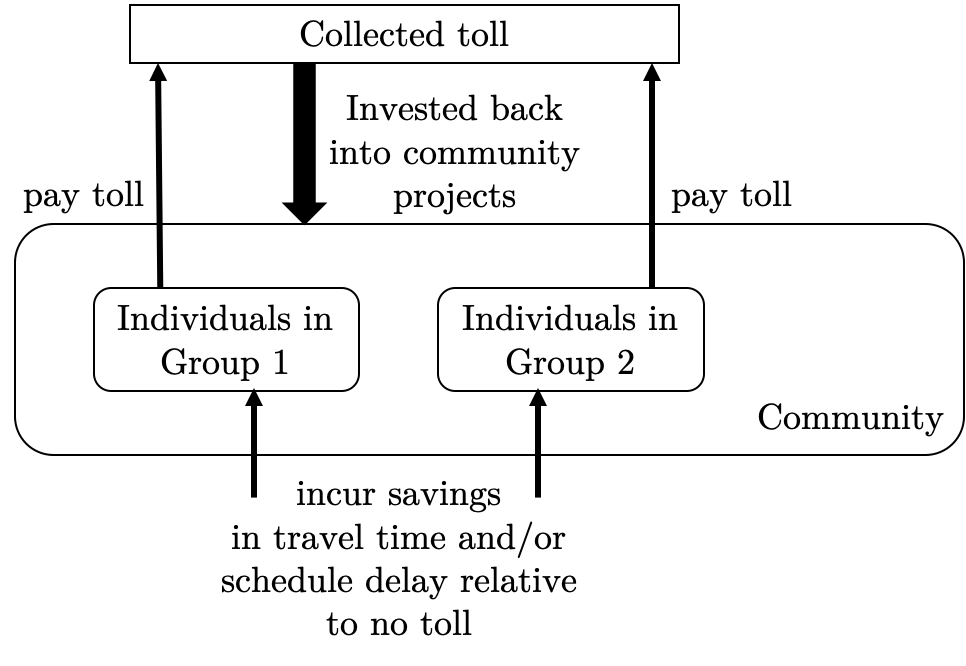}
\caption{Schematic showing the benefit and cost to individuals and community. Individuals' net benefit is the incurred savings relative to the paid toll. Community's net benefit is the total incurred savings and total revenue collected.}
\label{fig:infograph}
\end{figure}

 \paragraph{Benefit to the community:} Let us assume that the cost of installing the toll gantries (via open-road tolling methods) is same regardless of the type of tolling in place. Under this case, we can calculate the \textit{social benefit} to the community under a given toll scenario $x$, $x \in \{ \text{SO}, \text{TE1}, \text{TE2}\}$, as the sum of travel time savings for each individual relative to the no-toll case and the total revenue collected. That is,
 
\begin{align*}
	\text{Social benefit}_x &= \text{Savings relative to the no-toll case} + \text{Revenue collected} & \\
	&= \sum_{k \in K} \text{Savings}_k^x + TRC_k^{x} \qquad & \forall x \in \{ \text{SO}, \text{TE1}, \text{TE2}\}\\
	&= \sum_{k \in K} \left( SDC_k^{\text{NO}}+TTC_k^{\text{NO}}- SDC_k^{x}\right) + TRC_k^{x} \qquad & \forall x \in \{ \text{SO}, \text{TE1}, \text{TE2}\} \\
	&= \left( SDC^{\text{NO}}+TTC^{\text{NO}}- SDC^{x}\right) + TRC^{x} \qquad &\forall x \in \{ \text{SO}, \text{TE1}, \text{TE2}\}
\end{align*}
 where, $\text{Savings}^k_x = \left( SDC_k^{\text{NO}}+TTC_k^{\text{NO}}- SDC_k^{x}\right)$ is the total savings for  group $k$ under toll scenario $x$, $x \in \{ \text{SO}, \text{TE1}, \text{TE2}\}$, defined as the difference between the combined schedule delay and travel time costs for no toll case and the toll case $x$.

\paragraph{Benefit to an individual:} An individual pays toll in exchange for travel time delay and/or schedule delay savings relative to the no-toll case. Therefore, we define the net benefit received by an individual as follows: let $y_k^{x}$ be the ratio of social cost savings for a group $k$ relative to the no-toll case and the total toll paid by the group for a given toll scenario $x \in \{ \text{SO}, \text{TE1}, \text{TE2}\}$. That is,

\begin{align}
	y_k^{x} &= \frac{\text{Savings}_k^x}{TRC_k^{x}} \label{eq:yDef}\\
	&= \frac{SDC_k^{\text{NO}} + TTC_k^{\text{NO}} - SDC_k^{x} }{TRC_k^{x} } \qquad \forall x \in \{ \text{SO}, \text{TE1}, \text{TE2}\}, k \in K.
\end{align}

Intuitively, $y_k^{x}$ measures the benefit received by a group $k$ under the presence of toll regime $x$ relative to the toll paid by the group. A value of $y_k^{x}=1$ indicates that the group receives benefit in equal worth to the toll paid, which is an ideal scenario. A high difference between the values of $y_k^{x}$ for different groups is inequitable since a difference indicates that one group is worse off than the other. We define the difference between the maximum and minimum values of $y_k^{x}$ across different groups as the \textit{equity gap} for a given toll scenario. %
Next, we compare the social benefit and the equity gap for the three tolls.

Figure~\ref{fig:BaseCase_SocialCost} shows the variation of social benefit for all toll scenarios relative to the no-toll case, with the savings and toll paid split by each group. As expected, SO tolls generate the largest social benefit across all groups. This is attributed to the cost minimizing nature of SO tolls. While the savings across groups 1 and 2 are identical for $\text{TE1}$ and $\text{TE2}$ tolls, $\text{TE2}$ tolls generate $240\aleph$ higher social benefit that $\text{TE1}$ tolls. This is attributed to the revenue-neutral nature of $\text{TE2}$ tolls. While SO tolls generate the highest possible social benefit, the split of benefit across different groups is uneven leading to equity issues which we highlight next.

\begin{figure}[H]
	\centering
	\includegraphics[scale=0.5]{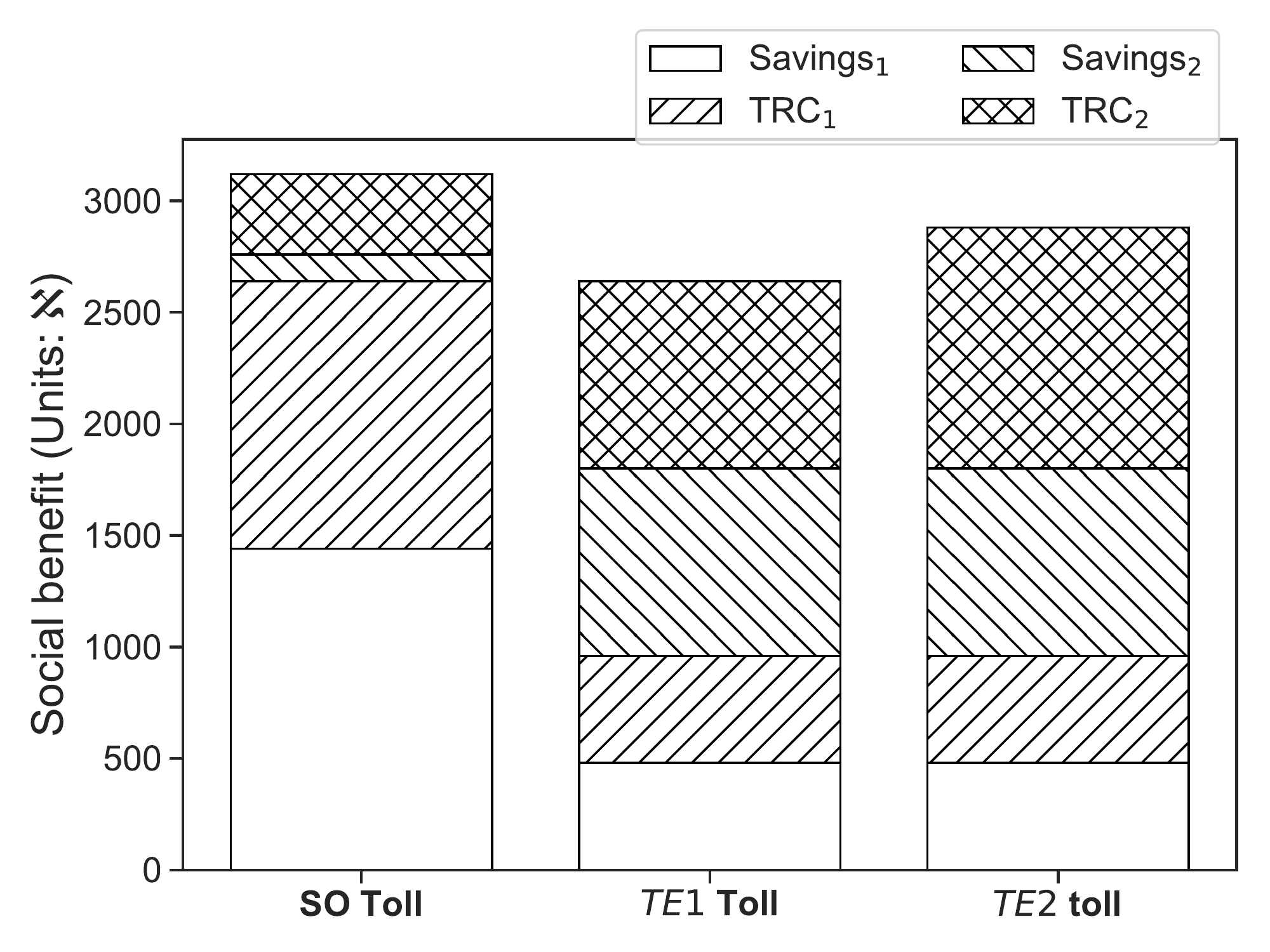}
	\caption{Social benefit across different toll scenarios for the base case scenario}
		\label{fig:BaseCase_SocialCost}
\end{figure}

 Figure~\ref{fig:costVsToll} shows the variation of $y_k^{x}$ for each group for the three toll scenarios. We observe that $\text{TE1}$ toll generates equitable benefit for both groups with  $y_1^{\text{TE1}}=y_2^{\text{TE2}}=1$. This is as expected: $\text{TE1}$ tolls preserve the order so the benefit received by a group is same as the group's travel-time cost under the no-toll scenario, which is identical to the toll paid by the group under the $\text{TE1}$ toll. As discussed in Section~\ref{subsec:SOtoll}, SO toll generates the highest differences between the two groups where group 1 with  $y_1^{\text{SO}}=1.2$ benefits more at the expense of group 2 with $y_2^{\text{SO}}=0.33$. $\text{TE2}$ tolls also generate differences between the group (because we make group 2 pay more than their savings to generate the same revenue as the SO toll), but the equity gap of $0.23$ (where, $y_1^{\text{TE2}}=1$ and $y_2^{\text{TE2}}=0.77$) is not as high as the SO toll.%

\begin{figure}[H]
	\centering
	\subfloat[\label{fig:y_SO}]{\includegraphics[width=0.33\textwidth]{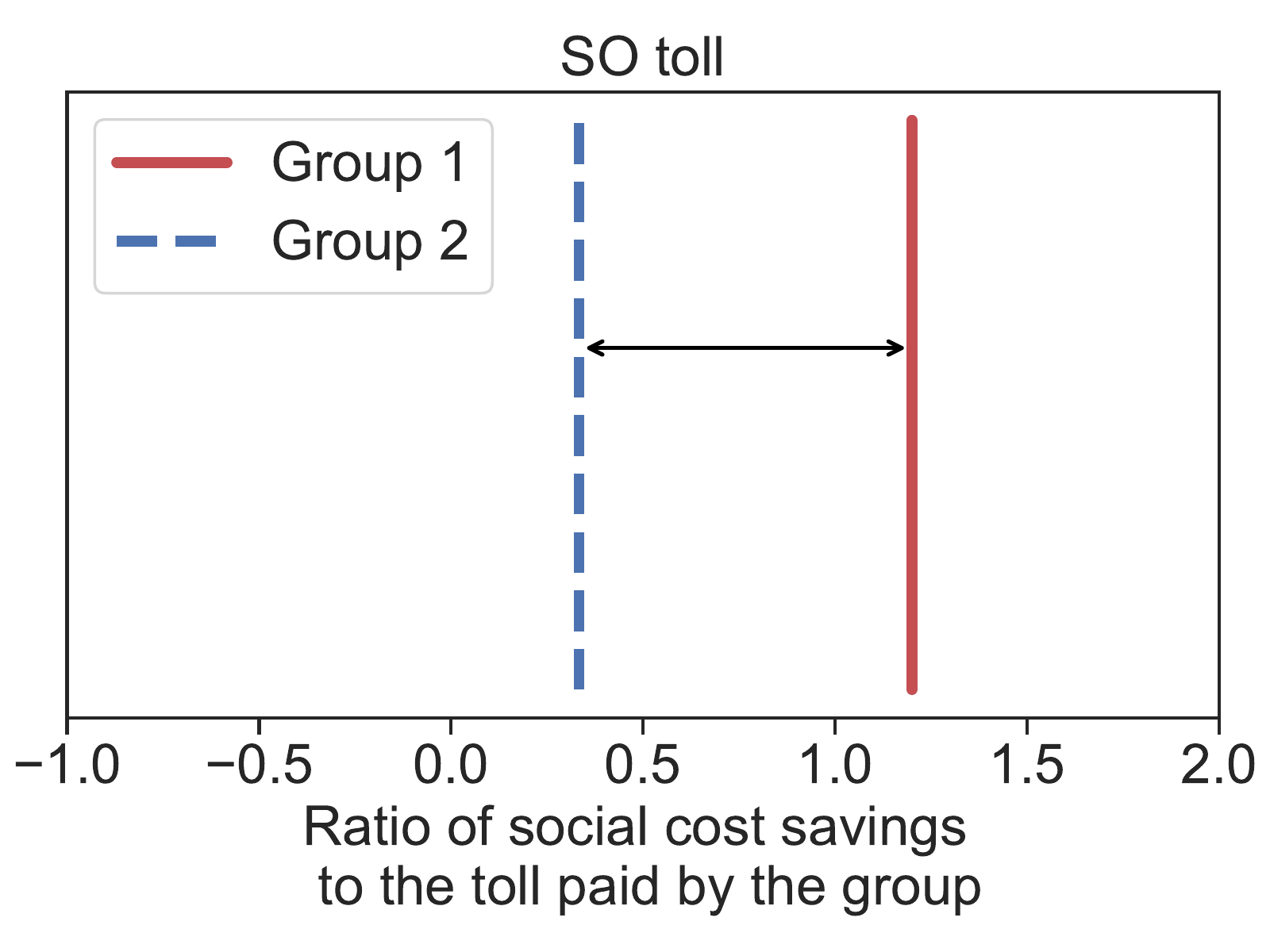}}\hfill
\subfloat[\label{fig:y_TE1}] {\includegraphics[width=0.33\textwidth]{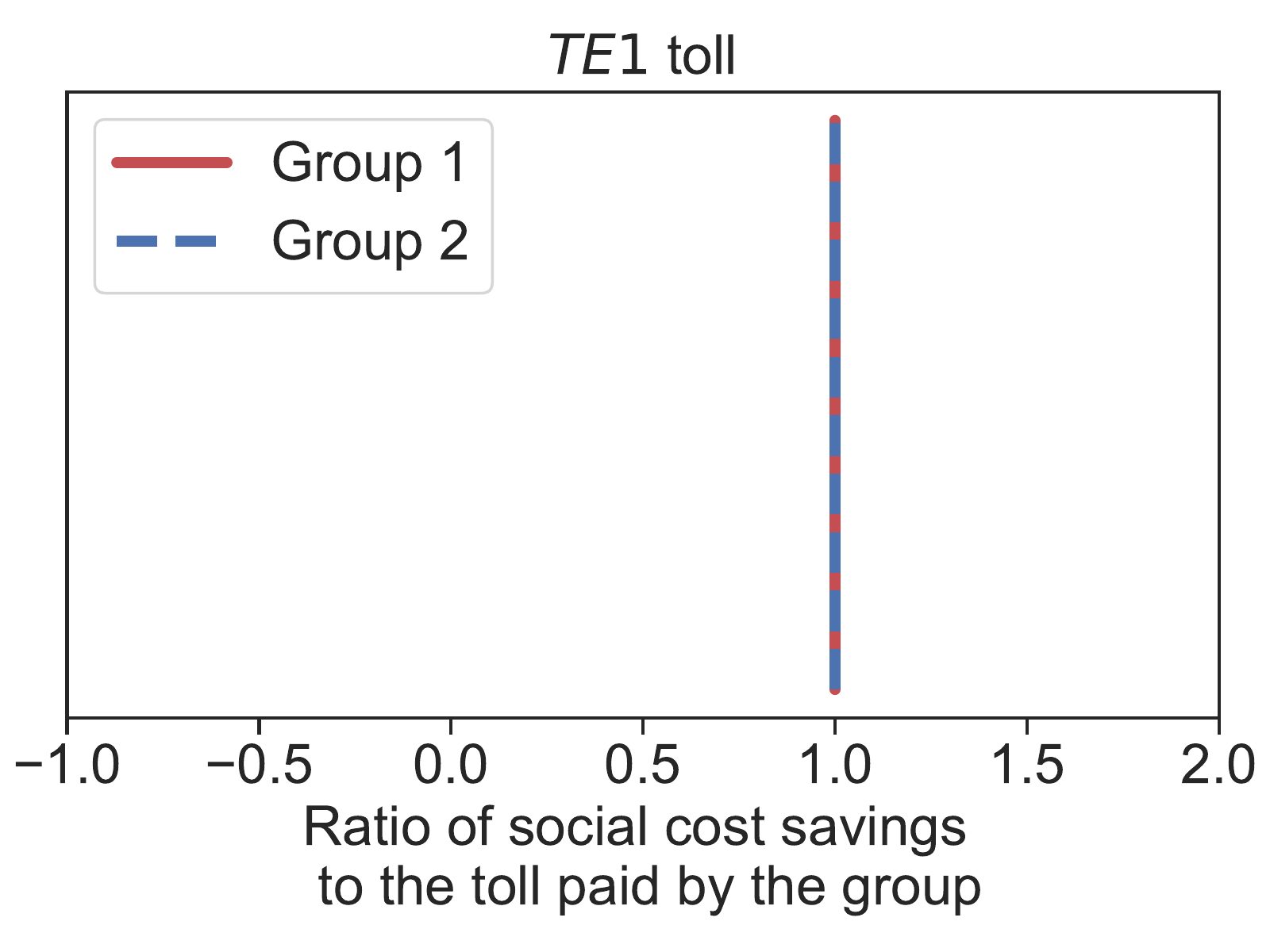}}\hfill
\subfloat[\label{fig:y_TE2}]{\includegraphics[width=0.33\textwidth]{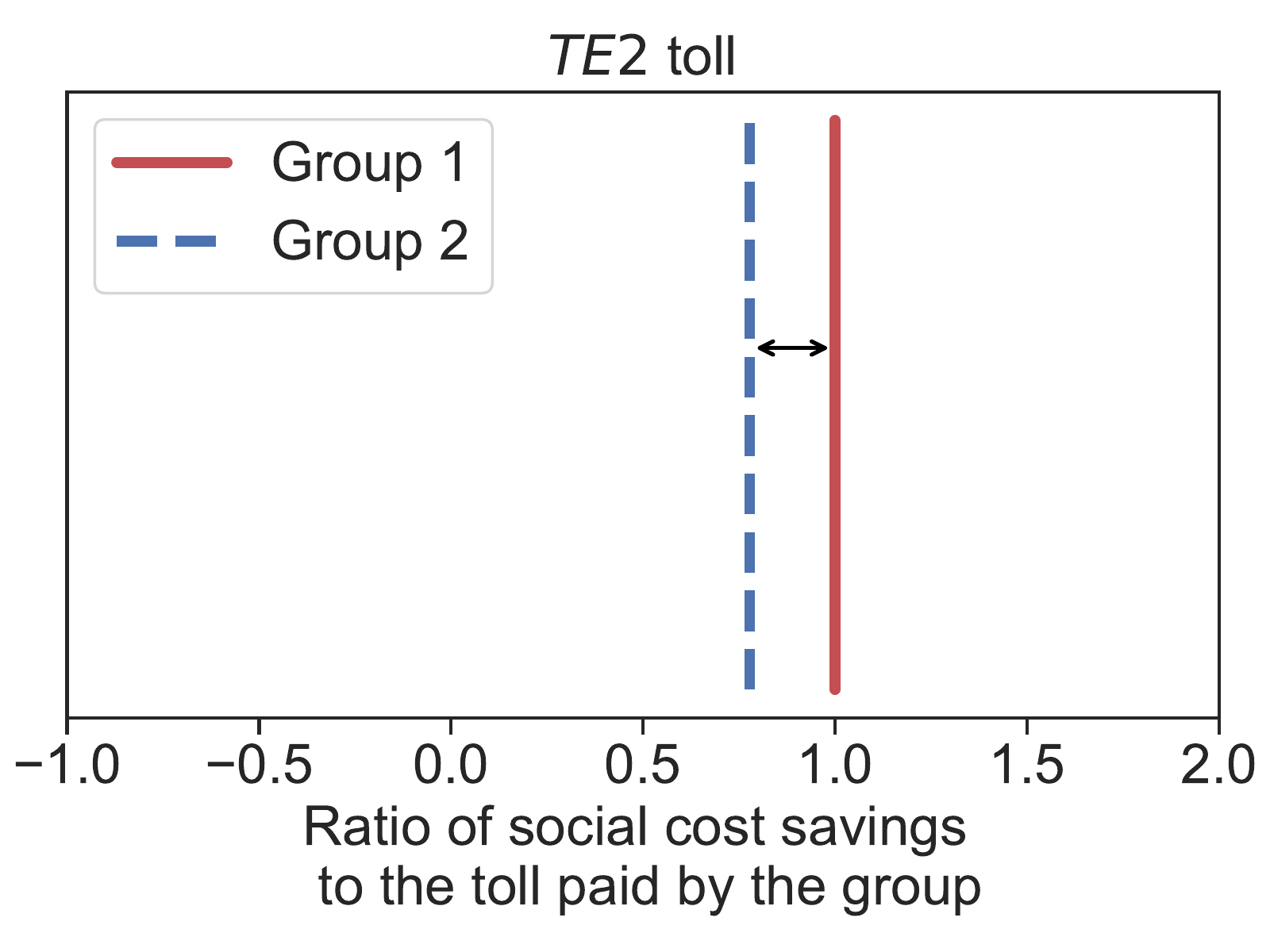}}\hfill
	\caption{Cost savings for each group relative to the toll paid. SO toll has the highest equity gap}
	\label{fig:costVsToll}
\end{figure}

Thus, we observe a tradeoff between the social benefit to the community and the equity gap between different groups. On one end, $\text{TE1}$ tolls generate a zero equity gap, but have the lowest social benefit, while on the other hand, SO tolls have the highest social benefit, but generate a high equity gap between the two groups. $\text{TE2}$ tolls provide a middle ground between the two extremes. Next, we conduct sensitivity analysis of social benefit and equity gap between the groups under different toll scenarios for varying values of parameter values.

\subsection{Sensitivity analysis}

In this section, we assume that the parameters of group 2 ($\beta_2$ and $\alpha_2$) and the number of travelers ($N$) are held constant as the base case. Variation in all other variables is considered. We only consider the case where $\beta_1>\beta_2$ so that the SO toll's equity issues are highlighted. Thus, $\beta_1/\beta_2$ is considered a variable with values higher than $1$. Similarly, given the criteria for original group numbering in Section 3, we consider $\alpha_1/\beta_1 > \alpha_2/\beta_2$. We conduct sensitivity of social benefit and equity gap against five variables: bottleneck capacity ($D$), ratio of late arrival penalty to early arrival penalty ($\eta$), proportion of group 2 travelers ($f_2$), the ratio $\beta_1/\beta_2$, and the ratio $(\alpha_1/\beta_1)/(\alpha_2/\beta_2)$. We vary one variable at a time keeping the other variables constant to their base values given by $D=6$ veh/min, $\eta=4$, $f_2=0.5$, $\beta_1/\beta_2 = 3/2$, and $(\alpha_1/\beta_1)/(\alpha_2/\beta_2) = 4/3$.

Figures~\ref{fig:sensD},~\ref{fig:sensEta},~\ref{fig:sensf},~\ref{fig:sensb1Byb2}, and~\ref{fig:sensa1b1Bya2b2} show the variation in equity gap and social benefit for varying values of $D$, $\eta$, $f_2$, $\beta_1/\beta_2$, and $(\alpha_1/\beta_1)/(\alpha_2/\beta_2)$, respectively. The equity gap is shown as a region where the upper boundary marks the value of $y_1^x$ for group 1, while the lower boundary of the region is the value of $y_2^x$ for group 2 for different toll scenarios $x \in \{ \text{SO}, \text{TE1}, \text{TE2} \}$. Although the equity gap region for the $\text{TE1}$ toll is a fixed line with a value of $1.0$, we still include this toll for completeness.

\begin{figure}[H]
	\centering
	\subfloat[\label{fig:equity_gap_D}]{\includegraphics[width=0.5\textwidth]{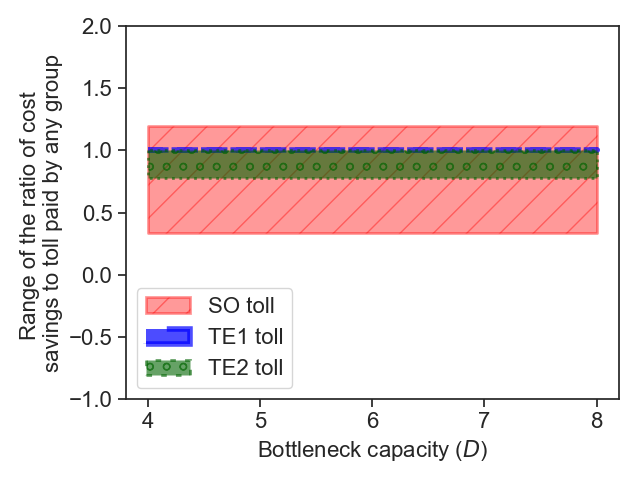}}\hfill
\subfloat[\label{fig:SC_saving_D}]{\includegraphics[width=0.5\textwidth]{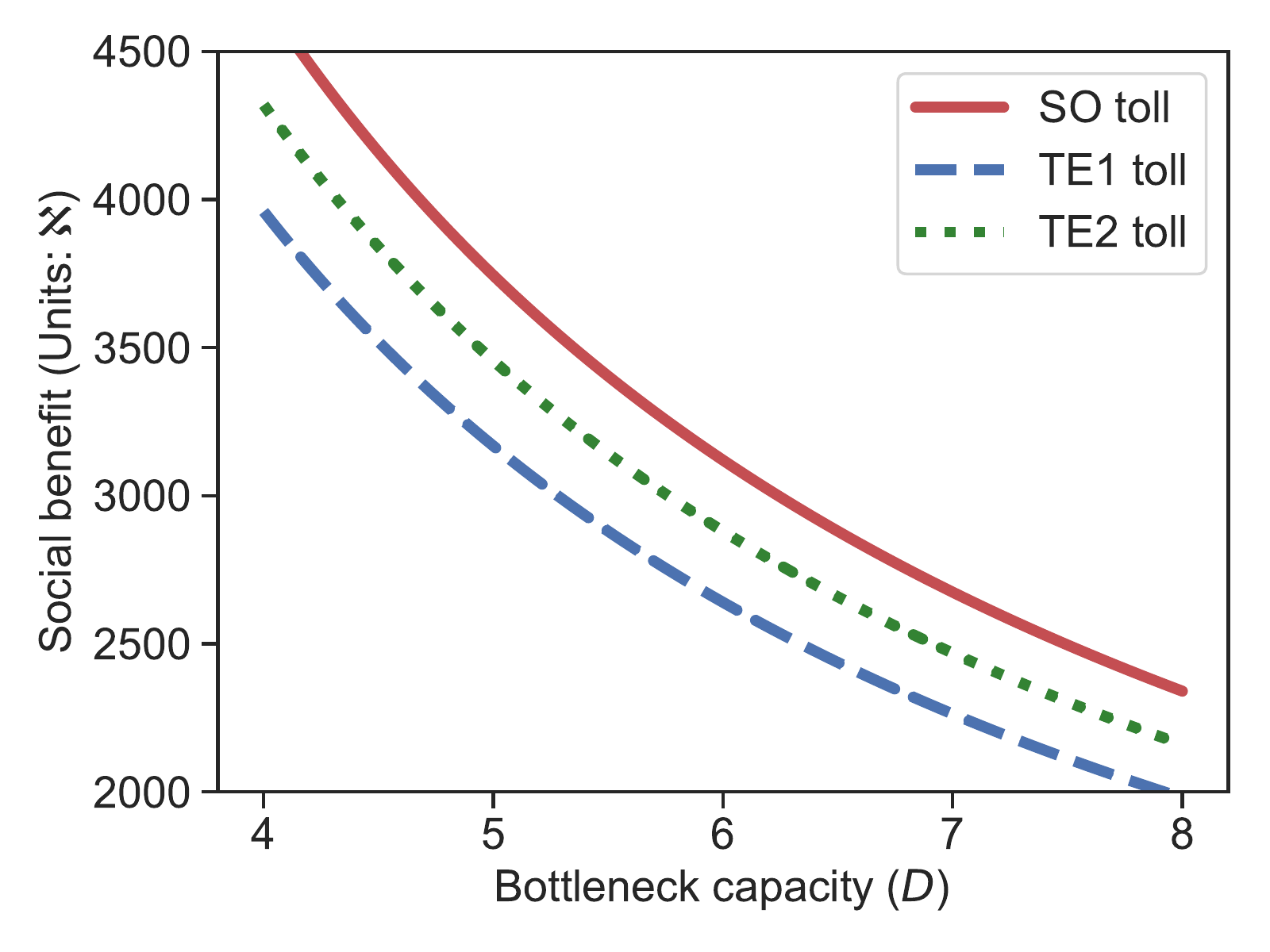}}\hfill
	\caption{Sensitivity of equity gap and social benefit relative to bottleneck capacity ($D$)}
	\label{fig:sensD}
\end{figure}

\begin{figure}[H]
	\centering
	\subfloat[\label{fig:equity_gap_eta}]{\includegraphics[width=0.5\textwidth]{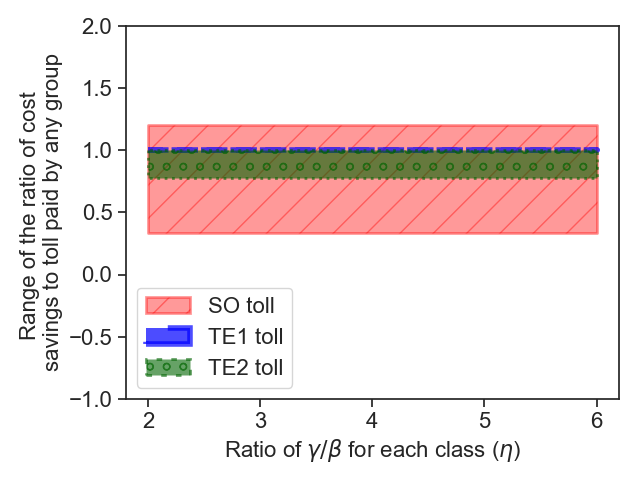}}\hfill
\subfloat[\label{fig:SC_saving_eta}]{\includegraphics[width=0.5\textwidth]{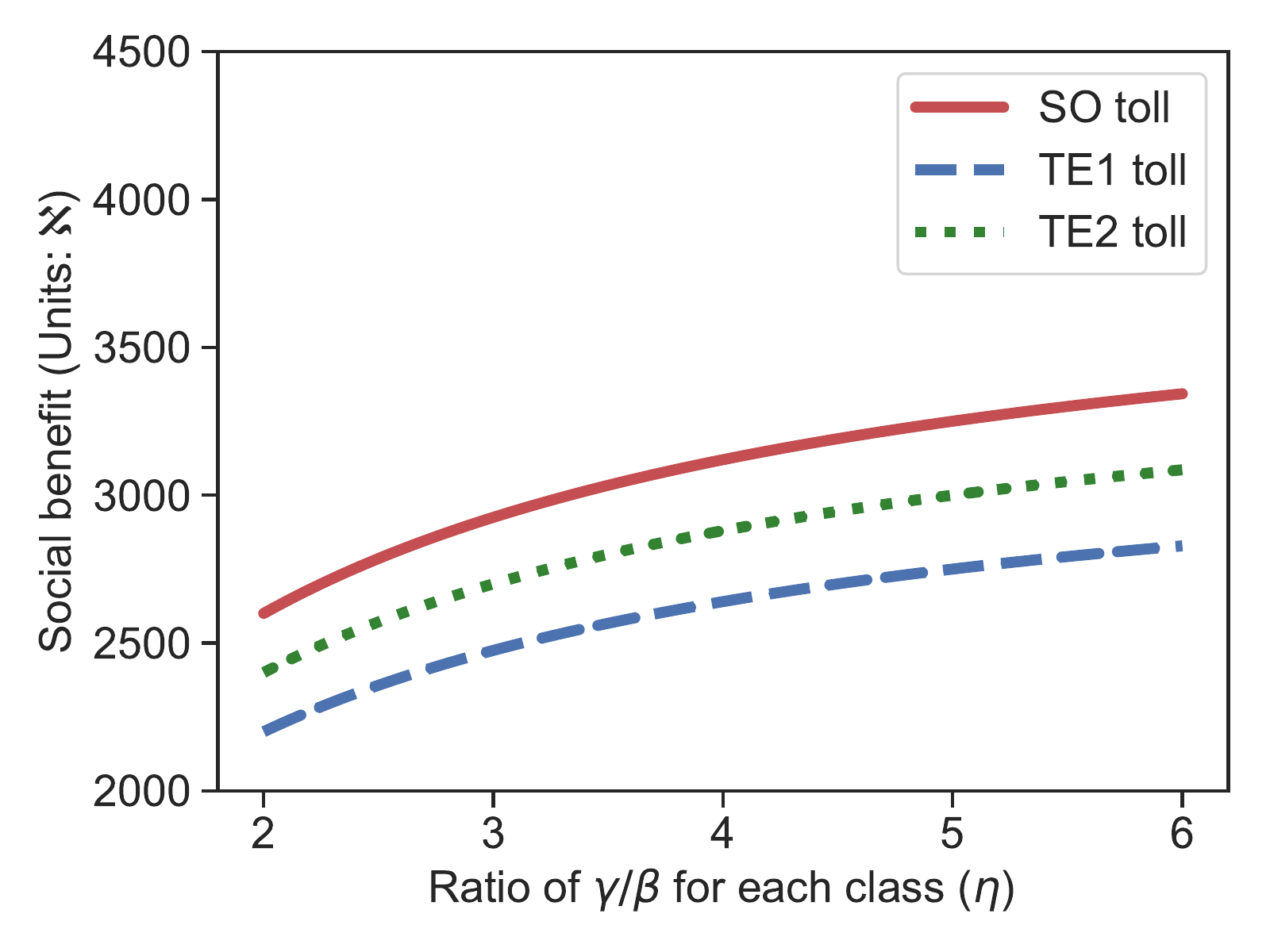}}\hfill
	\caption{Sensitivity of equity gap and social benefit relative to $\eta$}
	\label{fig:sensEta}
\end{figure}

\begin{figure}[H]
	\centering
	\subfloat[\label{fig:equity_gap_f2}]{\includegraphics[width=0.5\textwidth]{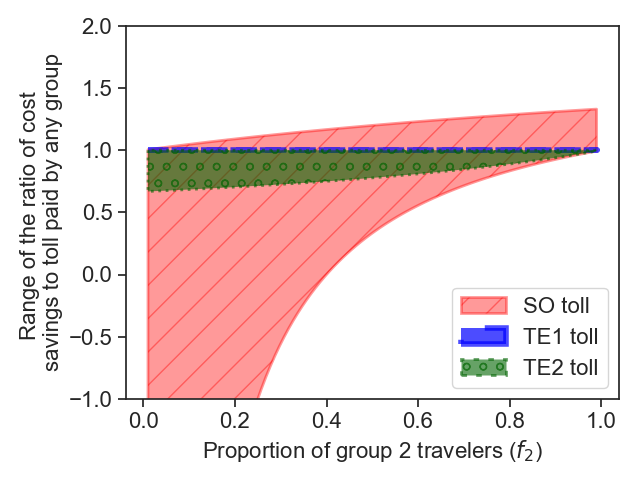}}\hfill
	\subfloat[\label{fig:SC_saving_f2}]{\includegraphics[width=0.5\textwidth]{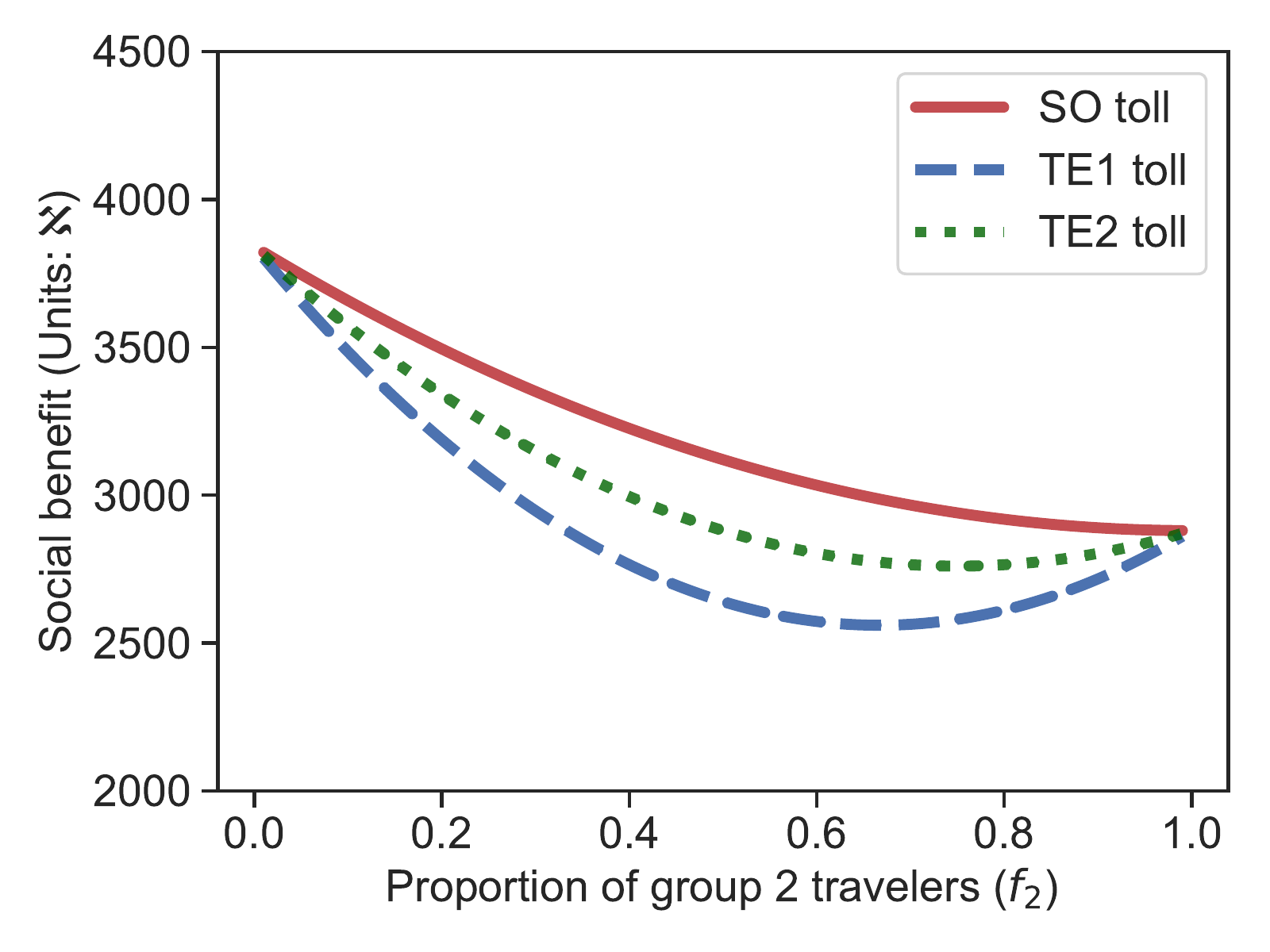}}\hfill
	\caption{Sensitivity of equity gap and social benefit relative to proportion of group 2 travelers $f_2$}
	\label{fig:sensf}
\end{figure}

\begin{figure}[H]
	\centering
	\subfloat[\label{fig:equity_gap_beta}] {\includegraphics[width=0.5\textwidth]{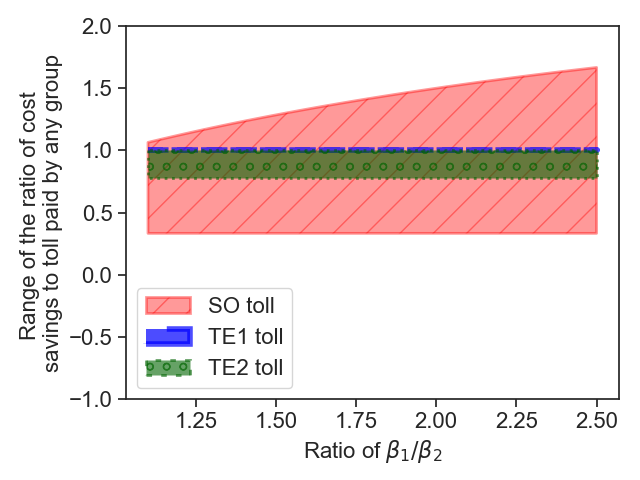}}\hfill
\subfloat[\label{fig:SC_saving_beta}] {\includegraphics[width=0.5\textwidth]{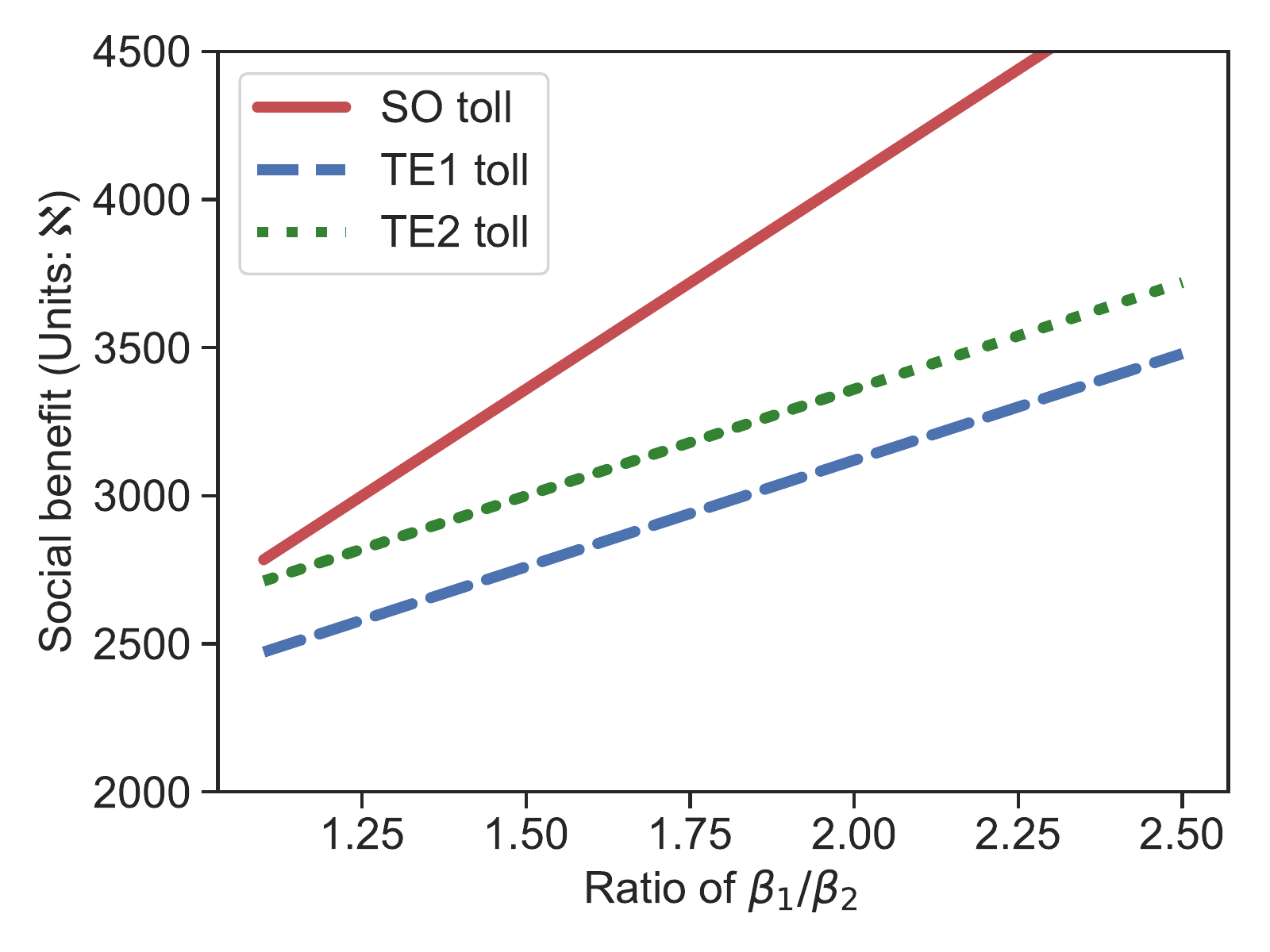}}\hfill
	\caption{Sensitivity of equity gap and social benefit relative to the ratio of  $\beta_1/\beta_2$}
	\label{fig:sensb1Byb2}
\end{figure}

\begin{figure}[H]
	\centering
\subfloat[\label{fig:equity_gap_alpha}]{\includegraphics[width=0.5\textwidth]{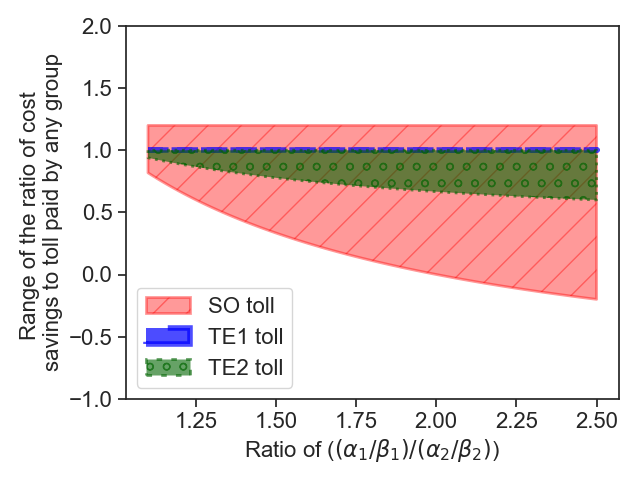}}\hfill
\subfloat[\label{fig:SC_saving_alpha}]{\includegraphics[width=0.5\textwidth]{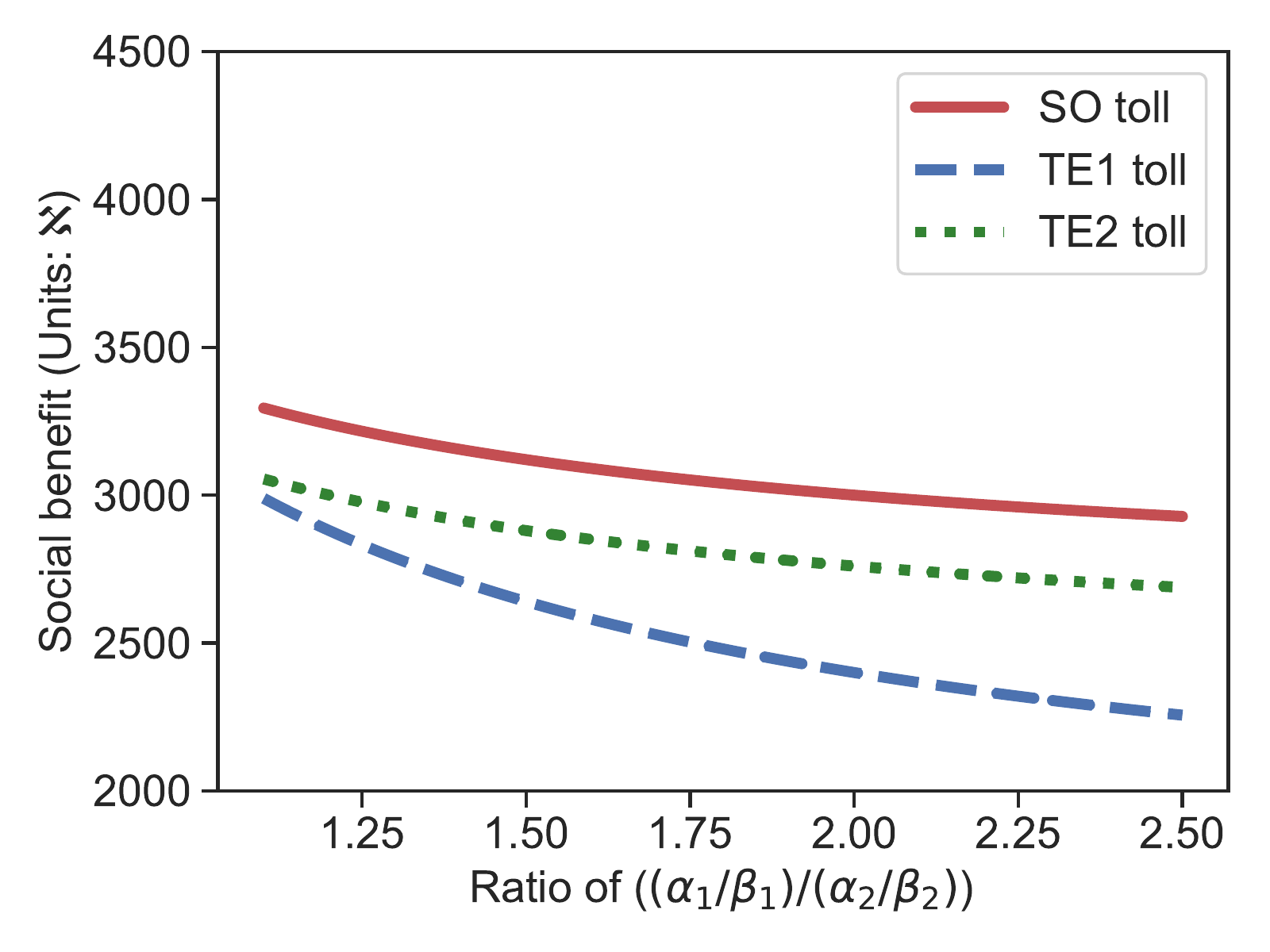}}\hfill
	\caption{Sensitivity of equity gap and social benefit relative to the ratio of  $(\alpha_1/\beta_1)/(\alpha_2/\beta_2)$}
	\label{fig:sensa1b1Bya2b2}
\end{figure}

We make \vpNew{the} following observations:
\begin{enumerate}
	\item Similar to the base case scenario, the social benefit and the equity gap for the SO tolls are always higher than the $\text{TE1}$ and $\text{TE2}$ tolls.
	\item Figure~\ref{fig:SC_saving_D} shows that as the bottleneck capacity increases, social benefit decreases. This is because the congestion under the no-toll case is lower when more capacity is available to meet the same demand, resulting in lower savings after charging a toll. Similarly, in Figure~\ref{fig:SC_saving_eta},  social benefit increases with an increase in the value of $\eta$, since cost savings are a function of the term $\eta/(1+\eta)$ which increases as $\eta$ increases. Furthermore, the equity gap for the SO, $\text{TE1}$ and $\text{TE2}$ tolls remain constant for all values of $D$ and $\eta$ (keeping other variables constant). This can be seen after substituting the cost expressions in Equation~\eqref{eq:yDef} where the $\eta$ and $D$ terms in the numerator and the denominator cancel out.
	\item  Figure~\ref{fig:SC_saving_f2} shows the variation in social benefit for different values of $f_2$. At the extreme values ($f_2=0$ and $f_2=1$), the population comprises of only one group. This results in identical social cost savings under different toll scenarios, which is because for each toll scenario, the toll increases (or decreases) at the rate of $\beta$ (or $\gamma$) in the time period when the group travels. For the intermediate values of $f_2$, the social benefit achieves a minima for the $\text{TE1}$ and $\text{TE2}$ tolls where the point of minima depends on the ratios of $(\alpha_1/\beta_1)/(\alpha_2/\beta_2)$ and $\beta_1/\beta_2$ for each group. Since we are interested in maximizing the social benefit, the largest benefit is obtained only if all travelers in the population are uniform and have high $\alpha$ and $\beta$  values.
	\item Figure~\ref{fig:equity_gap_f2} shows that as the value of $f_2$ decreases the equity gap between the two groups increases for both $\text{TE2}$ and SO tolls. The value of the ratio evaluated from Equation~\eqref{eq:yDef} may be negative if the value of $\text{Savings}_k^x$ for a group is negative (that is, the group has higher schedule delay under the presence of a toll than the combined travel time delay  and schedule delay under the no-toll case.) The equity gap is undefined when $f_2=0$ or when $f_2=1$ as there is only one group under those settings. For the case when $f_2$ is very close to zero, $y_2^{\text{SO}}\rightarrow -\infty$ and the equity gap for the SO toll case tends to $\infty$. This is as expected: under SO tolls, group 2 is pushed to travel at the boundary of the peak period; if there are relatively few travelers from group 2, they face much higher schedule delay, while continuing to pay tolls, and the rate of increase of tolls is higher than the rate of decrease of savings.  This finding shows us that if the population for travelers who are absolutely more time-flexible travelers but are relatively less time-flexible is really small, they suffer significantly in order to provide positive savings for the other travel group. A large equity gap is attributed to the reversal of the order of travel which contributes to the time-inequity of SO tolls. The equity gap generated by the $\text{TE2}$ tolls is also higher for lower values of $f_2$, but not as high as the SO tolls.
	\item Increasing the ratio of $\beta_1/\beta_2$ increases the equity gap for the SO toll scenario, but the equity gap under TE2 and TE1 tolls remain the same. Additionally, the social benefit increases with increasing value of $\beta_1/\beta_2$ for all tolls, but the rate of increase is higher for the SO tolls. This pattern exists because the savings of each group under $\text{TE1}$ and $\text{TE2}$ tolls depend on their relative willingness to pay, not their absolute values. On the other hand, the SO toll depends on the absolute values of $\beta_1$ and $\beta_2$ and thus increasing the ratio increases the social benefit (at a higher rate than $\text{TE1}$ and $\text{TE2}$ tolls) as well as the equity gap. 
	
	\item Last, we observe that increasing the value of the $(\alpha_1/\beta_1)/(\alpha_2/\beta_2)$ increases the equity gap but decreases the social benefit. This is as expected: as the relative time-flexibility of group $2$ decreases relative to group $1$, group $2$ travelers are more worse off for both SO and $\text{TE2}$ tolls. %

\end{enumerate}

The sensitivity analysis helps us identify that the equity gap for the SO toll is highly sensitive to the relative population of two groups and the values of parameters determining absolute and relative time-flexibility.

\section{Conclusion}
\label{sec:conclusion}
In this article we discussed equity issues for dynamic tolls. Building on the idea that ``all social goods are to be distributed equally, unless an unequal distribution would be to everyone's advantage," we made a case for the inclusion of time poverty  through the use of vertically-equitable tolls for dynamic pricing. We showed that the commonly-used SO tolls present two distinct equity concerns. First, they achieve their efficient outcome by shifting a greater percentage of system costs to low income travelers. Second, SO tolls are likely to increase time poverty among low income travelers since these tolls order traveler departures on the absolute value of traveler time rather than the relative value of schedule delay to travel time delay costs that orders no-toll equilibria.  The result is that structuring traveler departures so that low income travelers depart at the margins of the peak hour means that already overburdened families face significant burdens of time poverty.

As an alternative to the inefficiencies of the SO toll, we proposed time-equitable tolls that preserve the order of departure  under no toll, create zero queue, and generate same revenue as the SO tolls. Using numerical experiments for two groups of travelers, we showed that while SO tolls generate the highest possible social benefit across all tolling scenarios, they lead to a wide equity gap between different groups of travelers. In contrast, time-equitable tolls sacrifice some of the social benefit in exchange for lowering the equity gap in the population. These insights into the trade offs between system-optimal efficiency and equity will help prioritize tolling objectives for future dynamic tolling schemes and inform empirical research into how those objectives may be fulfilled.

\subsection{Implementation concerns about time-equitable toll}
\label{subsec:TE2tollOtherConcerns}
Implementing vertically-equitable tolls that vary across different groups in the real world brings challenges that are worth discussing. There are \vpNew{four main concerns about implementing these tolls}:
\begin{enumerate}[(a), topsep=0pt]
	\item The proposed tolls are time-dependent and vary continuously with time. While a continuous toll variation is possible (\vpNew{we can record the exact time when the vehicle entered the bottleneck and charge an appropriate toll}), it is desirable that tolls are incremented in discrete time intervals for the ease of public understanding. A few alternatives have been presented to fix this problem. These include replacing dynamic tolls with a flat peak-period toll~\citep[also called coarse tolls, see][]{arnott1990economics,xiao2011morning}, or charging step tolls that increment toll values in discrete steps of time~\citep{lindsey2010step}. Both coarse tolls and step tolls are easy to implement as they do not vary continuously with time; however, it is unclear whether they can be customized for achieving time equity. A detailed mathematical analysis of how these tolls will impact the equilibrium departure patterns and the equity gap is left as part of the future work.
	\item \vpNew{The tolls are not bounded above.} In our analysis, the highest toll values  depend on the values of parameters such as $\alpha, \beta$, and $\gamma$ values of a group and the number of travelers and the capacity of the bottleneck. Because the highest toll is inversely proportion to the bottleneck capacity, an agency can lower the highest toll charge by providing more capacity. Alternatively, our analysis assumes inelastic demand regardless of how high the toll is. If tolls are high, that will deter the travelers from choosing the tolled facility, thereby lowering the demand and thus the toll. We leave a detailed analysis of the highest possible toll under elastic demand for future work.
	\item \vpNew{The values of $\alpha$, $\beta$, and $\gamma$ must be known.}  The analysis relies on good-enough estimates of these parameters for determining the tolls and for analyzing the social benefit and equity gaps. In our opinion, our choice of discrete number of groups allows more flexibility. Using a survey conducted on a sample of travelers who use a freeway during peak periods, we can estimate the values of $\alpha$, $\beta$, and $\gamma$ and combine them into as many number of groups as the analysis desires. The consequence of loosely estimated values of $\alpha$, $\beta$ or $\gamma$ is that the true experience of benefits and equity gap might be skewed. In such cases, we recommend the use of post-toll-implementation surveys that can help us measure the true benefits received by a traveler of each group. %
	\item \vpNew{The time-equitable toll requires different tolls for distinct groups.} There are a few ways to address this concern. One alternative is to code driver characteristics as part of the vehicle (like a toll tag that is encoded with demographic and time-preference parameters). This allows a vehicle to be categorized into one group or the other, which can then be used to charge different tolls. However, such use of vehicle tags can incentivize fraudulent activities such as relatively time-rich travelers selling their tags to relatively time-poor travelers. The other alternative is to propose a rebate scheme where everyone pays a uniform toll, but travelers are then rebated similar to income-tax return based on their usage of the facility. This is similar to the idea of tradable bottleneck permit in \cite{akamatsu2017tradable}; however we leave a detailed analysis as part of the future work. %
\end{enumerate}

We hope that future researchers will investigate ways in which these implementation challenges can be addressed.

\subsection{Future work}
There are several other directions for future work.  First, an analysis using \vpNew{more realistic congestion} is warranted. Introducing networks with alternate paths, networks with multiple destinations, bottlenecks with stochastic capacity, travelers who desire to arrive throughout the peak hour, and travelers who can choose not travel to \vpNew{the bottleneck} will provide \vpNew{further insights on} tolls and equity. %
Second, future research should investigate ways for empirical estimations of time poverty parameters and \vpNew{place them in a context that policy makers will understand and appreciate.} %
 Last, the future research should address real world implementations of dynamic tolls with the equity analysis proposed in this thesis. The real world is messy. Real cities have hundreds of thousands of travelers who all have hundreds of variables, most of which are unobservable, affecting their decision to travel and this greatly complicates the problem of determining how tolls should be priced in order to equitably minimize congestion. An important piece of this puzzle, however, is to dive into the details of actual traveler decisions. This  remains an important next step for future research.

\section*{Acknowledgment}
Partial support for this research was provided by the Data-Supported Transportation Operations and Planning University Transportation Center, and the National Science Foundation Grants No. $1254921$, $1562291$, and $1826230$. The authors are grateful for this support.

\bibliography{VP_allReferences}

\appendix
\section{Cost expressions for the three tolling scenarios}
\label{appendix:costs}
We write the total cost expressions for different toll scenarios here, assuming two travel groups. In addition to the notation defined earlier, define $f_2 = N_2/N$ as the proportion of travelers belonging to group 2.

\subsection{No-toll equilibrium}
\textbf{Schedule delay costs}:
\begin{align}
	SDC_1 &= \frac{\beta_1 \eta}{2(1+\eta)}  \frac{N^2}{D} (1- f_2^2) \label{eq:SDC_noTollBegin}\\
	SDC_2 &= \frac{\beta_2 \eta}{2(1+\eta)}  \frac{N_2^2}{D} = \frac{\beta_1 \eta}{2(1+\eta)}  \frac{N^2}{D} \frac{\beta_2}{\beta_1} f_2^2 \\
	SDC &= \frac{\beta_1 \eta}{2(1+\eta)}  \frac{N^2}{D} \left[ 1+ \left( \frac{\beta_2}{\beta_1} -1 \right) f_2^2 \right] \label{eq:SDC_noTollEnd}
\end{align}

\textbf{Travel time costs}
\begin{align}
	TTC_1 &= \frac{\beta_1 \eta}{2(1+\eta)}  \frac{N^2}{D} (1- f_2)^2 \label{eq:TTC_noTollBegin} \\
	TTC_2 &= \frac{\beta_1 \eta}{2(1+\eta)}  \frac{N^2}{D}  f_2 \left[ 2\frac{\alpha_2}{\alpha_1} + \left( \frac{\beta_2}{\beta_1} - 2 \frac{\alpha_2}{\alpha_1} \right) f_2 \right] \\
	TTC &= \frac{\beta_1 \eta}{2(1+\eta)}  \frac{N^2}{D}  \left[ 1- 2 \left( 1- \frac{\alpha_2}{\alpha_1} \right)f_2 + \left( 1+ \frac{\beta_2}{\beta_1} -2 \frac{\alpha_2}{\alpha_1}\right) f_2^2 \right] \label{eq:TTC_noTollEnd}
\end{align}

\textbf{Toll costs}
Under no-toll, toll revenue is zero. That is, $TRC_1=TRC_2=TRC=0$.

\textbf{Total costs}
\begin{align}
	TC_1 &= \frac{\beta_1 \eta}{(1+\eta)}  \frac{N^2}{D} (1- f_2) \label{eq:TC_noTollBegin}\\
	TC_2 &= \frac{\beta_1 \eta}{(1+\eta)}  \frac{N^2}{D}  f_2 \left[ \frac{\alpha_2}{\alpha_1} +  \left( \frac{\beta_2}{\beta_1} - \frac{\alpha_2}{\alpha_1}  \right) f_2 \right] \\
	TC &= \frac{\beta_1 \eta}{(1+\eta)}  \frac{N^2}{D}  \left[ 1 + \left( \frac{\alpha_2}{\alpha_1} - 1 \right)f_2 + \left( \frac{\beta_2}{\beta_1} - \frac{\alpha_2}{\alpha_1} \right) f_2^2 \right] \label{eq:TC_noTollEnd}
\end{align}

\subsection{Tolled equilibrium: order preserved $\beta_2 > \beta_1$}
In this case, the order of travel is preserved and the SO, TE1, and TE2 tolls are identical. %

\textbf{Schedule delay costs}
The schedule delay costs are same as the no-toll case (Equation~\eqref{eq:SDC_noTollBegin}--\eqref{eq:SDC_noTollEnd}) because the order is preserved and the same proportion of each group travels early.

\textbf{Travel time costs}
Since there is no queue, travel time costs are zero under the modeling choice that sets free-flow travel time $(c_\text{FF}^k)(t)$ is set as zero for all groups $k\in\{1,2\}$. That is, $TTC_1=TTC_2=TTC=0$

\textbf{Toll costs}
\begin{align}
	TRC_1 &= \frac{\beta_1 \eta}{2(1+\eta)}  \frac{N^2}{D} (1- f_2)^2 \label{eq:TRC_TollBeginOrderSame} \\
	TRC_2 &= \frac{\beta_1 \eta}{2(1+\eta)}  \frac{N^2}{D}  f_2 \left[ 2(1-f_2) +  \frac{\beta_2}{\beta_1}  f_2 \right] \\
	TRC &= \frac{\beta_1 \eta}{2(1+\eta)}  \frac{N^2}{D}  \left[ 1+ f_2^2 \left( \frac{\beta_2}{\beta_1}-1 \right)  \right] \label{eq:TRC_TollEndOrderSame}
\end{align}

\textbf{Total costs}
\begin{align}
	TC_1 &= \frac{\beta_1 \eta}{(1+\eta)}  \frac{N^2}{D} (1- f_2) \label{eq:TC_TollOrderSameBegin}\\
	TC_2 &= \frac{\beta_1 \eta}{(1+\eta)}  \frac{N^2}{D}  f_2 \left[ 1+  \left( \frac{\beta_2}{\beta_1} - 1  \right) f_2 \right] \\
	TC &= \frac{\beta_1 \eta}{(1+\eta)}  \frac{N^2}{D}  \left[ 1 + \left( \frac{\beta_2}{\beta_1} - 1 \right) f_2^2 \right] \label{eq:TC_TollOrderSameEnd}
\end{align}
As observed, total cost for group 1 stays the same as the no-toll case. The total cost for group 2 is higher under no-toll case only if $\alpha_2 > \alpha_1$.

\subsection{Tolled Equilibrium: $\beta_1 > \beta_2$}

\subsubsection{SO toll}
For the SO toll, the order is reversed.

\textbf{Schedule delay costs} 
\begin{align}
	SDC_1 &= \frac{\beta_1 \eta}{2(1+\eta)}  \frac{N_1^2}{D} = \frac{\beta_2 \eta}{2(1+\eta)}  \frac{N^2}{D} \frac{\beta_1}{\beta_2} (1-f_2)^2 \label{eq:SDC_TollReverseBegin}\\
	SDC_2 &= \frac{\beta_2 \eta}{2(1+\eta)}  \frac{N^2}{D} (1- (1-f_2)^2) \\
	SDC &= \frac{\beta_2 \eta}{2(1+\eta)}  \frac{N^2}{D} \left[ 1+ \left( \frac{\beta_1}{\beta_2} -1 \right) (1-f_2)^2 \right] \label{eq:SDC_TollReverseEnd}
\end{align}

\textbf{Travel time costs} are zero due to zero queue.

\textbf{Toll costs}
\begin{align}
	TRC_1 &= \frac{\beta_2 \eta}{2(1+\eta)}  \frac{N^2}{D} (1-f_2) \left[  2 f_2+ \frac{\beta_1}{\beta_2} (1-f_2) \right] \label{eq:TRC_TollReverseBegin}\\
	TRC_2 &= \frac{\beta_2 \eta}{2(1+\eta)}  \frac{N^2}{D} f_2^2 \\
	TRC &= \frac{\beta_2 \eta}{2(1+\eta)}  \frac{N^2}{D} \left[1 + \left( \frac{\beta_1}{\beta_2} -1 \right) (1-f_2)^2 \right] \label{eq:TRC_TollReverseEnd}
\end{align}

\textbf{Total costs}
\begin{align}
	TC_1 &= \frac{\beta_2 \eta}{(1+\eta)}  \frac{N^2}{D}  (1-f_2) \left[ 1+  \left( \frac{\beta_1}{\beta_2} - 1  \right) (1-f_2) \right] \label{eq:TC_TollOrderDiffBegin}\\
	TC_2 &= \frac{\beta_2 \eta}{(1+\eta)}  \frac{N^2}{D} f_2 \\
	TC &= \frac{\beta_2 \eta}{(1+\eta)}  \frac{N^2}{D}  \left[ 1 + \left( \frac{\beta_1}{\beta_2} - 1 \right) (1-f_2)^2 \right] \label{eq:TC_TollOrderDiffEnd}
\end{align}
As we can compare the total costs, we can see that group 1 is better off under SO toll relative to no toll, while group 2 is worse off.

\subsubsection{$\text{TE1}$ toll}

\textbf{Schedule delay costs} are same as the no-tolled scenario as the order is preserved (Equations~\eqref{eq:SDC_noTollBegin}--\eqref{eq:SDC_noTollEnd}).

\textbf{Travel time costs} are zero due to zero queue.

\textbf{Toll costs} have the same value as $TTC$ of no-toll scenario.
\begin{align}
	TRC_1 &= \frac{\beta_1 \eta}{2(1+\eta)}  \frac{N^2}{D} (1- f_2)^2 \label{eq:TRC_noTollBegin} \\
	TRC_2 &= \frac{\beta_1 \eta}{2(1+\eta)}  \frac{N^2}{D}  f_2 \left[ 2\frac{\alpha_2}{\alpha_1} + \left( \frac{\beta_2}{\beta_1} - 2 \frac{\alpha_2}{\alpha_1} \right) f_2 \right] \\
	TTC &= \frac{\beta_1 \eta}{2(1+\eta)}  \frac{N^2}{D}  \left[ 1- 2 \left( 1- \frac{\alpha_2}{\alpha_1} \right)f_2 + \left( 1+ \frac{\beta_2}{\beta_1} -2 \frac{\alpha_2}{\alpha_1}\right) f_2^2 \right] \label{eq:TRC_noTollEnd}
\end{align}

\textbf{Total costs} are also identical to the no-toll scenario (Equations~\eqref{eq:TC_noTollBegin}--\eqref{eq:TC_noTollEnd}).

\subsubsection{$\text{TE2}$ toll}
\textbf{Schedule delay costs} are same as the no-tolled scenario as the order is preserved (Equations~\eqref{eq:SDC_noTollBegin}--\eqref{eq:SDC_noTollEnd}).

\textbf{Travel time costs} are zero due to zero queue.

\textbf{Toll costs} are in total same as the SO toll since $\text{TE2}$ toll is revenue-neutral but the individual group tolls differ.

\begin{align}
	TRC_1 &= \frac{\beta_1 \eta}{2(1+\eta)}  \frac{N^2}{D} (1-f_2)^2 \label{eq:TRC_TollReverseBeginTE}\\
	TRC_2 &= \frac{\beta_2 \eta}{2(1+\eta)}  \frac{N^2}{D} (2f_2- f_2^2) = \frac{\beta_1 \eta}{2(1+\eta)}  \frac{N^2}{D} \frac{\beta_2}{\beta_1}(2f_2- f_2^2)\\
	TRC &= \frac{\beta_1 \eta}{2(1+\eta)}  \frac{N^2}{D} \left[1+ \left( f_2^2 - 2 f_2\right) \left( 1- \frac{\beta_2}{\beta_1} \right) \right] \label{eq:TRC_TollReverseEndTE}
\end{align}

\textbf{Total costs} compute to the following upon adding the schedule delay costs and toll costs.

\begin{align}
	TC_1 &= \frac{\beta_1 \eta}{(1+\eta)}  \frac{N^2}{D} (1- f_2) \\
	TC_2 &= \frac{\beta_2 \eta}{(1+\eta)}  \frac{N^2}{D} f_2 \\
	TC &= \frac{\beta_1 \eta}{1+\eta}  \frac{N^2}{D} \left[ 1- \left( 1- \frac{\beta_2}{\beta_1} \right) f_2 \right]
\end{align}

Total costs are overall higher for group 2, but remain unchanged for group 1.

\section{Proof of Theorem~\ref{thm:propTravTEtoll}}
\label{appen:proofThm1}
Here we provide the proof that the proportion of travelers departing early is same for both groups. The proof follows the same structure as Appendix B of~\cite{arnott1987schedule}.

\begin{lemma}
The $\zeta$ values at the points of intersection of the equilibrium isocost$-\zeta$ curves are identical. That is, $\zeta(t_{A}^\text{TE})=\zeta(t_{B}^\text{TE})$
\label{lem:equalZeta}
\end{lemma}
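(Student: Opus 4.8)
The plan is to evaluate the two equilibrium isocost-$\zeta$ curves at both intersection points and eliminate the unknown equilibrium costs $C_1^{\text{eq}}$ and $C_2^{\text{eq}}$. First I would record that, by construction of the upper envelope in Figure~\ref{fig:isocostZeta}, the middle group (group 2) occupies the interval $(t_{A}^\text{TE}, t_{B}^\text{TE})$, whose common peak sits at $\tau^*$; hence $t_{A}^\text{TE} < \tau^* < t_{B}^\text{TE}$, so at $t_{A}^\text{TE}$ I use the early-arrival branch of Equation~\eqref{eq:zetaIsocost} and at $t_{B}^\text{TE}$ the late-arrival branch. Writing $\zeta_A := \zeta(t_{A}^\text{TE})$, $\zeta_B := \zeta(t_{B}^\text{TE})$, $E := \tau^* - t_{A}^\text{TE}$, and $L := t_{B}^\text{TE} - \tau^*$, the fact that both points are intersections (Lemma~\ref{lem:zetaintersect}) gives, for each group $k \in \{1,2\}$, the pair $\alpha_k \zeta_A = C_k^{\text{eq}} - \beta_k E$ and $\alpha_k \zeta_B = C_k^{\text{eq}} - \gamma_k L$.

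Subtracting the two relations within each group eliminates $C_k^{\text{eq}}$ and yields $\alpha_k(\zeta_A - \zeta_B) = \gamma_k L - \beta_k E$. Invoking assumption A$\#8$, namely $\gamma_k = \eta\beta_k$, I can factor the right-hand side as $\beta_k(\eta L - E)$, so that dividing by $\beta_k$ gives $(\alpha_k/\beta_k)(\zeta_A - \zeta_B) = \eta L - E$ for each group. Crucially, the right-hand side $\eta L - E$ is identical for both groups, so equating the two expressions produces $(\zeta_A - \zeta_B)\left(\alpha_1/\beta_1 - \alpha_2/\beta_2\right) = 0$.

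To finish, I would invoke the strict group ordering of Case 2 (Equation~\eqref{eq:orderingOfGroups} with $\beta_1/\alpha_1 < \beta_2/\alpha_2$), which gives $\alpha_1/\beta_1 \neq \alpha_2/\beta_2$ and forces $\zeta_A - \zeta_B = 0$, i.e. $\zeta(t_{A}^\text{TE}) = \zeta(t_{B}^\text{TE})$. I expect the only genuine subtlety to be the first step, justifying that the two intersection points lie on opposite sides of $\tau^*$ so that the correct branches of Equation~\eqref{eq:zetaIsocost} are selected; once that is pinned down, the elimination and the cancellation enabled by the constant penalty ratio $\eta$ make the remainder routine. As a byproduct, the common right-hand side then collapses to $\eta L = E$, which is precisely the $\eta/(1+\eta)$ early-departure split that Theorem~\ref{thm:propTravTEtoll} extracts next.
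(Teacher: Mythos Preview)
Your proposal is correct and follows essentially the same elimination argument as the paper: both write the equal-cost condition for each group at $t_{A}^\text{TE}$ and $t_{B}^\text{TE}$, divide through by $\beta_k$ (using the constant ratio $\gamma_k/\beta_k=\eta$), and subtract across groups to obtain $\left(\alpha_1/\beta_1-\alpha_2/\beta_2\right)\bigl(\zeta(t_{A}^\text{TE})-\zeta(t_{B}^\text{TE})\bigr)=0$. Your version is a bit more explicit about the role of assumption A\#8 and nicely isolates the byproduct $\eta L=E$ that feeds directly into Theorem~\ref{thm:propTravTEtoll}, but the underlying route is the same.
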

\begin{proof}
For the equilibrium isocost$-\zeta$ curves, the costs incurred are identical for all travelers within each group. That is, $C^{\text{eq}}_1(t_{A}^\text{TE})=C^{\text{eq}}_1(t_{B}^\text{TE})$ and $C^{\text{eq}}_2(t_{A}^\text{TE})=C^{\text{eq}}_2(t_{B}^\text{TE})$. Writing out the expressions:

\begin{align}
	C^{\text{eq}}_1(t_{A}^\text{TE}) & =C^{\text{eq}}_1(t_{B}^\text{TE}) \\
	\Longrightarrow \alpha_1 \zeta(t_{A}^\text{TE}) + \beta_1(\tau^*-t_{A}^\text{TE}) &= \alpha_1 \zeta(t_{B}^\text{TE}) + \gamma_1(t_{B}^\text{TE}-\tau^*) \label{eq:appenBcost1}
\end{align}

Dividing Equation~\ref{eq:appenBcost1} by $\beta_1$ and substracting the result from the corresponding equation for group $2$, we obtain:

\begin{equation}
\left(\frac{\alpha_1}{\beta_1} - \frac{\alpha_2}{\beta_2} \right) (\zeta(t_{A}^\text{TE})-\zeta(t_{B}^\text{TE})) = 0
\end{equation}

Given $\alpha_2/\beta_2 < \alpha_1/\beta_1$, we have $\zeta(t_{A}^\text{TE})=\zeta(t_{B}^\text{TE})$. Hence proved.
\end{proof}

Using this Lemma, we can prove Theorem~\ref{thm:propTravTEtoll}.
\begin{proof}
We know that group $1$ arrives in time $(t_0^{\text{TE}},t_{A}^\text{TE}) \cup (t_{B}^\text{TE},t_f^{\text{TE}})$, while group $2$ arrives in time $(t_{A}^\text{TE},t_{B}^\text{TE})$. 

Let us consider group $1$. The increase in the value of $\zeta$ (denoted by $\Delta \zeta$) when group $1$ arrives early is given by:
\begin{equation}
\Delta \zeta = \beta_1 (t_{A}^\text{TE} - t_0^{\text{TE}})
\end{equation}

By Lemma~\ref{lem:equalZeta}, an increase in the value of $\zeta$ is the same while a group $k$ is departing early as the decrease in the value of $\zeta$ when the group $k$ is departing late ($k\in \{1,2\}$). Hence, considering the case when group $1$ arrives late:

\begin{align}
\Delta \zeta &= \gamma_1 (t_f^{\text{TE}}-t_{B}^\text{TE}) \\
\Longrightarrow \beta_1 (t_{A}^\text{TE} - t_0^{\text{TE}}) &= \gamma_1  (t_f^{\text{TE}}-t_{B}^\text{TE}) \\
\Longrightarrow t_{A}^\text{TE} - t_0^{\text{TE}} &= \eta \left( t_f^{\text{TE}}-t_{B}^\text{TE}\right) \label{eq:appenBTimeDiff}
\end{align}

Since the departure rate is equal to the bottleneck capacity, using Equation~\eqref{eq:appenBTimeDiff}, the number of group 1 travelers departing early is $\eta$ times the number of travelers departing late. Thus, $\eta/(1+\eta)$ proportion of group 1 travelers depart early.

We can establish the same result by equating the $\Delta \zeta$ for periods when group 2 travelers arrive early and arrive late. Hence proved.
\end{proof}

We can derive the values of transitions times as follows:

\begin{align}
	t_0^{\text{TE}} &= \tau(t_0^{\text{TE}}) = \tau^* - \frac{\eta}{1+\eta} \frac{N}{D} \\
	t_f^{\text{TE}} &= \tau(t_f^{\text{TE}}) = \tau^* + \frac{1}{1+\eta} \frac{N}{D} \\
	t_{A}^{\text{TE}} &=\tau(t_{A}^{\text{TE}}) = \tau^* - \frac{\eta}{\eta+1}\frac{N_2}{D}  \\
	t_{B}^{\text{TE}} &= \tau(t_{B}^{\text{TE}})= \tau^* + \frac{1}{\eta+1}\frac{N_2}{D}  \label{eq:timeLocationTEToll}
\end{align}

\end{document}